\newtheorem{theorem}{Theorem}
\newtheorem{proposition}{Proposition}
\newtheorem{lemma}{Lemma}
\newtheorem{conjecture}{Conjecture}
\theoremstyle{definition}
\newtheorem{definition}{Definition}
\newtheorem{remark}{Remark}
\newcommand{\sfa}{\mathsf{a}}
\newcommand{\mv}{\mathtt{v}}
\newcommand{\existsv}{\exists{\mv}}
\newcommand{\existsZ}{\exists{0}}
\newcommand{\knownf}[1]{\ensuremath{F\node{#1}}}
\newcommand{\minval}[1]{\ensuremath{\mathit{Min}\node{#1}}}
\newcommand{\knownvals}[1]{\ensuremath{\mathit{Vals}\node{#1}}}
\newcommand{\knownlows}[1]{\ensuremath{\mathit{Lows}\node{#1}}}
\newcommand{\eqdef}{\triangleq}
\newcommand{\OptMink}{\mbox{$\mbox{\sc Opt}_{\min[k]}$}}
\newcommand{\set}[1]{\{#1\}}
\newcommand{\Crash}{\mathsf{Crash}}
\def\emptyset{\mbox{\O}}
\newcommand{\defemph}[1]{\textbf{\textit{#1}}}
\newcommand{\sat}{\models}
\newcommand{\decide}{\mathsf{decide}}
\newcommand{\Proc}{\mathsf{Procs}}
\newcommand{\tee}{\,\defemph{t}}
\newcommand{\OptZ}{\mbox{$\mbox{{\sc Opt}}_0$}}
\newcommand{\UOptZ}{\mbox{$\mbox{{\sc u-Opt}}_0$}}
\newcommand{\UOptMink}{\mbox{$\mbox{\sc u-P}_{\min[k]}$}}
\newcommand{\Fmodel}{{\cal F}}
\newcommand{\CG}{{\cal G}}
\newcommand{\Vals}{{\tt V}}
\newcommand{\Vecs}{\vec{\Vals}}
\newcommand{\dom}{\,{\preceq}\,}
\newcommand{\pdo}{unbeatable}
\newcommand{\FP}{\mathsf{F}}
\newcommand{\node}[1]{\langle#1\rangle}
\newcommand{\Ga}{\CG_\alpha}
\newcommand{\cw}{\exists\mathsf{correct}(\mathtt{w})}
\newcommand{\cv}{\exists\mathsf{correct}(\mathtt{v})}
\DeclareMathOperator{\Bary}{\sf Bary}
\DeclareMathOperator{\Star}{\sf St}
\DeclareMathOperator{\Car}{\sf Car}
\DeclareMathOperator{\Div}{\sf Div}
\DeclareMathOperator{\bdry}{\sf Bd}
\newcommand{\ang}[1]{\langle{#1}\rangle}
\newcommand{\cK}{\ensuremath{\mathcal{K}}}
\newcommand{\cL}{\ensuremath{\mathcal{L}}}
\newcommand{\cP}{\ensuremath{\mathcal{P}}}
\newcommand{\kAgreement}{{\bf \defemph{k}-Agreement}}
\newcommand{\Agreement}{{\bf Agreement}}
\newcommand{\UnikAg}{{\bf Uniform \defemph{k}-Agreement}}
\newcommand{\Decision}{{\bf Decision}}
\newcommand{\Validity}{{\bf Validity}}
\newcommand{\valv}{\veee}
\newcommand{\veee}{\mathtt{v}}
\newcommand{\decideiZ}{\mathsf{decide(0)}}
\newcommand{\decideiO}{\mathsf{decide(1)}}
\newcommand{\fip}{{\it fip}}
\newcommand{\HC}[1]{\mathsf{HC}\node{#1}}
\begin{document}

\title{Unbeatable Set Consensus via \\ Topological and Combinatorial Reasoning\thanks{
Part of the results of this paper
were announced
in~\cite{AYY-PODC-BA}.
}}

\author{Armando Casta\~{n}eda\thanks{Universidad Nacional Aut\'onoma de M\'exico (UNAM), \mbox{\emph{E-mail}: \href{mailto:armando@im.unam.mx}{armando@im.unam.mx}}.}
\and Yannai A.~Gonczarowski\thanks{The Hebrew University of Jerusalem and Microsoft Research, \mbox{\emph{E-mail}: \href{mailto:yannai@gonch.name}{yannai@gonch.name}}.}
\and Yoram Moses\thanks{Technion --- Israel Institute of Technology, \mbox{\emph{E-mail}: \href{mailto:moses@ee.technion.ac.il}{moses@ee.technion.ac.il}}.}}

\date{May 23, 2016}

\maketitle
\begin{abstract}
The \defemph{set consensus} problem has played an important role in the study of distributed systems for over two decades. Indeed, the search for lower bounds and impossibility results for this problem  spawned the topological approach to distributed computing, which has given rise to new techniques in the design and analysis of protocols. The design of efficient solutions to set consensus has also proven to be challenging.
In the synchronous crash failure model, the literature contains a sequence of solutions to set consensus, each improving upon the previous ones.

This paper presents an \defemph{unbeatable} protocol for nonuniform $k$-set consensus in the synchronous crash failure model.
This is an efficient protocol whose decision times \emph{cannot} be improved upon.
Moreover, the description of our protocol is extremely succinct.
Proving unbeatability of this protocol is a nontrivial challenge.
We provide two proofs for its unbeatability: one is a subtle constructive combinatorial proof, and the other is a topological proof of a new style.
These two proofs provide new insight into the connection between  topological reasoning and combinatorial reasoning about protocols, which has long been a subject of interest. In particular, our topological proof  reasons in a novel way about subcomplexes of the protocol complex, and sheds light on an open question posed by Guerraoui and Pochon (2009).
Finally, using the machinery developed in the design of this unbeatable protocol, we propose a protocol for uniform $k$-set consensus that beats all known solutions by a large margin.

\end{abstract}

\noindent
\textbf{Keywords}:
$k$-set consensus; uniform $k$-set consensus; optimality; unbeatability; topology; knowledge.

\section{Introduction}

The \defemph{k-set consensus} problem~\cite{Chaudhuri90}, which allows processes to decide on up to~$k$ distinct values,
has played an important role in the study of distributed systems for more than two decades.
Its analysis provided deep insights into concurrency and solvability of tasks in fault-tolerant settings, both synchronous and asynchronous
(e.g. \cite{AGGT, BorowskyG93,CHLT, GGP,GHP, Herlihy:1999, SaksZ00}).
Whereas combinatorial techniques sufficed for the study of the traditional (1-set) consensus problem \cite{FLP},  establishing lower bounds and impossibility results for the more general
$k$-set consensus has proven to be more challenging. Tackling these has given rise to the introduction of topological techniques to the theory of distributed systems, and these have become a central tool (see~\cite{Herlihy2013} for a detailed treatment of the subject).

In the synchronous message passing model, the literature distinguishes between uniform and  nonuniform variants of the classic consensus problem.  In nonuniform consensus only the correct processes (the ones that do not crash) are required to decide on the same value, while values decided on by processes who crash are allowed to deviate from the
value decided by
correct processes. In uniform consensus, however, all decisions must be the same. In asynchronous models the two variants coincide, since there is never a guarantee that a silent process has crashed.
Perhaps due to the fact that $k$-set consensus was first studied in the asynchronous setting, only its uniform variant (in which the total number of distinct values decided by correct \emph{and faulty} processes is no greater than~$k$) has been considered in the literature.

This paper is concerned with $k$-set consensus in the synchronous message-passing model. In this setting, there are well-known bounds relating the degree of coordination that can be achieved (captured by the parameter~$k$), the number of process crashes that occur
in a given execution (typically denoted by~$f$), and the time required for decision, which is
$\lfloor \nicefrac{f}{k} \rfloor+2$ in the uniform case (see \cite{AGGT,GGP,GHP}).
Given an {\it a priori} bound of~$\tee$ on the total number of crashes in a run, the worst-case lower bound for decision is $\lfloor \nicefrac{\tee}{k} \rfloor+1$
\cite{CHLT}.

We consider both uniform and nonuniform variants of $k$-set consensus. For the nonuniform case, we present a protocol
that we call
$\OptMink$,
which is \defemph{unbeatable}, in the sense of \cite{AYY-DISC}. Unbeatability is a very strong form of optimality: A $k$-set consensus protocol~$P$ that always decides at least as soon as $\OptMink$ cannot have even one process ever
decide
strictly earlier than
when that
process
decides
in $\OptMink$.
For the uniform case, we present
a protocol that we call
$\UOptMink$,
which is built using methods similar to those used for
$\OptMink$,
and which
strictly
beats
all known protocols, often by a large margin. In many cases, $\UOptMink$ decides in 2 rounds against an adversary for which the best known early-deciding protocols would decide in $\lfloor \nicefrac{\tee}{k} \rfloor+1$ rounds. Whether $\UOptMink$ is unbeatable remains an open problem.

The first unbeatable protocols for consensus (in both uniform and nonuniform variants) were presented in \cite{AYY-DISC}.
The analysis of those unbeatable consensus protocols
shows that the notion of a \defemph{hidden path} is central to the inability to decide in consensus.
Roughly speaking,
a hidden path w.r.t.\  a process~$i$ at the end of round~$m$ is a sequence
of processes that crash one after the other and
could inform
some process~$j$
at time~$m$ of an
initial
value
unknown to
$i$.

Our improved $k$-set consensus protocols are based on an observation that the time required for decision does not depend simply on the number of processes that crash in each round. Only failures that occur in a very specific
pattern, maintaining what we call a \defemph{hidden capacity} of~$k$, can prevent processes from being able to decide.
The hidden capacity is a generalization of the notion of a hidden path, which, as mentioned above,
was shown in \cite{AYY-DISC} to play
an important role in consensus.

The main contributions of this paper are:

\begin{itemize}
\item We provide new solutions to $k$-set
consensus
in the synchronous message passing model with crash failures. For the nonuniform case, we present the first unbeatable protocol for $k$-set consensus, called $\OptMink$: No protocol can
beat the decision times of
$\OptMink$.
For uniform $k$-set consensus, we present a protocol $\UOptMink$ that strictly
beats
all known early-deciding solutions in the literature \cite{AGGT, CHLT, GGP,GHP, RRT}, in some cases beating (all of) them by a large margin.

\item We identify a quantity called the hidden capacity
of
a given execution
w.r.t.\ a process $i$ at time $m$,
which plays a major role in determining the decision times in runs of $k$-set consensus. Roughly speaking, once its hidden capacity drops below~$k$, process~$i$ can decide. Maintaining a hidden capacity of~$k$ requires at least~$k$ processes to crash in every round. However, they must crash in a very particular fashion. In previous solutions to $k$-set consensus in this model (see, e.g., \cite{GGP}),
a process that observes~$k$ or more new failures per round will not decide. In our protocols,
decision can be delayed only as long as failures maintain a hidden capacity of~$k$ or more.
\item Proving unbeatability has the flavor of a lower bound or impossibility proof.
Unsurprisingly, proving
the unbeatability of $\OptMink$
is extremely subtle. We present two different proofs of its unbeatability. One is a completely constructive, combinatorial proof, while the other is a nonconstructive, topological proof based on Sperner's lemma.
The latter is a new style of topological proof, since it addresses the local question of when an individual process can decide, rather than  the more global question of when the last decision is made.
This sheds light on the open problem and challenge posed by Guerraoui and Pochon in~\cite{GP09} regarding how topological reasoning can be used to obtain bounds on local decisions.
Topological lower bounds for $k$-set consensus (e.g. \cite{CHLT,GHP}) have established that $k-1$ connectivity (in addition to a Sperner coloring) precludes the possibility that all processes decide.
Our analysis provides further insight into the topological analysis for local decisions, illustrating  that hidden capacity of~$k$ implies
$k-1$ connectivity of a subcomplex of the protocol complex in a given round.
 The hidden capacity thus explains the source of topological connectivity that underlies the lower-bound proofs for $k$-set consensus.
 To the best of our knowledge, this is the first time knowledge and topological techniques directly interact to obtain a topological characterization.
\end{itemize}

The appendix
contains the detailed technical analysis that supports the claims in the main part of the paper. This includes
full proofs of all technical claims. In particular, both the full combinatorial proof of unbeatability and
the corresponding
topological proof are presented in \cref{sec-proofs-k-set}.

\section{Preliminary Definitions}
\label{sec:model}

\subsection{Computation and Communication Model}
\label{sec:model-communication}

Our model of computation is a synchronous, message-passing model with
benign crash failures.
A system has~\mbox{$n\!\ge\!2$} processes denoted by
$\Proc=\{1,2,\ldots,n\}$.
Each pair of processes is connected by a two-way communication link,
and each message is tagged with the identity of the sender.
Processes
share a discrete global clock that starts
at time~$0$ and
advances by increments of one. Communication in the system proceeds in
a sequence of \defemph{rounds}, with round~$m+1$ taking place between
time~$m$ and time~$m+1$.
Each process starts in some \defemph{initial state} at time~$0$,
usually with an \defemph{input value} of some kind.
In every round, each process first performs a local computation, and performs local actions,
then
it sends a set of messages to other processes, and finally receives messages sent to it
by other processes during the same round.
We consider the
local computations and sending actions of round~$m+1$ as being performed at time~$m$,
and the messages are received at time~$m+1$.

A faulty process fails by \defemph{crashing} in some round~$m\ge 1$.
It behaves correctly in the first~$m-1$ rounds and
sends no messages from round~$m+1$ on.
During its crashing round~$m$, the process may succeed in
sending messages on an arbitrary subset of its links.
At most~$\tee \leq n-1$ processes fail in any given execution.

It is convenient to consider the state and behavior of processes at different (process-time) nodes, where a \defemph{node}  is a pair $\node{i,m}$ referring to process~$i$ at time~$m$.
A \defemph{failure pattern} describes how processes fail in an execution.
It is a layered graph~$\FP$ whose vertices are
all nodes
$\node{i,m}$ for $i\in\Proc$ and $m\ge 0$.
An edge has the form $(\node{i,m-1},\node{j,m})$
and it denotes the fact that a message sent by~$i$ to~$j$ in round~$m$ would be delivered successfully.
Let~$\Crash(\tee)$ denote the set of failure patterns in which
all failures are crash failures, and
at most~$\tee$ crash failures can occur.
An \defemph{input vector} describes what input the processes receive in an
execution. The only inputs we consider are initial values that processes obtain at time~0.
An input vector is thus a tuple $\vec{v}=(v_1,\ldots,v_n)$ where~$v_j$ is the input to process~$j$.
We think of the input vector and the failure pattern as being determined by an external scheduler, and thus a  pair $\alpha=(\vec{v},\FP)$ is called an
\defemph{adversary}.

A \defemph{protocol} describes what messages a process sends and what decisions it takes,
as a deterministic function
 of its local state at the start of the round.  Messages received during a round affect the local state at  the start of the next round.
We assume that a protocol~$P$ has access to the number of processes~$n$ and to the bound~$\tee$,
typically passed to~$P$ as parameters.

A \defemph{run} is a description of an infinite behavior of the system.
Given a run~$r$ and a time~$m$,
the \defemph{local state} of process~$i$ at time~$m$ in~$r$
is denoted by $r_i(m)$,
and the \defemph{global state} at time $m$
is defined to be $r(m)=\node{r_1(m),r_2(m),\ldots,r_n(m)}$.
A protocol~$P$ and an adversary~$\alpha$ uniquely determine a run,
and we write $r = P[\alpha]$.

Since we restrict attention to benign failure models and focus on decision times and solvability in this paper, Coan showed that it is sufficient to consider {\em full-information} protocols ({\em fip}'s for short), defined below \cite{Coan}.
There is a convenient way to consider such protocols in our setting.
With an adversary $\alpha=(\vec{v},\FP)$ we associate a \defemph{communication graph}
$\CG_\alpha$,
consisting of the graph~$\FP$ extended by labeling the initial nodes $\node{j,0}$ with the initial states $v_j$ according to~$\alpha$.
With every node $\node{i,m}$ we associate a subgraph  $\Ga(i,m)$ of~$\CG_\alpha$, which we think of as $i$'s
\defemph{view} at $\node{i,m}$.
Intuitively, this graph
represents
all nodes $\node{j,\ell}$ from which $\node{i,m}$ has heard, and the initial values it has seen.
Formally, $\Ga(i,m)$ is defined by induction on~$m$.
$\Ga(i,0)$ consists of the node $\node{i,0}$, labeled by the initial value~$v_i$.
Assume that $\Ga(1,m),\ldots,\Ga(n,m)$ have been defined, and let $J\subseteq\Proc$ be the set of processes~$j$ such that $j=i$ or $e_j=(\node{j,m},\node{i,m+1})$ is an edge of~$\FP$. Then
$\Ga(i,m+1)$ consists of the node $\node{i,m+1}$, the union of all graphs $\Ga(j,m)$ with $j\in J$, and the edges
$e_j=(\node{j,m},\node{i,m+1})$ for all $j\in J$.
We say that $(j,\ell)$ is
\defemph{seen} by $\node{i,m}$ if $(j,\ell)$ is a node of $\Ga(i,m)$. Note that this occurs exactly
if $\FP$ allows a (Lamport) message chain from $\node{j,\ell}$ to $\node{i,m}$.

A full-information protocol $P$ is one in which at every node $\node{i,m}$ of a run $r=P[\alpha]$ the process~$i$ constructs $\Ga(i,m)$ after receiving its round~$m$ nodes, and sends $\Ga(i,m)$ to all other processes in round~\mbox{$m+1$}. In addition, $P$ specifies what decisions $i$ should take at $\node{i,m}$ based on $\Ga(i,m)$.
Full-information protocols thus differ only in the decisions taken at the nodes.
Finally, in a run $r=P[\alpha]$, we define the local state $r_i(m)$ of a process~$i$ at time~$m$ to be the pair
$\langle \beta,\Ga(i,m)\rangle$, where $\beta=\mbox{`}\bot\mbox{'}$ if $i$ is undecided at time~$m$, and if $\beta=\mv$ in case~$i$ has decided~$\mv$
at or before time $m$.

For ease of exposition and analysis, all of our  protocols are
\fip's.
However, in fact, they can all be implemented in such a way that any process sends any other
process a total of $O(n\log n)$ bits throughout any execution
(see \cref{sec-com-eff}).

\subsection{Domination and Unbeatability}

A protocol~$P$ is a \defemph{worst-case optimal} solution to a decision problem~$S$ in a given model
if it solves~$S$, and decisions in~$P$ are always taken no later than the {\em worst-case} lower bound for decisions in this problem, in a given model of computation.
However, this protocol can be strictly improved upon by \defemph{early stopping} protocols,
which are also worst-case optimal, but can often decide much faster than the original ones.
In this paper, we are interested in protocols that are efficient in a much stronger sense.

Consider a  context $\gamma=(\Vecs,\Fmodel)$, where $\Vecs$ is a set of initial vectors.
A decision protocol $Q$ \defemph{dominates} a protocol~$P$ in~$\gamma$, denoted by $Q\boldsymbol{\dom_\gamma} P$ if, for all adversaries $\alpha$ and every process~$i$,
if $i$ decides in~$P[\alpha]$ at time~$m_i$, then $i$ decides in $Q[\alpha]$ at some time
$m'_i\le m_i$. Moreover, we say that $Q$  \defemph{strictly dominates} $P$
if $Q\dom_\gamma P$ and  $P\!\!\boldsymbol{\not}\!\!\!\dom_\gamma Q$. I.e., if it dominates~$P$ and for some $\alpha\in\gamma$ there exists a process~$i$ that decides in $Q[\alpha]$ {\em strictly before} it does so in $P[\alpha]$.

Following \cite{HMT11},
a protocol~$P$ is said to be an  \defemph{all-case optimal} solution to a decision task~$S$ in a context~$\gamma$ if it solves~$S$ and, moreover, $P$ dominates every protocol~$P'$ that solves~$S$ in~$\gamma$.
For the standard ({\em eventual}) variant of consensus, in which decisions are not required to occur simultaneously, Moses and Tuttle showed that no all-case optimal solution exists~\cite{MT}.
Consequently, Halpern, Moses and Waarts in \cite{HalMoWa2001} initiated the study of a natural notion of optimality
that is achievable by eventual consensus protocols:

\begin{definition}[\cite{HalMoWa2001}]
A protocol $P$ is an \defemph{unbeatable} solution to a decision task~$S$ in a context~$\gamma$ if $P$ solves~$S$ in~$\gamma$ and no protocol $Q$ solving~$S$ in~$\gamma$ strictly dominates~$P$.\footnote{Unbeatable protocols were called {\em optimal} in \cite{HalMoWa2001}. Following \cite{AYY-DISC},
we prefer the term {\em unbeatable} because ``optimal'' is used very broadly, and inconsistently, in the literature.}
\end{definition}

Thus, $P$ is unbeatable if for all protocols~$Q$ that solve~$S$, if there exist an adversary~$\alpha$ and process~$i$ such that~$i$
decides in $Q[\alpha]$ strictly earlier than it does in $P[\alpha]$, then there must exist some adversary~$\beta$ and process~$j$ such that $j$ decides strictly earlier in $P[\beta]$ than it does in $Q[\beta]$. An unbeatable solution for~$S$ is $\dom$-minimal among the solutions of~$S$.

\subsection{Set Consensus}

In the $k$-set consensus problem, each process $i$ starts out with an initial value $v_i\in\{0,1,\ldots,k\}$.\footnote{The set of values in $k$-set consensus is  often assumed to contain more than $k+1$ values. We choose this set for ease of exposition. Our results apply equally well with minor modifications if a larger set of values is assumed. See \cref{more-values}.}
Denote by $\existsv$ the fact that at least one of the processes started out with initial value~$\mathtt{v}$.
In  a protocol for (nonuniform) $k$-set consensus,
the following properties must hold in every run~$r$:

\begin{itemize}
\item[]{\kAgreement:}
The set of values that correct processes decide on has cardinality at most~$k$,
\item[]{\Decision:} Every correct process must decide on some value, and
\item[]{\Validity:} For every value~$\mathtt{v}$, a decision on~$\mathtt{v}$ is allowed only if~$\existsv$ holds.
\end{itemize}

In uniform $k$-set consensus \cite{CBS-uni,Dutta-uni,H86,KR-uni,Raynal04-uni,WTC-uni}, the \kAgreement\ property is replaced by

\begin{itemize}
\item[]{\UnikAg:}
The set of values decided on
has cardinality at most~$k$.
\end{itemize}

\noindent
In uniform $k$-set consensus,
values decided on by failing processes
(before they have failed)
are counted, whereas in the nonuniform case they are not
counted.
The two notions coincide in asynchronous settings, since it is never possible to distinguish at a finite point in time between a crashed process and a very slow one.
Uniformity may be desirable
when elements outside the system can observe decisions, as in distributed databases when decisions correspond to commitments to values.

\section{Unbeatable (1-set) Consensus}
\label{sec:unB-consensus}

Before introducing our protocols for $k$-set consensus, we briefly review the analysis and the unbeatable protocol for nonuniform ($1$-set) consensus
 given in \cite{AYY-DISC}.
In this version of the problem, all processes start with binary initial values $v_i\in\{0,1\}$,
decisions must satisfy the validity condition, and all correct processes decide on the same value. Recall that we are assuming that the processes follow a full-information protocol, so a protocol can be specified simply by giving the rules by which a process decides on value~$\mv$, for $\mv=0,1$.

By Validity, $\existsv$ is a necessary condition for deciding on a value~$\mv\in\{0,1\}$.
Consequently, a process cannot decide~$\mv$ unless it knows that some process had initial value~$\mv$.\footnote{This is an instance of the so-called {\em Knowledge of Preconditions} principle of~\cite{Moses-tark2015}, which states that if ~$\varphi$ is a necessary condition for process~$i$ performing an action~$\sfa$, then $i$ must {\em know}~$\varphi$ when it performs~$\sfa$. For more details on the use of this principle in our setting, see \cref{thm:knowprec} in \cref{sec:know}
and~\cite{AYY-DISC}.}
Clearly, a process knows $\existsv$ if it sees a value of~$\mv$, either as its own initial value or reported in a message that the process receives. \cite{AYY-DISC} consider the design of a protocol that will decide on~0 as soon as possible, namely at the first point at which
a process
 knows~$\existsZ$.
They proceed to consider when a process can decide~1, given that all processes
decide~0 if they ever come to know that~$\existsZ$. Clearly, we should consider the possibility of deciding~1 only for processes that do not know that~$\existsZ$.
By the ($1$-)\Agreement\ property, correct processes must decide on the same value.
Thus, a process cannot decide~1 if another process is deciding~0. Let us consider when a process can know that nobody is deciding~0.

Our analyses will make use of the different types of information that a process~$i$ at time~$m$ can know about the state of a process $j$ at time $\ell$, in runs of an \fip. (We denote such process-time pairs by $\node{i,m}$, $\node{j,\ell}$, etc.) We say that $\node{j,\ell}$ is \defemph{seen by} $\node{i,m}$
if~$i$ has received a message by time~$m$ containing the state at $\node{j,\ell}$. We say that $\node{j,\ell}$ is \defemph{guaranteed crashed} at $\node{i,m}$ if~$i$ has proof at time~$m$ that~$j$ crashed \emph{before} time~$\ell$ ($i$ heard from someone who did not hear from~$j$ in some round~$\le \ell$). Finally, we say that $\node{j,\ell}$ is \defemph{hidden from} $\node{i,m}$ if it is neither seen by $\node{i,m}$ nor guaranteed crashed there.
As far as~$i$ is concerned, $j$ may have sent messages in round~$\ell+1$, and since~$i$ does not see $\node{j,\ell}$, it may not know at~$m$ what information $j$'s messages contained.

\cref{fig-hidden-path} illustrates a case in which process~$i$ does not know $\existsZ$ at time~2, while process~$i_3$ decides~0 at time~2. This is possible only if there is a \defemph{hidden path} with respect to $\node{i,2}$,  in the terminology of~\cite{AYY-DISC}: at each time~$\ell$ from~0 up to the current time $m\!=\!2$, there is a node $\node{j,\ell}$ that is hidden from~$\node{i,2}$.

\begin{figure}[t]
    \centering
        \subfigure[A hidden path w.r.t.\ $\node{i,2}$.]{
            \label{fig-hidden-path:first}
	    \includegraphics[width=0.41\textwidth]{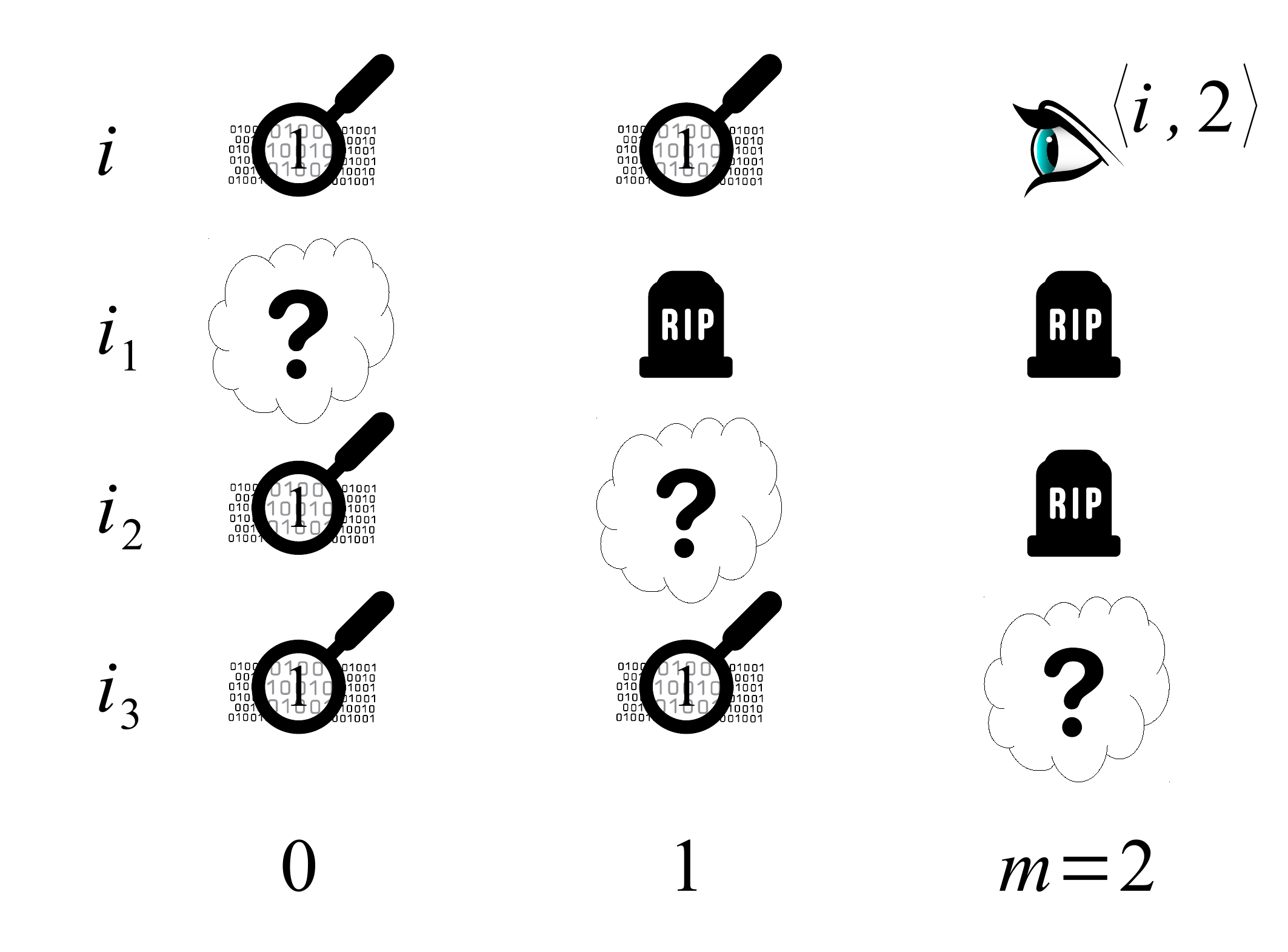}
        }
        \qquad\qquad\qquad
        \subfigure[A run $i$ considers at $2$ to be possible, in which $0$ is held by a correct process at $2$.]{
            \label{fig-hidden-path:second}
	    \includegraphics[width=0.41\textwidth]{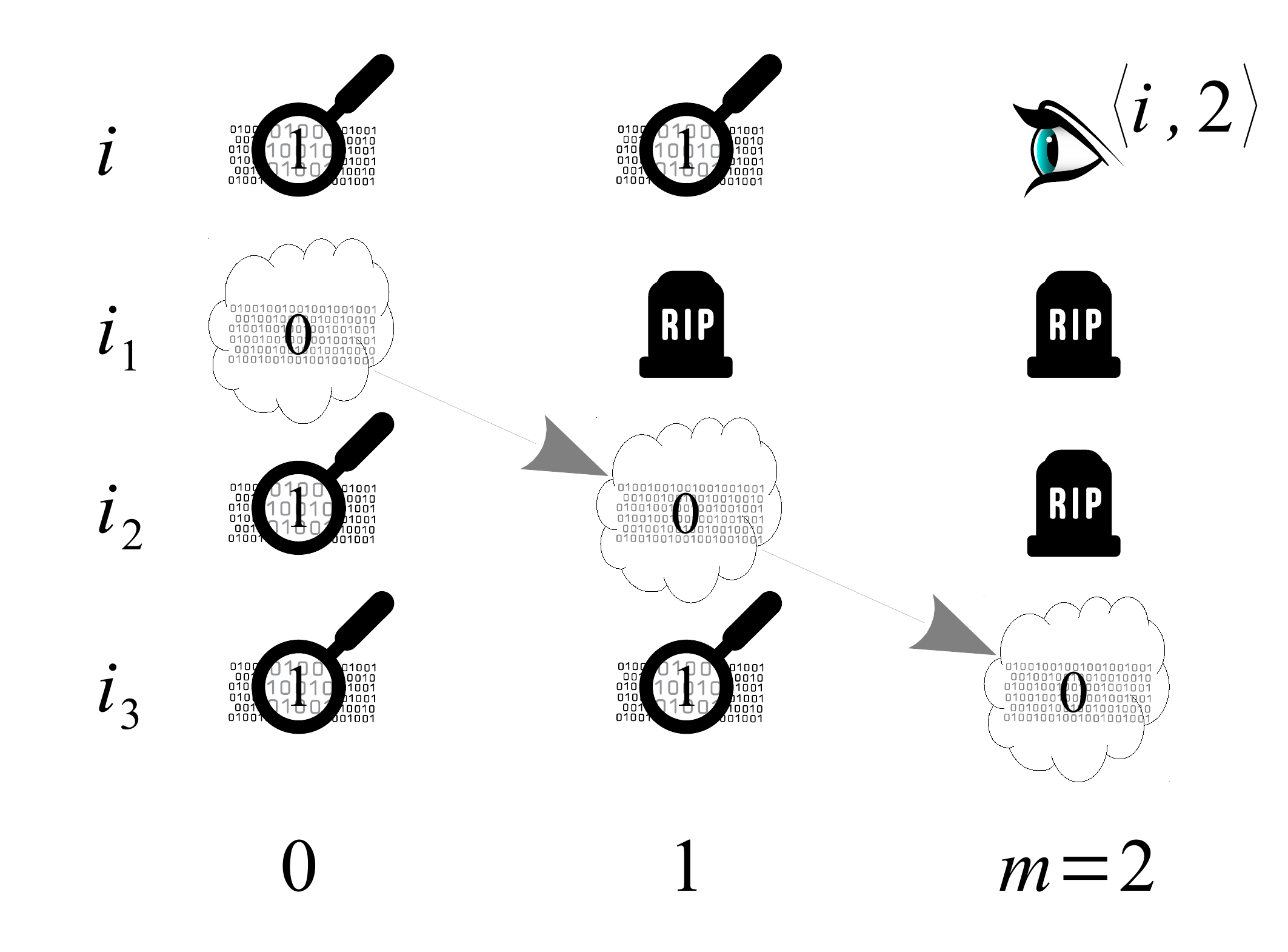}
        }
        \vspace{-0.3cm}
        \caption{
            A hidden path at time $m\!=\!2$ indicates that a value unknown to $i$ may exist in the system.
        }
	\label{fig-hidden-path}
\end{figure}

In general, if there is a hidden path with respect to~$\node{i,m}$ in a given execution, and process~$i$ does not know $\existsZ$ at time~$m$, then~$i$ cannot be guaranteed that no correct process is currently deciding~0. It thus cannot decide~1. If no such path exists, i.e., if there is some time $k\le m$ that contains no hidden node w.r.t.~$\node{i,m}$, then $i$ knows that nobody is deciding~0. Moreover, in that case~$i$ knows that no value of~0 is known to any active process, and so nobody will ever decide~0. Based on this analysis, \cite{AYY-DISC} proposes the following protocol:

\vspace{\topsep}
\noindent
\underline{{\bf Protocol}~$\OptZ$}
 (for an undecided process~$i$ at time~$m$)~\cite{AYY-DISC}:\\[.6ex]
\begin{tabular}{lll}
\quad  {\bf if} & seen~0 & {\bf then}~~$\decideiZ$\\
\quad  {\bf elseif} & some time~$k\le m $ contains no hidden node & \bf then~~$\decideiO$
\end{tabular}
\vspace{\topsep}

As shown in~\cite{AYY-DISC}, $\OptZ$ is an unbeatable protocol for non\-uniform consensus. Moreover, it strictly dominates all previously known early stopping protocols for consensus, in some cases deciding in 3 rounds when the best previously known protocol would decide in~$t+1$ rounds.

Hidden paths,
first defined in \cite{AYY-DISC}, are implicit in many lower-bound proofs for consensus in the crash failure model \cite{DRS,DM}.
 They play a crucial role in the correctness and unbeatability proof of~$\OptZ$.
In this paper, we extend the notion of a hidden path, and use it to provide an unbeatable protocol for set consensus, as well as a protocol for uniform set consensus that beats all previously known protocols.

\section{Unbeatable
{\bf\emph{k}}-Set Consensus}
\label{sec-set-consensus}

The design and especially the analysis of protocols for $k$-set consensus are typically considerably more subtle than that of protocols for ($1$-set) consensus. Indeed, since processes may decide on different values, their decisions no longer depend on hearing about concrete values. Nevertheless, as we now show, it is possible to extend the above notions to obtain a simple protocol for nonuniform $k$-set consensus. While the protocol is natural to derive and simple to state, it is unbeatable. Moreover, its proof of unbeatability is highly nontrivial and extremely subtle.

\subsection{Protocol Description and Correctness}

Recall that the set of possible initial values is assumed to be $\{0,1,\ldots,k\}$.
As in the case of $1$-set consensus, our goal is to define the rules by which a process will decide on~$\mv$, for every value~$\mv$ in this set.
Define by $\minval{i,m}$ the minimal value that process~$i$ has seen by time~$m$
(i.e., the minimal value~$\mv$ s.t.\ $i$ knows that $\exists\mv$)
in a given run of the \fip.
Moreover, we say that process~$i$ is \defemph{low} at time~$m$ if $\minval{i,m}<k$.
We set out to design a protocol in which a process~$i$ that decides at time~$m$  decides on~$\minval{i,m}$.

Suppose that we choose to allow a process that becomes low to immediately decide on~$\minval{i,m}$. This is analogous to allowing a process that sees~0 to immediately decide on~0 in $1$-set consensus. When should a process be able to decide on the value~$k$?
As already specified, every low process decides on a \defemph{low value}, i.e., a value smaller than $k$ as soon as possible; therefore,
we are only concerned about
when high processes (processes that are not low) should decide on a high value, i.e., on $k$.
Of course, a process~$i$ should be able to decide on~$k$ if it knows that doing so will not violate the properties of $k$-set consensus, and in particular, the \kAgreement\ property, which disallows the correct processes to decide on more than~$k$ values. I.e.,
if~$i$ considers it to be possible that there is a correct process deciding on each of the values $0,\ldots,k-1$, then~$i$ should not decide on~$k$. In other words, $i$ can decide on~$k$ only when it knows that at most $k-1$ of the low values will be decided on. Notice that this is not the same as there being only $k-1$ possible low initial values in the run, but rather that at most $k-1$ values can serve as $\minval{j,m}$ for any process~$j$ that
decides
at time~$m$. Consider the following
generalization of the notion of a hidden path:

\begin{definition}
\label{hiddencapacity}
Fix a run~$r$. We define the \mbox{\defemph{hidden capacity}} of process~$i$ at time~$m$, denoted by
$\HC{i,m}$,
to be the maximum
number $c$ such that  for every $\ell\le m$, there exist~$c$ distinct
nodes
$\node{i_1^{\ell},\ell},\ldots,\node{i_c^\ell,\ell}$ at time~$\ell$ that are hidden from $\node{i,m}$.
The nodes $\node{i_b^{\ell},\ell}$ are said to be \defemph{witnesses to the
hidden capacity of}~$i$ at~$m$.
\end{definition}

Put another way, the hidden capacity of $\node{i,m}$ is at least~$c$ iff at each time $\ell\le m$ there are at least~$c$ nodes that are hidden from $\node{i,m}$. In particular, a hidden path implies that the hidden capacity is at least~1. Recall that $\node{i,m}$ does not see what happens at nodes $\node{j,\ell}$ that are hidden from it, and does not know who they heard from in round~$\ell$ and whether they communicated to others in round~$\ell+1$.

As we now show, the hidden capacity is very closely related to the ability of a process to decide~$k$ at a given time. If the hidden capacity of~$\node{i,m}$ is~$c$, then there could be up to~$c$ {\em disjoint}
hidden
paths, where
each failing process that belongs to one of the hidden paths
sends
in its crashing round a message solely to
its successor in
that path. (See \cref{exist-hidden-channels} in \cref{sec-proofs-k-set}.)
\cref{hidden-capacity:first} illustrates the node $\node{i,2}$ with hidden capacity~$3$, while \cref{hidden-capacity:second} illustrates three disjoint hidden paths.
If such disjoint paths begin in nodes with distinct initial values, they can end at~$c$ nodes, each of which sees a distinct minimal value. It follows that a node $\node{i,m}$ with hidden capacity~$k$ must consider it possible that there are~$k$ nodes $\node{j_0,m}$,\ldots,$\node{j_{k-1},m}$ with
$\minval{j_b,m}=b$ for every $b=0,\ldots,k-1$. Therefore, $i$ is not be able to decide on the value~$k$
at time $m$ without risking violating \kAgreement.

\begin{figure}[t]
    \centering
        \subfigure[$\node{i,2}$ has hidden capacity~$3$.]{
            \label{hidden-capacity:first}
	    \includegraphics[width=0.41\textwidth]{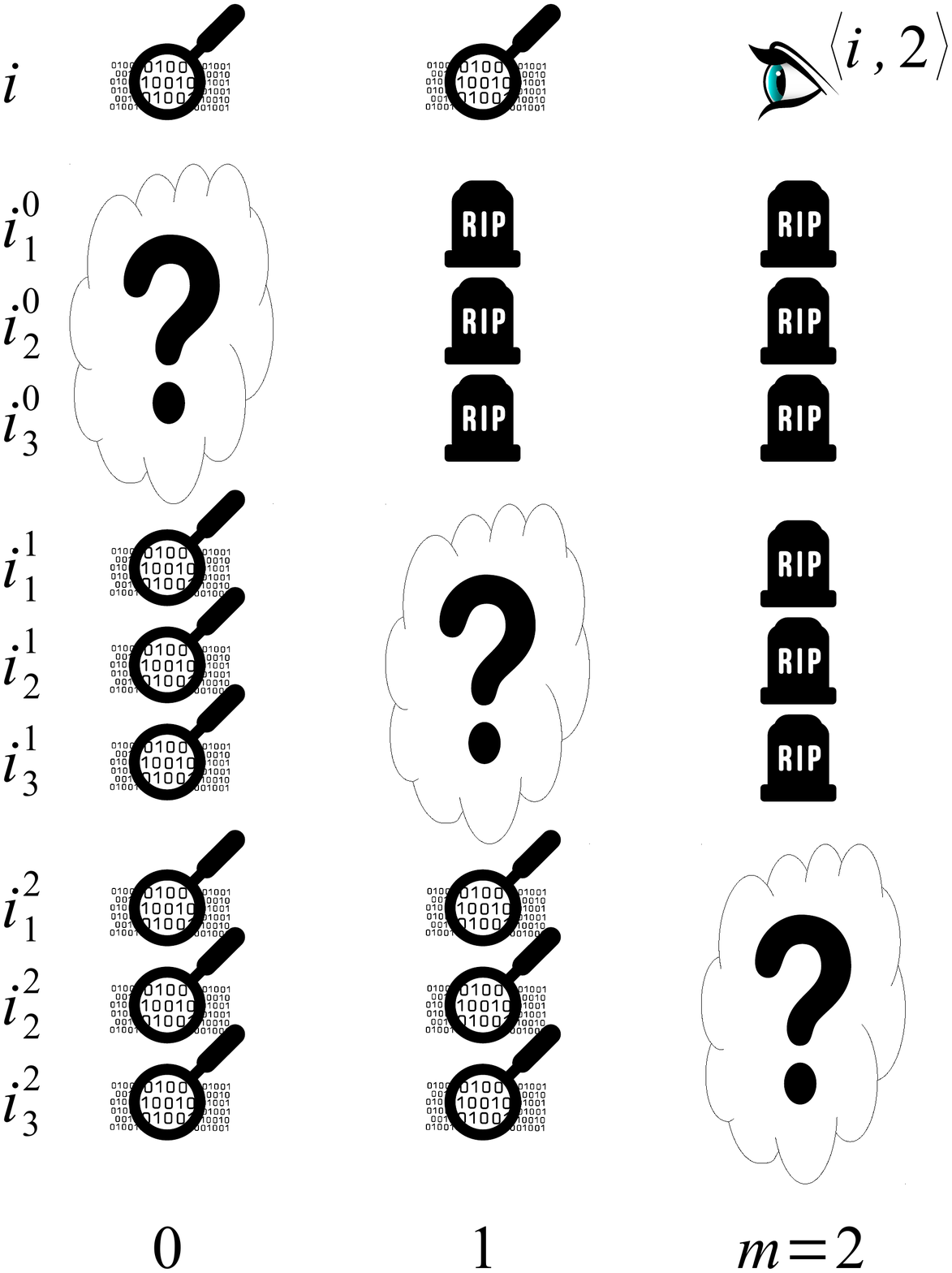}
        }
        \qquad\qquad\qquad
        \subfigure[A run $i$ considers at $2$ to be possible, in which $v_1,v_2,v_3$ are held by distinct processes at $2$.]{
            \label{hidden-capacity:second}
	    \includegraphics[width=0.41\textwidth]{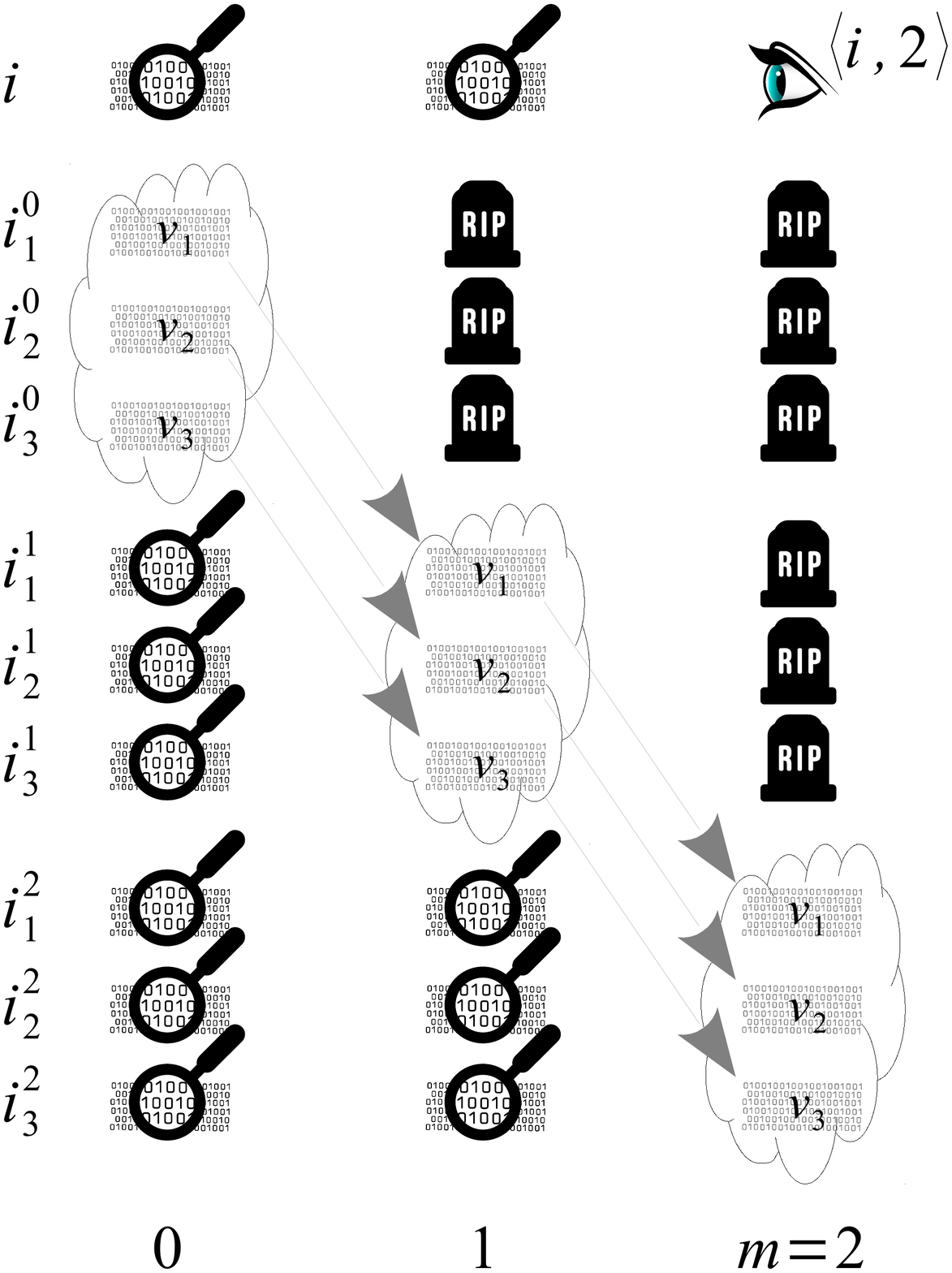}
        }
        \vspace{-0.3cm}
        \caption{
            A hidden capacity of $c\!=\!3$ at time $m\!=\!2$ indicates that any arbitrary $c$ values unknown to $i$ may exist in the system,
            each held by a distinct process (cf.\ \cref{fig-hidden-path}).
        }
	\label{fig-hidden-capacity}
\end{figure}

Interestingly, when the hidden capacity drops below~$k$, it becomes possible to decide on~$k$.
Suppose that the hidden capacity of~$\node{i,m}$ is smaller than~$k$. This means that there must be some time $\ell<m$ such that some $c<k$ nodes at time~$\ell$ are hidden from $\node{i,m}$.
Each of these nodes has a single minimal value. Assuming that
$\minval{i,m}=k$, we have that for time~$\ell$ nodes seen by $\node{i,m}$, the minimal value is~$k$.
Since this is the \fip, every active process after time~$\ell$ will have one of these $c+1\le k$ minimal values. It follows that~$i$ can safely decide on~$k$ at time~$m$ if its hidden capacity is lower
than~$k$.

The above discussion suggests the following protocol for nonuniform $k$-set consensus:

\vspace{\topsep}
\noindent
\underline{{\bf Protocol}~$\OptMink$}
 (for an undecided process~$i$ at time~$m$):\\[.6ex]
\begin{tabular}{lll}
\quad {\bf if} & $i$ is low or~$i$ has hidden capacity $<k$ & {\bf then}~~$\decide(\minval{i,m})$
\end{tabular}
\vspace{\topsep}

Our protocol $\OptMink$ directly generalizes the unbeatable consensus protocol $\OptZ$. Being low in this case corresponds to seeing~0,
while $\HC{i,m}<k=1$ corresponds to there being
no hidden path.
Recall
that the decision rules of $\OptZ$ may be thought of as follows: a process $i$ decides on the value~$0$ as soon as it knows that
some node had initial value~$0$, and it decides on the value~$1$ as soon as it knows that no correct process will ever decide on the value~$0$
(hence \Agreement\ is not violated).
In $\OptMink$, a process $i$ decides on a low value~$\valv$ (i.e., a value $\valv\in\set{0,\ldots,k-1}$)
as soon as it knows that some node had initial
value~$\valv$, and decides on
the high value $\valv = k$ (indeed, for a high node $\node{i,m}$, we have $\minval{i,m}=k$) as soon as it knows that at most $k-1$ values smaller than $\valv$
will ever be decided on by correct processes (thus satisfying \kAgreement).

Based on the above analysis, we obtain:\footnote{\label{more-values}We note that \cref{k-set-correct}, as well as all other the results that we present, including \cref{thm:OptMink,thm:last-decider,u-k-solve} below, as well as all proofs in the
appendix, hold verbatim even if the set of possible initial values is $\{0,\ldots,d\}$, for some $d\ge k$. (In this case, all values in $\{k,k+1,\ldots,d\}$ are considered high.) In particular, the definition for this case of all protocols, including $\OptMink$, is unchanged.}

\begin{proposition}
\label{k-set-correct}
$\OptMink$ solves $k$-set
consensus, and
all processes decide by time
$\lfloor \nicefrac{f}{k} \rfloor+1$.
\end{proposition}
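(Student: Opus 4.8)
The plan is to establish, in order, \Validity, the decision-time bound (which also yields \Decision), and finally \kAgreement. \Validity\ is immediate: whenever a process~$i$ decides, it does so at the first time~$m$ at which it is low or $\HC{i,m}<k$, and the value decided, $\minval{i,m}$, is by definition a value that some process held as its initial value.

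\medskip
\noindent\emph{Decision time.} Fix a run in which at most~$f$ crashes occur and let $m\eqdef\lfloor\nicefrac{f}{k}\rfloor+1$. I would show that every process alive at time~$m$ has decided by time~$m$; since a correct process is alive at all times, this also gives \Decision. Suppose for contradiction that some process~$i$ alive at time~$m$ is undecided at~$m$. By the rule of $\OptMink$, $i$ is then not low, so $\minval{i,m}=k$, and $\HC{i,m}\ge k$; hence for every $\ell\in\{0,1,\ldots,m\}$ at least~$k$ distinct nodes at time~$\ell$ are hidden from~$\node{i,m}$. The heart of the argument is a charging lemma. \textbf{(i)}~For $0\le\ell<m$, if $\node{p,\ell}$ is hidden from~$\node{i,m}$ then~$p$ is faulty and, more precisely, $p$ crashes in round~$\ell$ or round~$\ell+1$: a process behaving correctly through round~$\ell+1\le m$ would relay its round-$\ell$ information to~$i$ by time~$\ell+1$ (so $\node{p,\ell}$ would be seen), while a process crashing earlier would cause~$i$ to miss one of its messages in a round~$\le\ell$ (so $\node{p,\ell}$ would be guaranteed crashed). \textbf{(ii)}~If $\node{p,\ell}$ is hidden from~$\node{i,m}$ with $\ell<m$, then $\node{p,\ell+1}$ is guaranteed crashed at~$\node{i,m}$ (hence not hidden there): by (i) $p$ crashes in a round $\le\ell+1\le m$, and in that round~$i$ itself fails to receive~$p$'s message, so~$i$ has the required proof. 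Combining (i) and (ii), each faulty process is hidden from~$\node{i,m}$ at \emph{at most one} time in $\{0,1,\ldots,m-1\}$; so these~$m$ layers exhibit~$m$ pairwise disjoint sets, each of at least~$k$ faulty processes, so $f\ge km$. But $km=k\lfloor\nicefrac{f}{k}\rfloor+k>f$, a contradiction. (Equivalently, one may invoke \cref{exist-hidden-channels}: $\HC{i,m}\ge k$ yields $k$ node-disjoint hidden paths, and the argument above shows the $km$ faulty processes traversed by them are distinct.)

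\medskip
\noindent\emph{\kAgreement.} Let~$D$ be the set of values decided by correct processes. If $k\notin D$ then $D\subseteq\{0,\ldots,k-1\}$ and $|D|\le k$. Otherwise fix a correct process~$j$ deciding~$k$, at time~$m_j$; since~$j$ is not low it decided because $\HC{j,m_j}<k$, so there is a time $\ell\le m_j$ at which fewer than~$k$ nodes are hidden from~$\node{j,m_j}$. I claim every low value $v\in D$ equals $\minval{p,\ell}$ for some node $\node{p,\ell}$ hidden from~$\node{j,m_j}$; since distinct~$v$ give distinct such nodes, $D$ then contains at most $k-1$ low values, so $|D|\le(k-1)+1=k$. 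To prove the claim, let a correct process~$i$ decide $v<k$ at time~$t_i$, so $\minval{i,t_i}=v$. As~$i$ is correct it broadcasts its state each round, so if $t_i<m_j$ then~$j$ would see value~$v<k$ by time $t_i+1\le m_j$, contradicting $\minval{j,m_j}=k$; hence $t_i\ge m_j\ge\ell$. Since $\node{i,t_i}$ sees value~$v$, there is a Lamport message chain from a node with initial value~$v$ to~$\node{i,t_i}$; the minimal seen value is nonincreasing along this chain and equals~$v$ at both endpoints, so it is identically~$v$. Let $\node{p,\ell}$ be the chain's node at time~$\ell$ (it exists as $\ell\le t_i$); then $\minval{p,\ell}=v$, it is not seen by~$\node{j,m_j}$ (else~$j$ would see value~$v<k$), and it is not guaranteed crashed at~$\node{j,m_j}$ ($p$ behaves correctly through round~$\ell$, being either an interior node of the chain past time~$\ell$ or equal to the correct process~$i$). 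Hence $\node{p,\ell}$ is hidden from~$\node{j,m_j}$, as claimed.

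\medskip
\noindent\emph{Main obstacle.} \Validity, \Decision\ given the bound, and the chain bookkeeping for \kAgreement\ are routine. The step that genuinely needs care is the charging lemma in the decision-time bound: one must show not merely that every hidden node is witnessed by a faulty process, but that hidden nodes at different times are witnessed by \emph{distinct} faulty processes, so that sustaining $\HC{i,m}\ge k$ truly costs~$km$ crashes. This is exactly where the sharp form of~(i) (``crashes in round~$\ell$ or~$\ell+1$''), together with~(ii), is needed.
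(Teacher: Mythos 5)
Your proof is correct and takes essentially the same route as the paper's: the decision-time bound charges each of the $m$ layers of $k$ hidden witnesses to $k$ distinct crashes, and \kAgreement\ follows by injectively mapping each low decided value to a node hidden from the high decider at the layer $\ell$ witnessing that its hidden capacity is below~$k$. Your explicit charging lemma makes rigorous a step the paper compresses into one line (``$i_b^{\ell}$ fails at time $\ell$''); the only slip is in the justification of step~(ii) when $p$ crashes in round~$\ell$, where $i$ may in fact receive $p$'s round-$\ell$ message---the guaranteed-crash argument should instead cite the missed round-$(\ell+1)$ message, which works in both cases and leaves your conclusion intact.
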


\subsection{Unbeatability}

Proving that $\OptMink$ is unbeatable
is nontrivial.\footnote{In particular, the challenges involved are significantly greater than those in involved in proving that $\OptZ$ is unbeatable for ($1$-set) consensus (as is often the case with problems regarding $k$-set consensus vs.\ their ($1$-set) consensus counterparts).} The main technical challenge along the way is,
roughly speaking, showing that, e.g., in the scenario depicted in \cref{fig-hidden-capacity}, each of the
hidden
processes at time $m\!=\!2$ (i.e., $i^2_b$ for $b=1,2,3$) decides on the
unique low value ($v_b$) known to it, not merely in $\OptMink$, but in \emph{any} protocol $P$ that dominates $\OptMink$. Indeed, in the scenario depicted in \cref{fig-hidden-capacity}, one could imagine a hypothetical protocol in which all active nodes decide at time $2$ on some high value, say the initial value of process~$i$, thus guaranteeing that $i$ will not violate \kAgreement\ by deciding on this value immediately as well.
The claim that such a hypothetical protocol does not exist is made precise in \cref{two-face}
(see also a discussion in \cref{two-approaches}
below).

\begin{lemma}
\label{two-face}
Let $P$ be a protocol solving nonuniform $k$-set consensus.
Assume that in~$P$, every process $i$ that is low at any time~$m$
must decide by time~$m$ at the latest.
Let~$r$ be a run of~$P$,
let $i$ be a process and let~$m$ be a time.
If the following conditions hold in~$r$:
\begin{enumerate}
\item\label{two-face-first-time}
$i$ is low at~$m$ for the first time,
\item
$i$ has seen a single low value
$\valv$
by time~$m$,
\item
$\HC{i,m}\ge k-1$, and
\item
there exist $k$ distinct processes $j_1,\ldots,j_k$
s.t.\ $\node{j_b,m-1}$
is high
and $\node{j_b,m}$ is hidden from $\node{i,m}$,
for all $b=1,\ldots,k$.
\end{enumerate}
then $i$ decides in~$r$ on its unique low value $\valv$ at time~$m$.
\end{lemma}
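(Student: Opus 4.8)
Suppose, toward a contradiction, that $i$ does \emph{not} decide~$\valv$ at time~$m$ in~$r$. Since $i$ is low at~$m$ and $P$ satisfies the low-decides-by-$m$ assumption, $i$ decides at time~$m$ at the latest; by \Validity\ and Condition~2 the only low value $i$ ever sees up to~$m$ is~$\valv$, seen for the first time at~$m$ by Condition~1. Hence, if $i$'s decision is not~$\valv$ it is on a \emph{high} value $h\ge k$ (a value $i$ has seen, since being high throughout times $<m$ it has seen a high value, e.g.\ its own initial value, which must be high). The plan is then to construct an adversary~$\beta$ that is indistinguishable from~$r$ to~$\node{i,m}$, in which~$i$ is correct---so that $i$ decides~$h$ in $P[\beta]$---and in which there are $k$ further, pairwise distinct, correct processes that decide respectively on the $k$ distinct low values $0,1,\dots,k-1$. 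In $P[\beta]$ the correct processes then decide on the $k+1$ distinct values $\{0,\dots,k-1,h\}$, contradicting \kAgreement.

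To build~$\beta$ I would use Condition~3 together with the hidden-channels machinery of \cref{exist-hidden-channels} (and Condition~4 to guarantee enough disjoint structure) to pass to a run indistinguishable to~$\node{i,m}$ that carries $k-1$ pairwise-disjoint hidden paths with respect to~$\node{i,m}$, whose source nodes and initial values are free, and that keeps the $k$ hidden high nodes $\node{j_b,m-1},\node{j_b,m}$ intact and disjoint from them. Along the $b$-th such path I would plant the initial value~$w_b$, choosing $\{w_1,\dots,w_{k-1}\}$ to be $k-1$ distinct low values with $\valv\notin\{w_1,\dots,w_{k-1}\}$; the path then forwards~$w_b$ to its endpoint, which I keep correct and otherwise isolated, so that this endpoint has seen only the single low value~$w_b$ and is therefore forced---by \Validity, and by the low-decides assumption since it is already low---to decide~$w_b$. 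This yields $k-1$ correct processes deciding $w_1,\dots,w_{k-1}$; together with~$i$ deciding~$h$ this is $k$ distinct decisions, and it remains to install one more correct process that decides the remaining low value~$\valv$. Here one uses that $i$ sees~$\valv$ for the first time at~$m$: the node~$\node{p,m-1}$ on the message chain delivering~$\valv$ to~$\node{i,m}$ has seen only the low value~$\valv$ (its view is contained in that of~$\node{i,m}$), and $\node{p,m}$ is hidden from~$\node{i,m}$, so $p$ can be made correct in~$\beta$ and used to carry~$\valv$ past time~$m$.

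The crux---and where I expect the real difficulty---is showing that this last process \emph{decides}~$\valv$ rather than some high value, which is exactly the assertion of \cref{two-face} again. The proof is therefore by induction, most naturally a \emph{downward} induction on~$m$: the range of~$m$ for which Conditions~3--4 can hold is bounded by the crash bound~$\tee$ (maintaining a hidden capacity of $k-1$ up to time~$m$ forces $\gtrsim(k-1)m$ crashes, cf.\ the bound in \cref{k-set-correct}), so the base case---$m$ too large for Conditions~1--4 to be simultaneously satisfiable---is vacuous. In the inductive step one routes~$\valv$, via the channel through~$p$ and one of the hidden high nodes~$j_b$, to a correct process~$q$ whose time-$m$ state is hidden from~$\node{i,m}$, arranging that~$q$ becomes low for the first time at a time~$m'>m$ with single low value~$\valv$ and---the delicate point---that $\node{q,m'}$ again satisfies all four hypotheses of \cref{two-face}: the~$j_b$'s must persist as $k$ hidden high nodes at times $m'-1,m'$ from~$\node{q,m'}$'s vantage, and the remaining $k-1$ hidden paths, suitably extended, must certify $\HC{q,m'}\ge k-1$. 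The inductive hypothesis then forces~$q$ to decide~$\valv$, so that in $P[\beta]$ the $k+1$ distinct values $\{h,w_1,\dots,w_{k-1},\valv\}$ are all decided by correct processes---the desired contradiction. The main obstacles are thus (i) carrying out this self-referential construction so that all of Conditions~1--4 transfer verbatim to~$q$ at the later time~$m'$; (ii) keeping $\node{i,m}$'s view, and the views and decisions of all the forced deciders, unchanged while keeping everyone involved correct; and (iii) respecting the global bound of~$\tee$ crashes throughout. Establishing that the hidden paths, the hidden high nodes, and the channel through~$p$ can be chosen pairwise disjoint and extended to~$q$'s horizon---so that all of this still fits within~$\tee$ crashes---is the heart of the combinatorial analysis, and is precisely where \cref{exist-hidden-channels} and its companion structural lemmas are used.
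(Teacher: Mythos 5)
There are two genuine gaps here, and they sit at the heart of the lemma. First, your claim that a hidden-path endpoint ``has seen only the single low value $w_b$ and is therefore forced by \Validity\ and the low-decides assumption to decide $w_b$'' does not follow. The low-decides assumption forces a decision \emph{by} the time the process becomes low, and \Validity\ (via Knowledge of Preconditions) restricts the decision to values the process has seen --- but in general these include \emph{high} values (its view apart from $w_b$ is contained in that of the high node $\node{i,m-1}$, which already contains $i$'s own high initial value), so nothing prevents the endpoint from deciding $k$. Ruling out a high decision for a process that knows exactly one low value is precisely the assertion of \cref{two-face} itself; you recognize this circularity for the carrier of $\valv$ but not for the carriers of $w_1,\dots,w_{k-1}$. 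The paper resolves it by an \emph{upward} induction on $m$ (base case $m=0$): it verifies that each time-$(m-1)$ hidden endpoint itself satisfies Conditions 1--4 at time $m-1$ and invokes the induction hypothesis to force its decision. Second, even granting those decisions, they do not yet contradict \kAgreement: in any run indistinguishable to $\node{i,m}$, a time-$(m-1)$ node hidden from $\node{i,m}$ must belong to a process that crashes in round $m$ (a correct process would be seen by $\node{i,m}$), and in the \emph{nonuniform} problem decisions of faulty processes do not count. This is exactly why Condition 4 supplies $j_1,\dots,j_k$, whose time-$m$ states are hidden and who can therefore be made correct; the paper's main combinatorial work is a sequence of $k$ run modifications that feed each $j_b$ a suitable subset of the faulty endpoints' round-$m$ messages and force the $j_b$'s, via \kAgreement\ and \Validity\ applied to auxiliary runs, to collectively decide all $k$ low values at time $m$. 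Some version of this transfer step is indispensable.

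Your downward induction is also shaky as a scaffold. Its step requires building, in a run with the same crash budget, a process $q$ at time $m'>m$ that again satisfies $\HC{q,m'}\ge k-1$ together with Condition 4; maintaining the hidden structure for one more round costs on the order of $k$ further crashes, and near the largest $m$ for which Conditions 1--4 are satisfiable the budget $\tee$ is already exhausted --- so the inductive step fails exactly where the vacuous base case stops helping. The paper's upward induction needs no extra crashes (all modifications in its step either suppress crashes or reroute messages of processes that already crash), and is the natural way to close the circularity you identified.
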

\noindent
We note that for
$k=3$, the nodes $i^2_b$ in \cref{fig-hidden-capacity} indeed meet the requirements of \cref{two-face} at time $m=2$. (E.g., for $i^2_1$, we may take the processes called $i,i^2_2,i^2_3$ in \cref{fig-hidden-capacity} to serve as the processes $j_1,\ldots,j_k$ in the statement of the \lcnamecref{two-face}.)
Given \cref{two-face}, using reasoning similar to our discussion above regarding the decision rules in $\OptMink$, we conclude that
a high process with hidden capacity at least $k$ (such as process~$i$ at time $m=2$ in \cref{fig-hidden-capacity}, for $k=3$) cannot decide in
any protocol $P$ that dominates $\OptMink$
without risking violating \kAgreement\ (see \cref{k-set-cant-decide-before} in \cref{sec-proofs-k-set}), from which the
unbeatability
of $\OptMink$ follows (indeed, all undecided nodes in $\OptMink$ are high
and have hidden capacity at least $k$,
and are therefore undecided under $P$ as well).

\begin{theorem}
\label{thm:OptMink}
$\OptMink$ is an unbeatable protocol for non\-uniform $k$-set consensus in the crash failure model.
\end{theorem}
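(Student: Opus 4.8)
The plan is to derive the theorem from \cref{k-set-correct} and \cref{two-face}. By \cref{k-set-correct}, $\OptMink$ solves nonuniform $k$-set consensus, so it suffices to show that no protocol $Q$ that solves nonuniform $k$-set consensus strictly dominates $\OptMink$. I would fix such a $Q$ with $Q\dom\OptMink$ and show $\OptMink\dom Q$, i.e.\ that no process ever decides strictly earlier in $Q$ than in $\OptMink$, which precludes strict domination. Note first that since lowness and hidden capacity of a node depend only on the communication graph, they agree in $Q[\alpha]$ and $\OptMink[\alpha]$ for every adversary $\alpha$; and since $\OptMink$ makes a process decide at the first time it becomes low and $Q\dom\OptMink$, any process that is low at time $m$ has already decided by time $m$ in $Q$ --- so the standing hypothesis of \cref{two-face} holds for $Q$.

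The crux is the following claim (which is \cref{k-set-cant-decide-before}): for every adversary $\alpha$, every node $\node{i,m}$ that is high and satisfies $\HC{i,m}\ge k$ is undecided in $Q[\alpha]$. To prove it I would, starting from $\HC{i,m}\ge k$ and exactly as in \cref{exist-hidden-channels}, construct an adversary $\alpha'$ indistinguishable from $\alpha$ to $\node{i,m}$ in which $k$ pairwise-disjoint hidden paths w.r.t.\ $\node{i,m}$ deliver the $k$ distinct low values $0,1,\ldots,k-1$ to $k$ distinct processes $j_0,\ldots,j_{k-1}$, with all remaining initial values equal to $k$ and with $i$ and all of $j_0,\ldots,j_{k-1}$ correct in $\alpha'$. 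Each $\node{j_b,m}$ then meets the four hypotheses of \cref{two-face} at time $m$: $j_b$ is low for the first time, has seen the single low value $b$, has hidden capacity $\ge k-1$ (the other $k-1$ paths witnessing this), and the $k$ processes $i$ and $\{j_{b'}\colon b'\neq b\}$ have high time-$(m-1)$ nodes whose time-$m$ successors are hidden from $\node{j_b,m}$. Hence by \cref{two-face} each $j_b$ decides on $b$ at time $m$, so $k$ correct processes decide on $k$ distinct values, and \kAgreement\ forbids every correct process from deciding on a value outside $\{0,\ldots,k-1\}$. But $\node{i,m}$ is high, so $i$ has seen no value below $k$ and therefore cannot decide on any such value (by \Validity, via the Knowledge of Preconditions principle). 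As $i$ is correct in $\alpha'$, it must be undecided at $\node{i,m}$ in $Q[\alpha']$, and since a process's decision at a node is a function of its view (all protocols here being full-information), $i$ is undecided at $\node{i,m}$ in $Q[\alpha]$ as well.

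Given this claim the rest is bookkeeping. Fix any adversary $\alpha$ and any process $i$ that decides in $Q[\alpha]$, at time $m_i$, say. If $i$ did not decide by time $m_i$ in $\OptMink[\alpha]$, then the decision rule of $\OptMink$ fails to fire for $i$ at every time $\ell\le m_i$, which by the form of that rule means exactly that $\node{i,\ell}$ is high and satisfies $\HC{i,\ell}\ge k$ for every $\ell\le m_i$. Applying the claim at $\ell=m_i$ shows that $i$ is undecided at $\node{i,m_i}$ in $Q[\alpha]$, contradicting the choice of $m_i$. Hence $i$ decides by time $m_i$ in $\OptMink[\alpha]$. Since $\alpha$ and $i$ were arbitrary, $\OptMink\dom Q$, so $Q$ does not strictly dominate $\OptMink$; and since $Q$ was an arbitrary solver of the problem, $\OptMink$ is unbeatable.

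The heart of the difficulty is \cref{two-face} together with the construction that feeds it: one must simultaneously route $k$ disjoint hidden paths carrying distinct low values to distinct processes, keep $i$ high and correct, ensure each $j_b$ first becomes low precisely at time $m$ with a single low value, and make $i$ and the other $k-1$ path endpoints serve as the high-then-hidden witnesses demanded by \cref{two-face} for every $j_b$ at once --- and then verify all four of its hypotheses uniformly in $b$. Once \cref{two-face} is available, everything else reduces to the syntactic observation that the undecided nodes of $\OptMink$ are exactly its high nodes of hidden capacity $\ge k$, together with the \kAgreement/Knowledge-of-Preconditions argument above showing that any dominating protocol must leave such nodes undecided too.
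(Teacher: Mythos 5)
Your proposal is correct and follows essentially the same route as the paper: it reconstructs \cref{k-set-cant-decide-before} (high node with hidden capacity $\ge k$ cannot decide in any dominating protocol) via the \cref{exist-hidden-channels} construction and \cref{two-face}, and then derives unbeatability from that lemma together with \cref{k-set-correct}, exactly as the paper does.
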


We stress that $\OptMink$ is implementable in such a way that each process sends
each other process $O(n\log n)$ bits throughout the run, and each process requires $O(n)$ local steps in every round (see \cref{sec-com-eff}).
Thus, unbeatability for $k$-set consensus is attainable at a modest price.

\subsubsection{Last-decider unbeatability}
In \cite{AYY-DISC} the authors also consider  a variation on the notion of unbeatability, called \defemph{last-decider unbeatability}, which compares runs in terms of the time at which the last correct process decides.
This notion neither implies, nor is implied by, unbeatability as defined above.
Interestingly, $\OptMink$ is unbeatable in this sense as well (see \cref{sec-last-decider}):

\begin{theorem}
\label{thm:last-decider}
$\OptMink$ is last-decider unbeatable for non\-uniform $k$-set consensus in the crash failure model.
\end{theorem}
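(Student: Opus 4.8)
The plan is to reduce \cref{thm:last-decider} to a local statement, in the spirit of \cref{two-face} and \cref{k-set-cant-decide-before}, but working from the weaker hypothesis that last-decider domination provides. Recall that a protocol $Q$ \emph{last-decider dominates} $\OptMink$ if for every adversary $\alpha$ the last correct process to decide in $Q[\alpha]$ does so no later than in $\OptMink[\alpha]$, and does so \emph{strictly} if this time is strictly smaller for some adversary. Since a $Q$ that fails to last-decider dominate $\OptMink$ certainly does not do so strictly, it suffices to fix an arbitrary $Q$ that solves nonuniform $k$-set consensus \emph{and} last-decider dominates $\OptMink$, and to prove that for every adversary $\alpha$ the last correct decision in $Q[\alpha]$ occurs at the \emph{same} time $m$ as in $\OptMink[\alpha]$; only the inequality ``$\ge$'' is at stake. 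Fixing such an $\alpha$, by the decision rule of $\OptMink$ some correct process $i$ is undecided at time $m-1$ in $\OptMink[\alpha]$, so $i$ is high with $\HC{i,m-1}\ge k$; the goal becomes to show that $i$ is still undecided at time $m-1$ in $Q[\alpha]$.

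Suppose not: $i$ decides by time $m-1$ in $Q[\alpha]$, and (by \Validity, since $i$ is high at $m-1$) on some high value $v^\ast$. First I would use $\HC{i,m-1}\ge k$ and \cref{exist-hidden-channels} to build an adversary $\beta$ indistinguishable from $\alpha$ to $\node{i,m-1}$ that realizes $k$ pairwise-disjoint hidden paths with respect to $\node{i,m-1}$ whose heads carry the $k$ distinct low initial values $0,1,\dots,k-1$; let $j_0,\dots,j_{k-1}$ be the (distinct) endpoints, so that $\node{j_b,m-1}$ has seen only the single value $b$. Relabelling only the hidden heads leaves $\node{i,m-1}$'s view untouched, so $i$ still decides $v^\ast$ by $m-1$ in $Q[\beta]$ and $\exists v^\ast$ still holds; and the $j_b$'s and $i$ may be taken to be correct in $\beta$, which only removes crash failures and keeps $\beta\in\Crash(\tee)$. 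Now I would split on $Q$'s behaviour at the endpoints.

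In the first case, $Q$ decides \emph{every} $j_b$ by time $m-1$ in $Q[\beta]$; then, as $\node{j_b,m-1}$ has seen only $b$, \Validity\ forces $j_b$ to decide on $b$, so the $k+1$ correct processes $i,j_0,\dots,j_{k-1}$ decide on the $k+1$ distinct values $v^\ast,0,\dots,k-1$ in $Q[\beta]$, violating \kAgreement\ --- contradicting that $Q$ solves the problem. In the remaining case $Q$ leaves some $j_b$ undecided at time $m-1$ in $Q[\beta]$; for this $b$ I would construct a second adversary $\beta_b'$ reproducing $\node{j_b,m-1}$'s view exactly as in $\beta$ (so $Q$, being a full-information protocol, again leaves $j_b$ undecided at $m-1$) with $j_b$ correct, but in which every \emph{other} correct process becomes low --- or loses its hidden capacity --- by time $m-1$ (e.g.\ by having a faulty process broadcast a $0$ in round $1$ to everyone outside $j_b$'s hidden chain). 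Then the last correct decision in $\OptMink[\beta_b']$ is at time $m-1$, whereas $j_b$, correct and undecided at $m-1$, pushes the last correct decision in $Q[\beta_b']$ to time $\ge m$ --- contradicting that $Q$ last-decider dominates $\OptMink$. Either case gives a contradiction, so $i$ is undecided at $m-1$ in $Q[\alpha]$, as needed. (The degenerate cases with $m\le 1$ are handled by the same argument, starting the auxiliary processes with value $0$ rather than feeding it to them in round~$1$.)

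The hard part will be the construction of the adversaries --- chiefly of $\beta$. One must arrange that $\beta$ simultaneously leaves $\node{i,m-1}$'s view unchanged, realizes $k$ \emph{disjoint} hidden paths carrying the prescribed values, and admits $i$ and all of $j_0,\dots,j_{k-1}$ as correct processes without exceeding the failure bound $\tee$; this is exactly where the quantitative content of hidden capacity and the channel construction of \cref{exist-hidden-channels} are used, and the delicate point is keeping the perturbation confined, so that relabelling one hidden path's head alters only the view of the corresponding $j_b$ (and, for $\beta_b'$, that the last correct decision in $\OptMink[\beta_b']$ really is as early as claimed). The reason the case analysis on $Q$'s behaviour at the $j_b$'s cannot be replaced by a direct appeal to \cref{two-face} and \cref{k-set-cant-decide-before}, as in the proof of \cref{thm:OptMink}, is that a protocol that merely last-decider dominates $\OptMink$ need not decide a low process as soon as it becomes low --- which is precisely the hypothesis those results rely on --- so the split into the ``\kAgreement'' case and the ``last-decider'' case is what takes over that role.
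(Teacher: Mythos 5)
There is a genuine gap in your ``Case 1''. You claim that the endpoint $\node{j_b,m-1}$ of the $b$-th hidden path ``has seen only the single value $b$'', so that \Validity\ forces it to decide on $b$. This is false: by the construction of \cref{exist-hidden-channels}, the witness carrying value $b$ satisfies $\knownvals{i_b^{\ell},\ell}\setminus\set{b}\subseteq\knownvals{i,\ell}$, and since $i$ is high this set is a nonempty set of \emph{high} values (at the very least $j_b$ sees $i$'s own high initial value). Hence $\knownlows{j_b,m-1}=\set{b}$ but $\knownvals{j_b,m-1}\supseteq\set{b,k}$, and \Validity\ (via Knowledge of Preconditions) only restricts $j_b$ to decide on some value it has seen. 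Nothing stops all the $j_b$'s from deciding on the common high value $k$, in which case the processes $i,j_0,\ldots,j_{k-1}$ decide on at most two distinct values and no \kAgreement\ violation arises. Ruling out precisely this ``everyone agrees on a high value'' escape is the entire content of \cref{two-face}, whose proof is the technical heart of the paper; it cannot be replaced by a one-line appeal to \Validity. So your dichotomy does not close: Case 2 is sound, but Case 1 does not yield a contradiction.

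The repair is essentially the paper's route, and your Case 2 already contains its key idea. The paper first proves (\cref{last-dom-sufficient}) that \emph{any} $Q$ last-decider dominating $\OptMink$ must decide every process as soon as it becomes low, by exactly your Case-2 construction: exhibit an indistinguishable run in which a never-failing low witness is seen by everyone active, so that \emph{all} processes are low, the last decision in $\OptMink$ on that adversary is at the current time, and last-decider domination plus \Decision\ forces the process in question to decide. Once this is established for all low processes (not just your specific endpoints $j_b$ in the specific run $\beta$), the hypothesis of \cref{two-face} holds for $Q$, and \cref{k-set-cant-decide-before} applies verbatim: no high process with hidden capacity $\ge k$ decides in $Q$, so the last decision in $Q[\alpha]$ can never precede the last decision in $\OptMink[\alpha]$. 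Your closing remark — that one ``cannot'' appeal to \cref{two-face} because its hypothesis may fail for a merely last-decider-dominating $Q$ — has it backwards: the point of \cref{last-dom-sufficient} is that last-decider domination already \emph{implies} that hypothesis.
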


\subsection{A Constructive Combinatorial Approach vs.\texorpdfstring{\\}{ }A Nonconstructive Topological Approach}
\label{two-approaches}

In \cref{sec-proofs-k-set}, we provide
two proofs for \cref{two-face}. One is combinatorial and completely constructive,
devoid of any topological arguments, while the other
 is nonconstructive and topological, based on Sperner's lemma.
Our topological proof  reasons in a novel way about subcomplexes of the protocol complex.
Both proofs of
\cref{two-face} are by induction.
In both proofs, the induction hypothesis and the (proof of the) base case are the same.
The proofs differ only in
the induction step.

In the terminology of \cref{two-face},
both proofs start by showing that there exists a run that $\node{i,m}$ finds possible in which each of the $k-1$ nodes at
time $m-1$ that are hidden from $\node{i,m}$ holds a distinct low value other than $\valv$; therefore, by the induction hypothesis, had these $k-1$ nodes not failed, they would have each  decided  on its unique low value. Hence, each of the nodes $j_1,\ldots,j_k$ must consider it possible that all of the hidden nodes that it sees from time $m-1$ have actually decided and are correct. The challenge is to show, without any information about $P$ except for the initial assumption that it dominates $\OptMink$, that there must exist a run $r'$ of $P$ that $\node{i,m}$ finds possible (i.e., a way to adjust the messages received by $j_1,\ldots,j_k$ at time $m$) in which $j_1,\ldots,j_k$ collectively decide on all low values (including $\valv$) at time $m$; see \cref{fig-two-face}.\begin{figure}[t]
    \centering
        \subfigure[$r$, as seen by $\node{i,2}$.]{
            \label{fig-two-face:first}
	    \includegraphics[height=5.7cm]{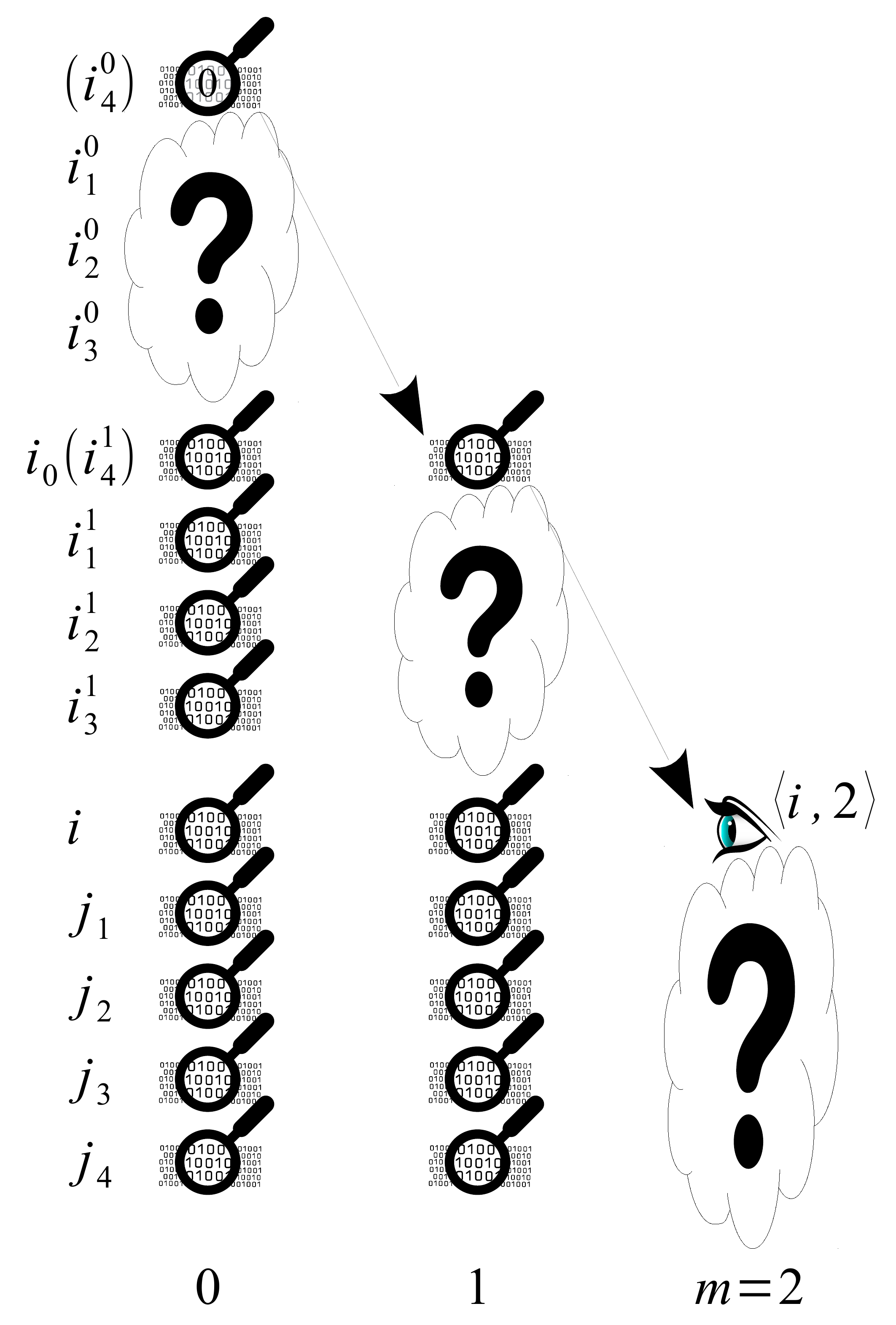}
        }\quad
        \subfigure[The run $r'$.]{
            \label{fig-two-face:second}
	    \includegraphics[height=5.7cm]{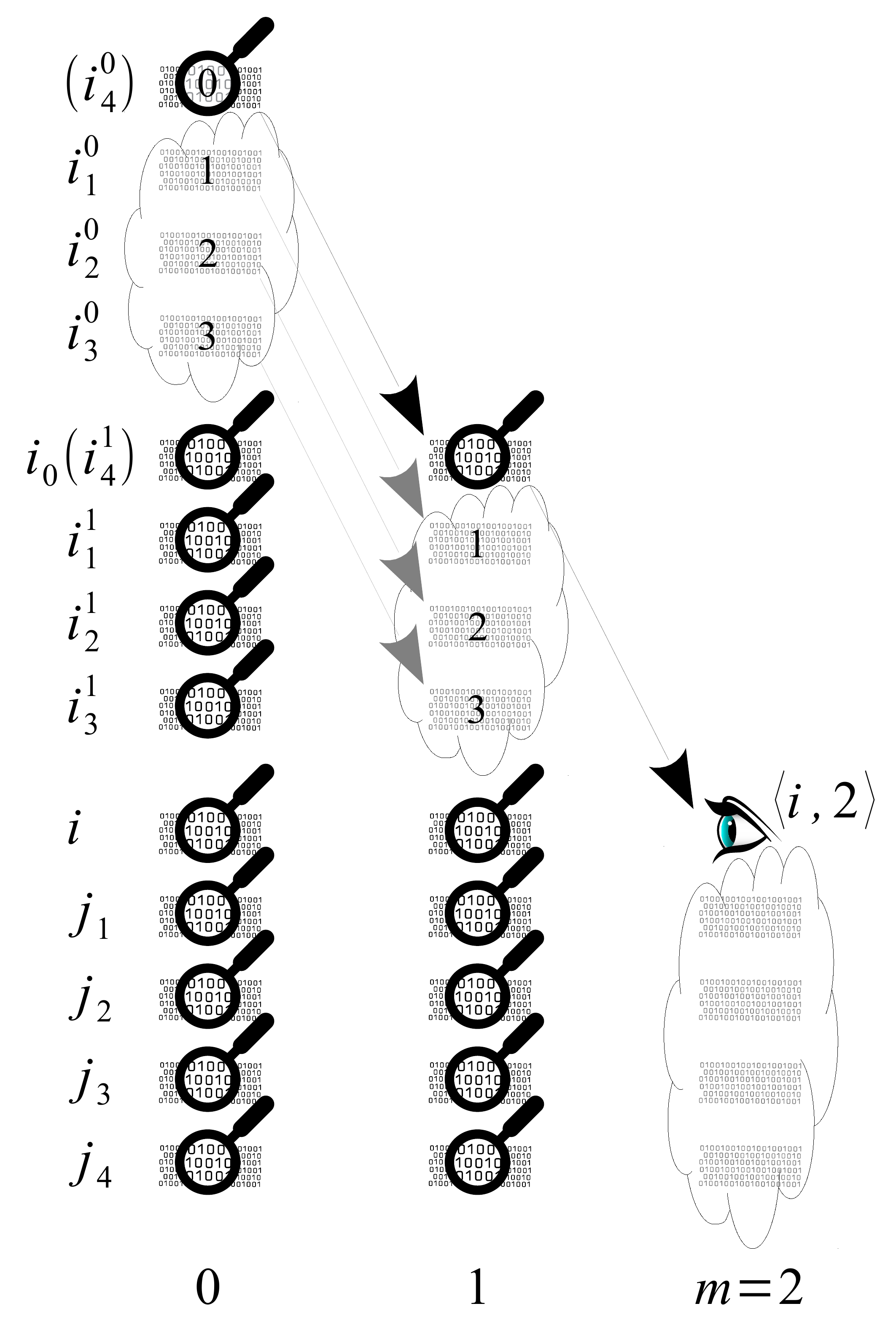}
        }\quad
        \subfigure[$r'$, as seen by $\node{i_3^1,1}$. The induction hypothesis dictates $i_3^1$ decides on $3$ at~$1$.]{
            \label{fig-two-face:third}
	    \includegraphics[trim={0 0 6.5cm 0},clip,height=5.7cm]{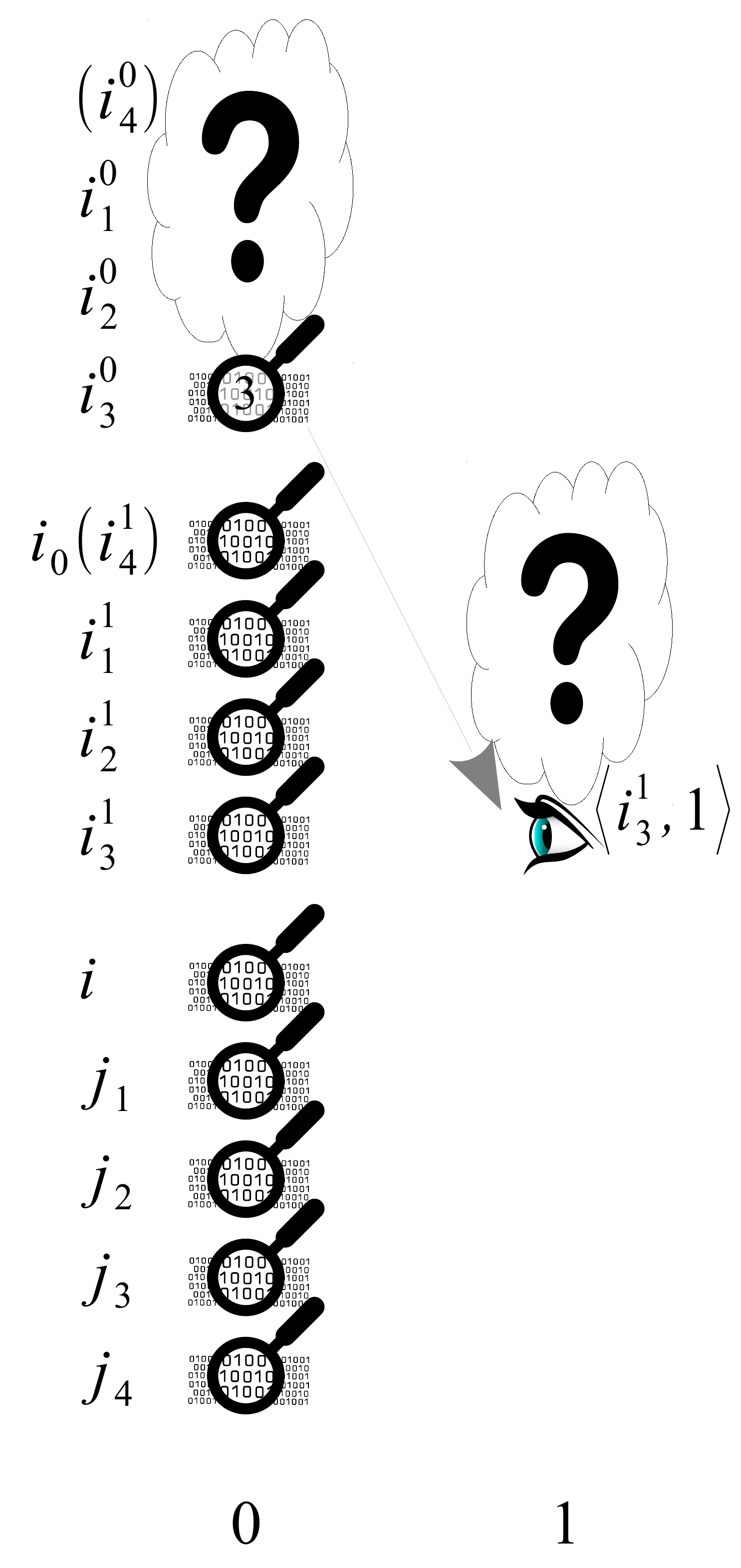}
        }\quad
        \subfigure[We aim to adjust the messages received by $j_1,\ldots,j_4$ at $2$ in $r'$, so that they collectively decide on all low values.]{
            \label{fig-two-face:fourth}
	    \includegraphics[height=5.7cm]{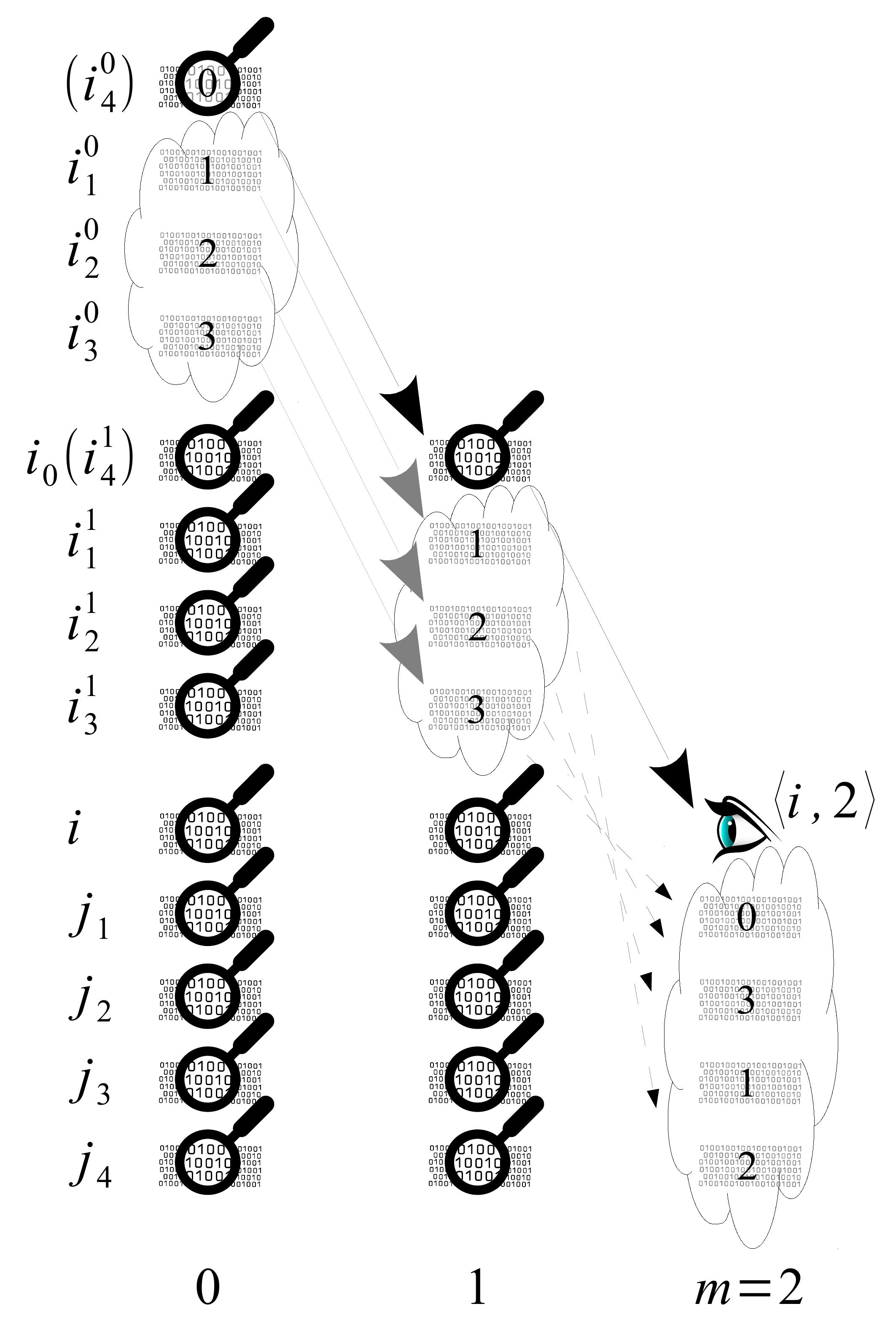}
        }
        \vspace{-0.3cm}
        \caption{
            The proof strategy for the induction step of \cref{two-face} (for $m=2$, $k=4$).
        }
	\label{fig-two-face}
\end{figure}
 Therefore, $i$ cannot decide on a high value without violating \kAgreement\ (and since it must decide, it must do so on a low value, and therefore on the only low value that is has seen, namely $\valv$).

Our combinatorial proof for the induction step constructively builds such a possible run $r'$ as required (providing a ``recipe'' for how to adjust the messages received by $j_1,\ldots,j_k$ at time $m$ so that they collectively decide on all low values).
In contrast, our topological proof uses Sperner's
lemma to show that if process $\ang{i,m}$ does not decide on a low value, then there must exist a run of the protocol that violates \kAgreement\
(i.e., a run $r'$ as described above).
It is interesting that the topological proof essentially shows that
$\ang{i,m}$ is forced to decide on a low value in the run $r$ because the star complex, denoted $\Star(\ang{i,m}, \cP_m)$,
of the node~$\ang{i,m}$ in the protocol complex $\cP_m$ of the protocol~$P$ at time $m$, is $(k-1)$-connected.
Intuitively,
$\Star(v, {\cal P}_m)$
is the ``part'' of $\cP_m$ containing all executions
that are indistinguishable to $\ang{i,m}$.
That $\Star(\ang{i,m}, \cP_m)$ is $(k-1)$-connected is the reason why the proof can map a subdivision of
$k$-simplexes
to process states; indeed, the subdivision is mapped to a subcomplex of $\Star(\ang{i,m}, \cP_m)$.
Therefore, $\ang{i,m}$ has no other choice than to decide on a low value,
because if it does not do so, then its decision induces a Sperner coloring,
which ultimately (together with the connectivity of $\Star(\ang{i,m}, \cP_m)$)  implies that the \kAgreement\ property is violated.

It is worth noticing that in this topological analysis we only
care about the connectivity of a proper subcomplex of the protocol complex,
contrary to all known time-complexity lower-bound proofs~\cite{GHP,HRT98}
for $k$-set consensus, which care about the
connectivity of the whole protocol complex
in a given round.
While connectivity of the whole protocol complex is the ``right'' thing to consider for lower-bound proofs about when
\emph{all} processes can decide,
we show that for proving unbeatability, i.e., when
concerned with the time at which a \emph{single}
process can decide,
the ``right'' thing to consider is the connectivity of just
a subcomplex
(the star complex of a given process state).

This analysis sheds light on the open question posed by Guerraoui and Pochon in~\cite{GP09}
asking for extensions to previous topology techniques that deal with
optimality of protocols.
In summary, while all-decide lower bounds have to do with
the whole protocol complex (e.g.~\cite{HRT98}),
optimal-single-decision lower bounds have to do
with just subcomplexes of the protocol complex. Our topological proof of unbeatability
here is the first proof that we are aware of that makes this distinction.

Finally, we emphasize
that the connectivity properties of the
star
complex $\Star(\ang{i,m}, \cP_m)$
are due to
the hidden capacity of $\ang{i,m}$ in the hypothesis of \cref{two-face}.
Indeed,
one can formally relate
the connectivity of $\Star(\ang{i,m}, \cP_m)$ to the hidden capacity of $\ang{i,m}$,
as
we now show.

\begin{proposition}
\label{hidden-capacity-connectivity}
Let ${\cal P}_m$ be the $m$-round protocol complex
containing all $m$-round executions of
a \fip\ $P$.
If $\ang{i,m}$ is a vertex of ${\cal P}_m$
whose view corresponds to a local state with hidden capacity
at least $k$ in each of the $m$ rounds,
then the star complex $\Star(\ang{i,m}, {\cal P}_m)$ of $\ang{i,m}$ in ${\cal P}_m$ is $(k-1)$-connected.
\end{proposition}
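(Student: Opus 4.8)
\textbf{Proof proposal for \cref{hidden-capacity-connectivity}.}
The plan is to construct, for each dimension $q \le k-1$, an explicit contraction of any continuous map of the $q$-sphere into $\Star(\ang{i,m},{\cal P}_m)$, or equivalently (and more in keeping with the combinatorial flavor of the paper) to show that every map of the boundary of a $(q+1)$-simplex from a sufficiently fine subdivision into the star complex extends over the whole simplex. The key combinatorial input is the ``disjoint hidden paths'' observation already invoked in the text (the analogue of \cref{exist-hidden-channels}): hidden capacity at least $k$ at $\ang{i,m}$ means that in every round $\ell \le m$ there are at least $k$ nodes hidden from $\ang{i,m}$, and these can be grouped into $k$ disjoint hidden paths so that the scheduler has $k$ ``free'' channels it can independently open or leave closed at every round, without changing the view of $\ang{i,m}$. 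This gives the star complex enough ``room'' to interpolate between any finite collection of views that $\ang{i,m}$ finds possible.

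The main steps, in order, are as follows. First I would fix the standard representation of $\Star(\ang{i,m},{\cal P}_m)$ as the subcomplex of ${\cal P}_m$ spanned by all global states (full-information views of all $n$ processes) of $m$-round executions that are indistinguishable to $\ang{i,m}$, i.e.\ in which $\ang{i,m}$'s view is exactly its view in $r$; a vertex is a pair $\ang{j,m}$ together with such a view, and a simplex is a mutually-compatible set of such vertices. Second, I would identify the $k$ disjoint hidden paths $H_1,\dots,H_k$ w.r.t.\ $\ang{i,m}$ guaranteed by the hidden-capacity-$\ge k$ hypothesis, and observe that for any function $\sigma$ assigning to each path $H_b$ and each round $\ell$ a bit (``deliver along $H_b$ in round $\ell$ or not''), there is an execution $r_\sigma$, indistinguishable from $r$ to $\ang{i,m}$, realizing these choices; moreover initial values on the sources of the $H_b$ are also free parameters. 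Third, I would use these free parameters as coordinates: given a triangulated $q$-sphere ($q \le k-1$) mapped simplicially into $\Star(\ang{i,m},{\cal P}_m)$ after subdivision, I would use a Mayer--Vietoris / nerve-style argument, or directly the iterated-subdivision technique of the topological proof of \cref{two-face}, to fill it in — the point being that any $q+1 \le k$ mutually-incompatible ``demands'' on the hidden part of the execution can be simultaneously satisfied by routing them through the $q+1 \le k$ distinct available hidden paths, so the relevant nerve is a full simplex and hence contractible. Fourth, I would assemble these fillings, for $q = 0,1,\dots,k-1$, into the statement that $\Star(\ang{i,m},{\cal P}_m)$ is $(k-1)$-connected, i.e.\ nonempty and with vanishing homotopy groups $\pi_q$ for all $q \le k-1$.

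The hard part will be step three: making precise the claim that the combinatorial ``degrees of freedom'' supplied by $k$ disjoint hidden paths translate into topological connectivity of the right dimension. The cleanest route is probably to reuse exactly the machinery developed for the topological proof of \cref{two-face} in \cref{sec-proofs-k-set} — there one already maps a subdivided $k$-simplex into (a subcomplex of) $\Star(\ang{i,m},{\cal P}_m)$ and controls it via a Sperner coloring; the same construction, run ``in reverse'' and for every dimension up to $k-1$ rather than just for dimension $k$, shows there is no obstruction to extending maps of low-dimensional spheres, which is precisely $(k-1)$-connectivity. An alternative, more self-contained argument is to exhibit an explicit conical/contracting structure: pick the execution $r_0$ in which none of the hidden channels $H_1,\dots,H_k$ deliver (all the hidden sources crash silently), show every vertex of the star complex is connected to $r_0$'s global state through a path that monotonically ``closes'' hidden channels, and promote this to higher dimensions by the disjointness of the paths. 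Either way, the one genuine subtlety to watch is compatibility: when several processes $j_1,\dots,j_{q+1}$ are simultaneously assigned distinct views, one must verify these views arise from a \emph{single} execution consistent with $r$ as seen by $\ang{i,m}$ — and this is exactly where disjointness of the $k$ hidden paths (so that the routing demands do not collide) is used, and why $k$ paths buy connectivity up to, but not beyond, dimension $k-1$.
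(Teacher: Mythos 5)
First, a caveat: the paper's own ``proof'' of \cref{hidden-capacity-connectivity} is itself only a sketch --- it defers entirely to the round-operator induction of~\cite{GHP,HRT98}, adapted from the whole protocol complex to the star subcomplex. Your core intuition (hidden capacity $\ge k$ yields $k$ disjoint hidden paths, i.e.\ $k$ independent scheduler degrees of freedom that $\ang{i,m}$ cannot observe, and these are what buy connectivity up to dimension $k-1$) is exactly the intuition the paper articulates. But your plan for the decisive step has a real gap. In step three you cover the relevant piece of $\Star(\ang{i,m},\cP_m)$ by subcomplexes indexed by routing choices on the hidden paths and conclude from ``the relevant nerve is a full simplex'' that the union is $q$-connected. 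The nerve lemma does not give you this: to deduce $q$-connectivity of a union from its nerve you must additionally show that every nonempty intersection of $j+1$ cover elements is itself $(q-j)$-connected (or contractible). Establishing the connectivity of those intersections is precisely the nontrivial content, and it is exactly what the round-by-round induction with round operators in~\cite{GHP,HRT98} supplies: one proves connectivity at round $m$ from connectivity of the corresponding (star) subcomplexes at round $m-1$, using that the hidden capacity persists in every earlier round. Your proposal has no induction on rounds, so the hypothesis ``hidden capacity at least $k$ \emph{in each of the $m$ rounds}'' is never actually used beyond round $m$ itself, which is a sign the argument cannot be complete as stated.

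Second, the suggestion to run the Sperner construction of the topological proof of \cref{two-face} ``in reverse'' has the implication pointing the wrong way. That proof maps one particular subdivided $k$-simplex \emph{into} a subcomplex of $\Star(\ang{i,m},\cP_m)$; the existence of such a map shows that certain simplexes are present, but $(k-1)$-connectivity requires that \emph{every} map of a $q$-sphere, $q\le k-1$, extend over the ball, which is a universally quantified statement that one embedded subdivision cannot deliver. Indeed, the paper explicitly presents the $(k-1)$-connectivity of the star complex as the \emph{reason} the subdivision can be mapped in, not as a consequence of it. Your alternative ``conical'' argument (contract everything to the execution in which all hidden channels are closed) is the more promising self-contained route, but as you note, promoting path-connectedness to higher $\pi_q$ ``by disjointness'' is again the whole difficulty; making it precise essentially reconstructs the round-operator induction. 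So the right fix is to commit to that induction: define the cover of $\Star(\ang{i,m},\cP_m)$ round by round, verify that hidden capacity $\ge k$ at every earlier round gives the required connectivity of all intersections at the previous round, and only then apply the nerve lemma at the top level.
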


\cref{hidden-capacity-connectivity} speaks about a local property in protocol
complexes, which turns out to be important for optimality analysis.
It is unknown whether the converse of this lemma is true, namely,
whether
$(k-1)$-connectivity of the star complex implies hidden capacity at least $k$ in every round.

\section{Uniform Set Consensus}
\label{subsec-uni-k}

We now turn to consider {\em uniform} $k$-set consensus.
In \cite{AYY-DISC}, the concepts of
hidden paths and hidden nodes
are used to present an unbeatable protocol $\UOptZ$ for ($1$-set) uniform consensus.
In this section, we present a protocol called $\UOptMink$ that generalizes the unbeatable $\UOptZ$ to~$k$ values (i.e.\ for $k=1$, it behaves exactly like $\UOptZ$).
As in the nonuniform case, the analysis of the case $k>1$ is significantly more
subtle and challenging; in fact, generalizing the protocol statement in the uniform case is
considerably
more involved than in the nonuniform case.

While in the protocol $\OptMink$ (which is defined in \cref{sec-set-consensus} for nonuniform consensus) an undecided process~$i$ decides on its minimal value if and only if
$i$ is low or has hidden capacity $<k$, in $\UOptMink$ we have to be more careful. Indeed, we must ensure that a value (even a low one) that process $i$ decides upon  will not ``fade away''.
This could happen if~$i$ is the only one knowing the value,  and if~$i$ crashes without successfully communicating it to active processes.
The case analysis here is also significantly more subtle than in the case of ($1$-set) uniform consensus. To phrase the exact conditions for decision, we begin with a definition; recall that~$\tee$ is
an upper bound on the
number of faulty nodes in any given run, and is available to all processes; while curiously the knowledge of~$\tee$ cannot be used to speed up $\OptMink$, it is indeed useful for speeding up decisions in the uniform case.

\begin{definition}[\cite{AYY-DISC}]
Let $r$ be a run and assume that $i$ knows of $\defemph{d}$ failures at time $m$ in run $r$. We say that $i$  \defemph{knows that the value~$\valv$ will persist} at time $m$ if (at least) one of the following holds.
\begin{itemize}
\item
$m\!>\!0$, and~$i$ both is active at time~$m$ and has seen the value~$\valv$ by time $m-1$, or
\item
$\node{i,m}$ sees at least $(\tee\!-\!\defemph{d})$ distinct nodes $\node{j,m-1}$ of time~$m-1$ that have seen the value~$\valv$.
\end{itemize}
\end{definition}
\noindent
As shown in~\cite{AYY-DISC}, if $i$ knows at time~$m$ that $\mv$ will persist, then all active nodes at time~$m+1$ will know $\exists\mv$.
Everyone's minimal value will be no larger than~$\mv$ from that point on.

In $\UOptMink$,  an undecided process~$i$ decides on a value $\mathtt{v}$ if and only if $\mathtt{v}$ is the minimal value s.t.\ $i$ knows that both
\begin{itemize}
\item
$\mathtt{v}$ was at some stage the min value known to a process that was low or had hidden capacity~$<k$, and
\item
$\mathtt{v}$ will be known to all processes deciding strictly after~$i$.
\end{itemize}
As mentioned above, designing $\UOptMink$ to check that these conditions hold requires
a careful
statement of the protocol, which we now present.

\vspace{\topsep}
\pagebreak
\noindent
\underline{{\bf Protocol}~$\UOptMink$}
 (for an undecided process~$i$ at time~$m$):\\[.6ex]
\begin{tabular}{lll}
\quad{\bf if} & $\hspace{-2em}\bigl(i$ is low or $\HC{i,m}<k\bigr)$ and $i$ knows that $\minval{i,m}$ will persist & {\bf then}~~$\decide(\minval{i,m})$ \\
\quad{\bf elseif} & $m>0$ and $\bigl(\node{i,m-1}$ was low or $\HC{i,m-1}<k\bigr)$ & {\bf then}~~$\decide(\minval{i,m\!-\!1})$ \\
\quad{\bf elseif} & $m=\lfloor \nicefrac{\tee}{k} \rfloor+1$ & {\bf then}~~$\decide(\minval{i,m})$
\end{tabular}
\vspace{\topsep}

The correctness and worst-case complexity of $\UOptMink$ are stated in \cref{u-k-solve}. The reader is referred to its proof in \cref{sec-uni-set-cons-proofs}
for a precise analysis of the decision conditions. We remark  that, roughly speaking, the second condition decides on $\minval{i,m-1}$ and not on $\minval{i,m}$ because the latter value is not guaranteed to persist, while the former value is.

\begin{theorem}
\label{u-k-solve}
$\UOptMink$ solves \defemph{uniform} $k$-set consensus in the crash failure model, and all processes decide by time $\min\bigl\{\lfloor \nicefrac{\tee}{k} \rfloor+1,\lfloor \nicefrac{f}{k} \rfloor +2\bigr\}$.
\end{theorem}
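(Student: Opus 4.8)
The plan is to verify the three requirements of uniform $k$-set consensus---\Validity, \Decision, and \UnikAg---together with the claimed decision-time bound, essentially all of the work going into \UnikAg. \Validity\ is immediate: in each of the three clauses of $\UOptMink$ a process decides $\minval{i,m}$ or $\minval{i,m\!-\!1}$, a value it has seen, hence a value held by some process. For the decision times, note first that the third clause forces every process still undecided at time $\lfloor \nicefrac{\tee}{k}\rfloor+1$ to decide then, so no decision ever occurs after that time. For the sharper bound I would invoke \cref{k-set-correct}: since in $\OptMink$ all processes decide by time $\lfloor \nicefrac{f}{k}\rfloor+1$ and its decision condition is exactly ``$i$ is low or $\HC{i,m}<k$'' (call such a node \emph{eligible}), in every run every node $\node{i,m}$ with $m\ge\lfloor \nicefrac{f}{k}\rfloor+1$ is eligible; and eligibility is stable in~$m$, since $\minval{i,m}$ is non-increasing in~$m$ and, from \cref{hiddencapacity}, so is $\HC{i,m}$ (``seen'' and ``guaranteed crashed'' only grow with~$m$, so the nodes hidden from $\node{i,m}$ only shrink). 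Hence a process still undecided at time $\lfloor \nicefrac{f}{k}\rfloor+2$ meets the second clause, with $m-1=\lfloor \nicefrac{f}{k}\rfloor+1$, and decides; with the third clause this yields decision by time $\min\{\lfloor \nicefrac{\tee}{k}\rfloor+1,\lfloor \nicefrac{f}{k}\rfloor+2\}$, and in particular \Decision.

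For \UnikAg\ I would show that the set~$D$ of \emph{all} values decided in the run---including values decided by processes that subsequently crash---has $|D|\le k$. Two observations set this up. First, by the case analysis above, every value in~$D$ equals $\minval{j,\ell}$ for an eligible node $\node{j,\ell}$: clauses one and three decide $\minval{i,m}$ at an eligible $\node{i,m}$ (in clause three, $\node{i,\lfloor \nicefrac{\tee}{k}\rfloor+1}$ is eligible since $\lfloor \nicefrac{\tee}{k}\rfloor+1\ge\lfloor \nicefrac{f}{k}\rfloor+1$), and clause two decides $\minval{i,m\!-\!1}$ at an eligible $\node{i,m\!-\!1}$. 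Second, every decision \emph{persists}: if $\node{i,m}$ decides $\valv$ by clause one, then $i$ knows $\valv$ will persist, so by the persistence property of~\cite{AYY-DISC} stated above all active nodes at time $m+1$---and hence, since a value once seen stays seen, all active nodes at all later times---have minimal value $\le\valv$; a clause-two decision on $\minval{i,m\!-\!1}$ persists for the same reason, because $i$ is active at~$m$ and saw $\minval{i,m\!-\!1}$ by time $m-1$, hence knows at~$m$ that $\minval{i,m\!-\!1}$ will persist. Now, if $k\notin D$ then $D\subseteq\{0,\ldots,k-1\}$ and $|D|\le k$; otherwise some node $\node{p,\ell_0}$ decides~$k$, so by the first observation it is eligible with $\minval{p,\ell_0}=k$, hence $\HC{p,\ell_0}<k$. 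By \cref{hiddencapacity} there is then a time $\ell^*\le\ell_0$ at which fewer than~$k$ nodes are hidden from $\node{p,\ell_0}$; let~$H$ collect their minimal values, so $|H|\le k-1$, and observe that every time-$\ell^*$ node \emph{seen} by $\node{p,\ell_0}$ has minimal value~$k$ (as $\minval{p,\ell_0}=k$ is maximal).

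I would then prove $D\subseteq H\cup\{k\}$, which gives $|D|\le k$. Let $\node{q,\ell}$ decide $\valv$. If $\ell<\ell^*$, it is a clause-one or clause-two decision (clause three acts only at time $\lfloor \nicefrac{\tee}{k}\rfloor+1$, which is $\ge\ell_0\ge\ell^*$), so it persists, forcing $\minval{p,\ell^*}\le\valv$; but $\node{p,\ell_0}$ sees $\node{p,\ell^*}$, so $\minval{p,\ell^*}=\minval{p,\ell_0}=k$ and $\valv=k$. If $\ell\ge\ell^*$, then $\valv=\minval{q,\ell'}$ with $\ell'\in\{\ell,\ell-1\}$ and $q$ alive at~$\ell'$. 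When $\ell'\ge\ell^*$, a full-information propagation argument applies: the Lamport chain realizing the minimal value of $\node{q,\ell'}$ passes through a node $\node{a,\ell^*}$ with $a$ alive at~$\ell^*$ (it relays the chain); since this node sees the chain's origin and is itself seen by $\node{q,\ell'}$, one gets $\minval{a,\ell^*}=\minval{q,\ell'}$; and an alive-at-$\ell^*$ node is not guaranteed crashed at $\node{p,\ell_0}$, so it is seen by it (minimal value~$k$) or hidden from it (minimal value in~$H$), whence $\valv\in H\cup\{k\}$. The one remaining case, $\ell'=\ell^*-1$ (a clause-two decision at $\ell=\ell^*$), follows from persistence of that decision, which gives $\minval{p,\ell_0}\le\minval{q,\ell^*-1}$ and hence $\valv=\minval{q,\ell^*-1}=k$.

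The delicate part is the \UnikAg\ argument, in particular making the two regimes around the ``critical'' time $\ell^*$ mesh: after $\ell^*$, full-information propagation confines minimal values---and therefore decisions, which are always on a current or previous minimal value---to the $(\le k)$-element set $H\cup\{k\}$, whereas strictly before $\ell^*$ one must instead exploit the persistence conditions built into $\UOptMink$ (via the persistence property of~\cite{AYY-DISC}) to rule out that any value below~$k$ was ever decided. Getting the clause-by-clause accounting of which decisions persist exactly right, and handling the boundary $\ell=\ell^*$ for clause-two decisions, are where the care lies; once \cref{k-set-correct} and that persistence property are in hand, what remains is Lamport-chain reasoning of the kind already used for $\OptMink$.
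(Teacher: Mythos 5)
Your proposal is correct and follows essentially the same route as the paper's proof: the decision-time bound comes from the hidden-capacity counting argument (the paper does the count directly rather than routing through \cref{k-set-correct}, but the content is identical), and \UnikAg\ is established by locating a critical time at which fewer than $k$ nodes are hidden from a high decider and using the persistence conditions built into $\UOptMink$ to show that every decided value below $k$ must be the minimal value of one of those hidden nodes. The paper organizes this as ``for every decided value $\valv$, at most $k-1$ smaller values are decided'' and orders the deciders' critical times via persistence, whereas you specialize to a decider of $k$ and split on decision time relative to $\ell^*$, but these are presentational variants of the same argument.
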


As shown by \cref{u-k-solve}, the protocol $\UOptMink$ meets the worst-case lower bound
 for uniform $k$-set consensus from~\cite{GHP,AGGT}.
 We emphasize that $\UOptMink$ strictly dominates all existing
uniform $k$-set consensus protocols in the literature~\cite{CHLT,GGP,GP09,RRT}.
Essentially, in each of these protocols, a process remains undecided as long as it discovers at least $k$ new failures in every round.
The fact that $\UOptMink$ is based on hidden capacity and hidden paths, rather than on the number of failures seen,
allows runs with much faster stopping times.
In particular, there exist runs in which all previous protocols decide after $\lfloor\nicefrac{\tee}{k}\rfloor+1$ rounds, and in $\UOptMink$ all processes decide by time~$2$; see~\cref{fig-large-margin} for an example.

\begin{figure}[t]
    \centering
        \subfigure[The run $r$. Messages successfully sent by crashing nodes are marked with an arrow. Starting from time $m=1$, all processes know $\exists3$ (i.e., $\exists k$), and so only knowledge of additional values is indicated.]{
            \label{fig-large-margin:first}
	    \includegraphics[height=6.6cm]{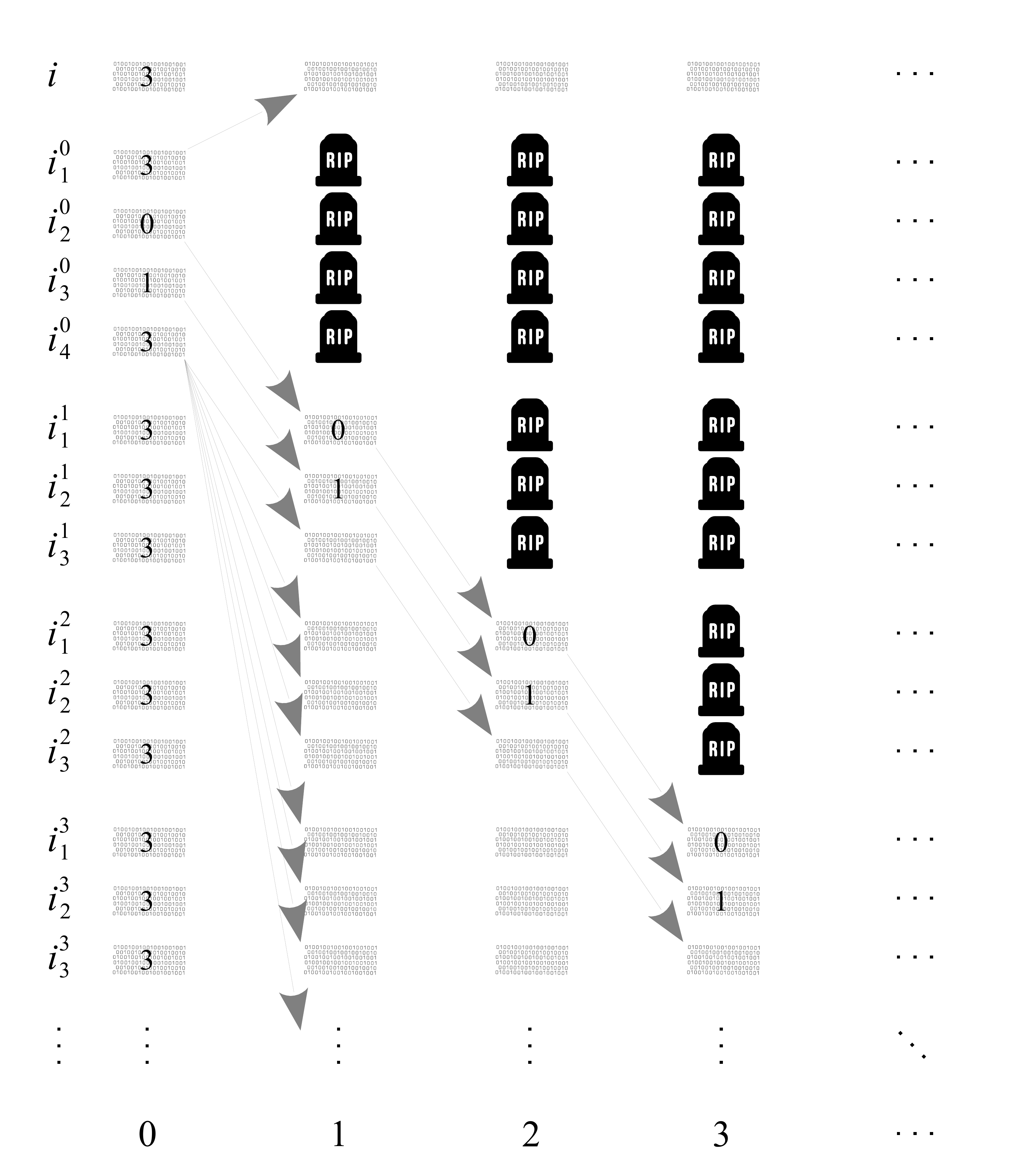}
        }\quad
        \subfigure[$r$, as seen by $\node{i,1}$, which has hidden capacity $k=3$.]{
            \label{fig-large-margin:second}
	    \includegraphics[trim={0 0 19.5cm 0},clip,height=6.6cm]{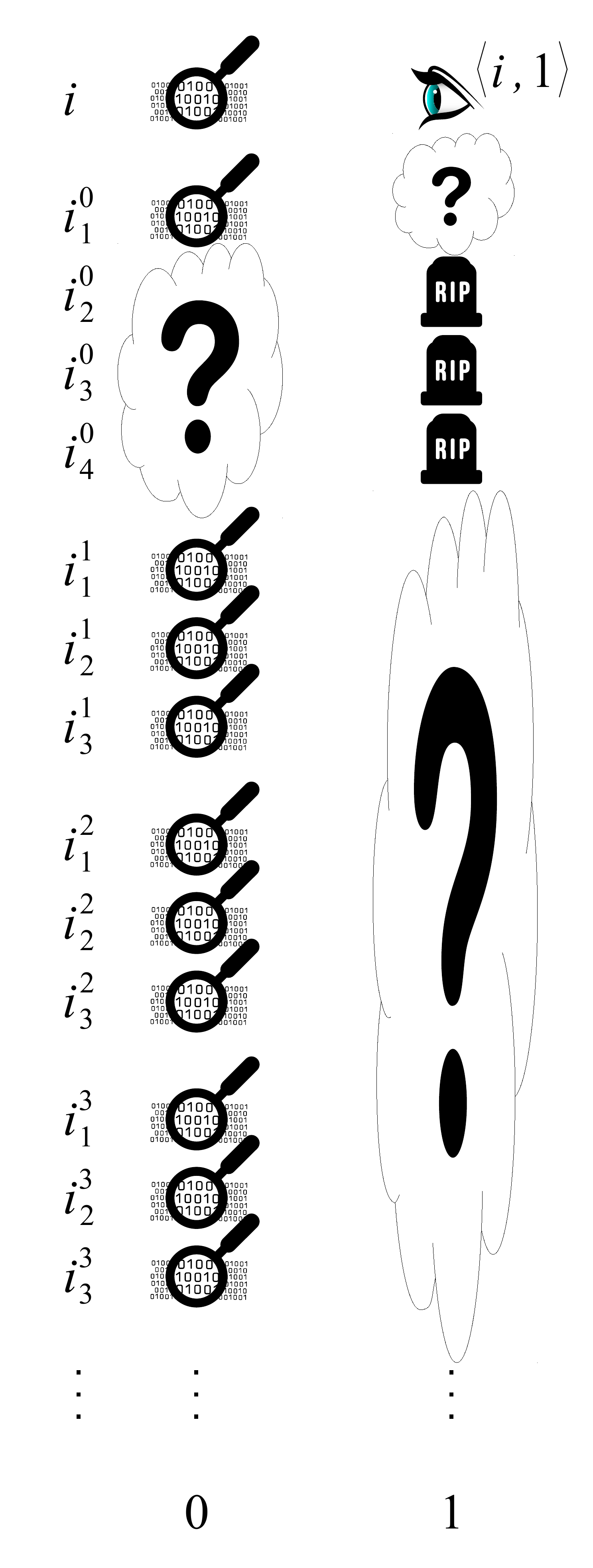}
        }\quad
        \subfigure[$r$, as seen by all other nodes that are nonfaulty at time~$1$. These nodes also have hidden capacity $k=3$. (These nodes are hidden from one another).]{
            \label{fig-large-margin:third}
	    \includegraphics[trim={0 0 19.5cm 0},clip,height=6.6cm]{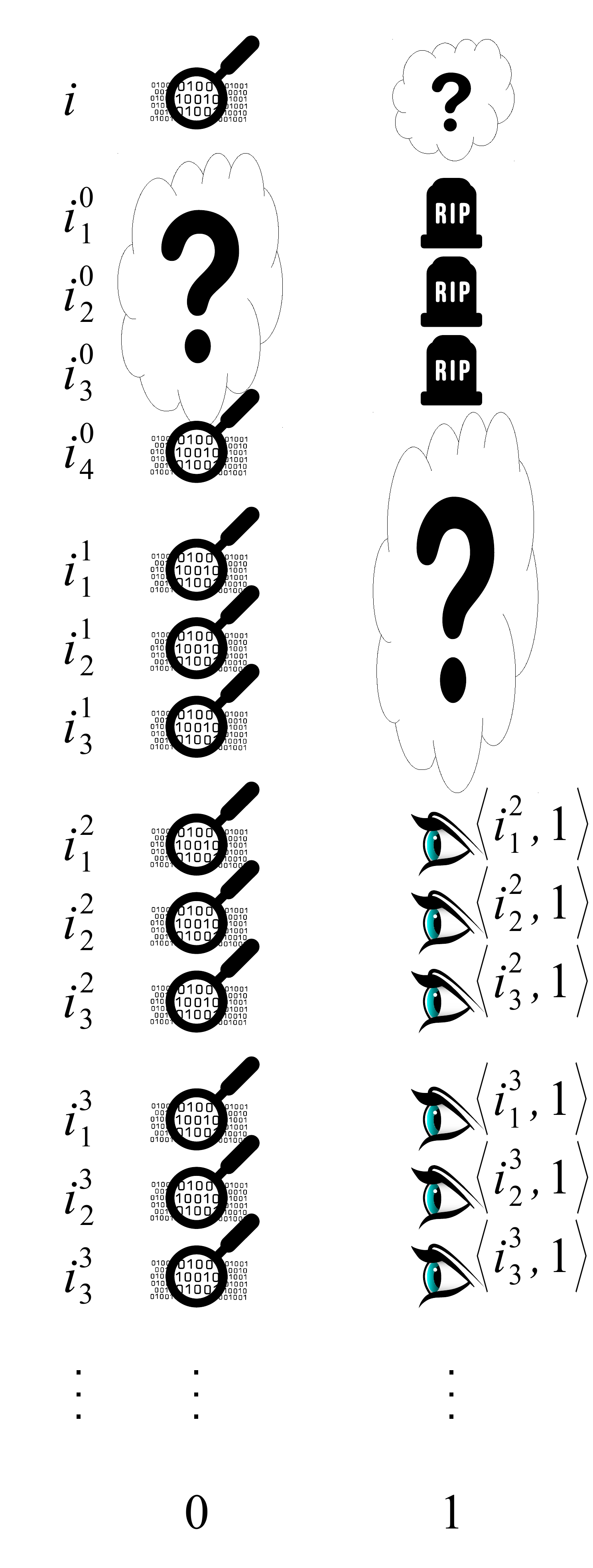}
        }\quad
        \subfigure[$r$, as seen by all nodes that are nonfaulty at time~$2$. All such nodes have hidden capacity $2<k$, and furthermore know that their min value (i.e., $k$) will persist; thus, they all decide (on $k$).]{
            \label{fig-large-margin:fourth}
	    \includegraphics[trim={0 0 13cm 0},clip,height=6.6cm]{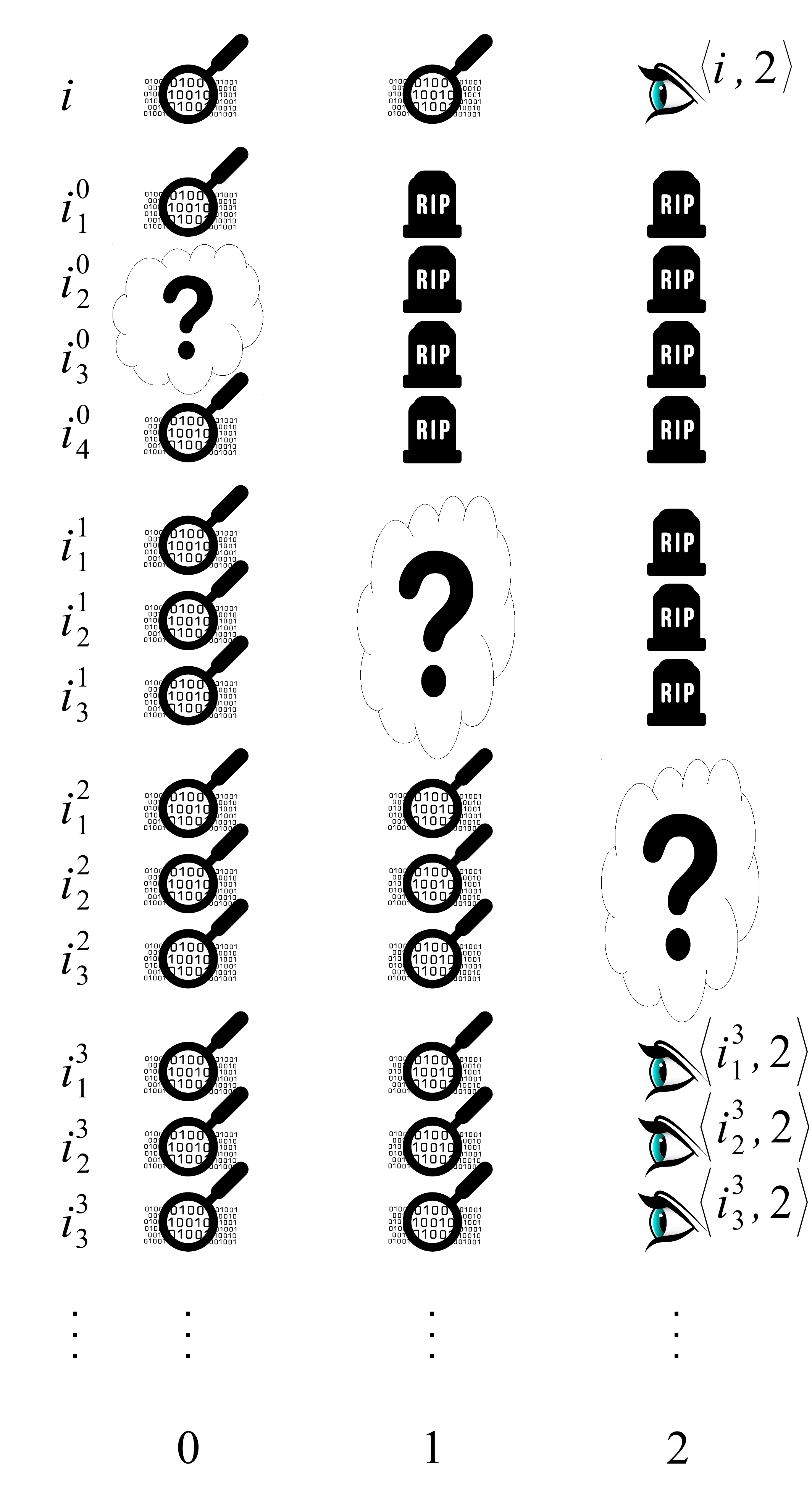}
        }
        \vspace{-0.3cm}
        \caption{
            A run (adversary) s.t.\ in $\UOptMink$ all nodes decide by time~$2$. Nonetheless, in this run, in all previously known protocols (particularly, in the ones of  \cite{CHLT,GGP,GP09,RRT}), the correct processes only at time~\mbox{$\lfloor\nicefrac{\tee}{k}\rfloor+1$} (assuming $f=\tee$), as every correct process sees $k$ new failures in each round for the first $\lfloor\nicefrac{\tee}{k}\rfloor$ rounds. (The \lcnamecref{fig-large-margin} illustrates the case of $k=3$ and arbitrarily high $\tee$.)
        }
	\label{fig-large-margin}
\end{figure}

At this point, however, we have been unable to resolve the following.

\begin{conjecture}
$\UOptMink$~is an \pdo\  uniform $k$-set consensus protocol in the crash failure model.
\end{conjecture}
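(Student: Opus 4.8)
The plan is to follow the template used for the unbeatability of $\OptMink$ in \cref{thm:OptMink}, adapted to the persistence-sensitive decision rule of $\UOptMink$. Fix a context and let $P$ be any protocol that solves uniform $k$-set consensus and dominates $\UOptMink$; the goal is to show that $P$ cannot strictly dominate $\UOptMink$. As in the nonuniform case, the natural (and strongest) way to do this is to prove that $P$ and $\UOptMink$ in fact induce the same decision time at every node in every run. Domination gives one direction for free: every node that decides in $\UOptMink[\alpha]$ has already decided in $P[\alpha]$. The crux is the converse, namely that under a dominating $P$ no node decides strictly earlier than it does in $\UOptMink$.

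The first step would be to establish the uniform analogue of the forced-decision lemma \cref{two-face}. Using the Knowledge of Preconditions principle and \Validity, exactly as in the review of $\OptZ$ in \cref{sec:unB-consensus}, a node that in $\UOptMink$ decides via the first clause --- it is low or has $\HC{i,m}<k$, and it knows $\minval{i,m}$ will persist --- must already be decided under $P$, and on a value that it knows will persist and that was at some earlier point the min value of a low or low-capacity node. The heart of the matter is then, as in \cref{sec-proofs-k-set}, a ``two-faced run'' construction: for a node $\node{i,m}$ that is high and has $\HC{i,m}\ge k$, one should exhibit a family of runs indistinguishable to $\node{i,m}$ --- built from $k$ disjoint hidden paths carrying distinct low values $0,\dots,k-1$, as in \cref{fig-hidden-capacity} --- in which the active processes collectively decide on all $k$ low values, so that any decision by $\node{i,m}$ would create a $(k{+}1)$-st decided value, violating \UnikAg; hence $\node{i,m}$ cannot decide, and in $\UOptMink$ it does not. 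The remaining two clauses would need separate, more elementary indistinguishability arguments: the second clause fires exactly when failing to decide would let a genuinely decided low value of a crashed predecessor fade away, and the third fires at time $\lfloor\nicefrac{\tee}{k}\rfloor+1$, at which point a counting argument shows that no hidden capacity of $k$ can persist through the next round, so that a dominating $P$ has also necessarily decided by then.

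The main obstacle --- and the reason this remains a conjecture --- lies precisely in the persistence-sensitive clauses, for which the nonuniform proof offers no guidance. In the nonuniform setting a node is simply \emph{permitted} to decide its low value the instant it becomes low, and unbeatability reduces to showing that high nodes \emph{cannot} decide; here, by contrast, $\UOptMink$ sometimes \emph{delays} a low decision until persistence is assured, and sometimes decides on $\minval{i,m-1}$ rather than the smaller $\minval{i,m}$. Ruling out a faster protocol requires excluding any $P$ that decides such a value one round early and then ``rescues'' it --- e.g.\ by having other processes adopt it --- without ever producing a $(k{+}1)$-st value in a run that $\node{i,m}$ finds possible. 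Showing that every such rescue strategy either fails to be a correct uniform $k$-set consensus protocol or already decides no earlier than $\UOptMink$ on some compensating run is the step I expect to resist a clean combinatorial or topological treatment. In particular, the topological route of \cref{hidden-capacity-connectivity} controls only when a \emph{high} node is blocked by $(k{-}1)$-connectivity of its star complex, not when a \emph{low} decision is blocked by the persistence requirement, so a genuinely new ingredient seems to be needed --- perhaps a refinement of hidden capacity that also tracks which already-decided values can still be lost. A plausible intermediate target, by analogy with \cref{thm:last-decider}, is to first settle whether $\UOptMink$ is \emph{last-decider} unbeatable, where the fade-away phenomenon is less disruptive because only the time of the last correct decision is compared.
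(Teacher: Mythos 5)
You have not proved this statement, and neither does the paper: it is stated explicitly as a \emph{conjecture}, and the authors say both in the introduction and in \cref{subsec-uni-k} that the unbeatability of $\UOptMink$ remains an open problem. There is therefore no proof in the paper to compare your proposal against. Your write-up is, by its own admission, a research plan rather than a proof --- the third paragraph concedes that the decisive step ``is the step I expect to resist a clean combinatorial or topological treatment'' --- so the verdict has to be that there is a genuine gap. To its credit, the gap you leave open is exactly the one that makes this a conjecture, and you locate it accurately: the nonuniform argument of \cref{two-face} and \cref{thm:OptMink} only shows that a \emph{high} node with hidden capacity at least $k$ is blocked, whereas $\UOptMink$ also \emph{delays} decisions of low (or low-capacity) nodes until persistence of $\minval{i,m}$ is assured, and sometimes decides on $\minval{i,m\!-\!1}$ rather than $\minval{i,m}$. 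Ruling out a dominating protocol that decides such a value one round earlier and arranges for other processes to ``rescue'' it is precisely what neither the two-faced-run construction nor the connectivity statement of \cref{hidden-capacity-connectivity} addresses.

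Two smaller remarks on the parts you do sketch. First, your opening move --- aiming to show that any dominating $P$ decides at \emph{exactly} the same time as $\UOptMink$ at every node --- is stronger than what unbeatability requires and stronger than what the nonuniform proof establishes; \cref{k-set-cant-decide-before} only shows that certain nodes \emph{cannot} decide under any correct protocol satisfying a forced-decision hypothesis, which suffices for $\dom$-minimality. For the uniform case it is far from clear that a dominating protocol must agree with $\UOptMink$ everywhere, and insisting on that may make the problem harder than necessary. Second, your counting argument for the clause at time $\lfloor\nicefrac{\tee}{k}\rfloor+1$ does match the stopping-time analysis in the proof of \cref{u-k-solve}, and your suggestion to first settle last-decider unbeatability (by analogy with \cref{thm:last-decider} and \cref{last-dom-sufficient}) is a sensible intermediate target --- but neither of these closes the conjecture, and you should present this material as a discussion of the open problem, not as a proof.
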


\vspace{-.7em}
\enlargethispage{.2em}

\section{Discussion}
\label{sec-discussion}

In this paper we have presented two main algorithmic results. The first one, $\OptMink$, is an unbeatable protocol for nonuniform $k$-set
consensus; this protocol has an extremely concise description.
Unbeatability \cite{HalMoWa2001,AYY-DISC} is an optimality criterion that formalizes the intuition that a given protocol cannot be improved upon;\footnote{We emphasize that the notion of ``improvement" captured by the notion of unbeatability studied in this paper is defined in terms of the times at which processes perform their decisions. This is distinct from their halting times, for example (although a process can safely halt at most one round after it decides). Optimizing decision times can come at a cost in communication, for example. Finally, of course, one can compare protocols in terms of more global properties such as average decision times (w.r.t.~appropriate distributions etc.). Another point to notice when considering unbeatability is that it is based on comparing the performance of different protocols on the same behaviors of the adversary. While we find this a reasonable thing to do in benign failure models such as crash and omission failures, it may be rather tricky in the presence of malicious (Byzantine) failures.
}
 this is significantly stronger than saying that a protocol is worst-case optimal.
Our second result is a protocol, $\UOptMink$, for uniform $k$-set consensus that strictly
beats
all known
protocols in the literature~\cite{CHLT,GGP,GP09,RRT};
notably,
in some executions, processes in our protocol can decide
much faster than in those protocols.
Whether our uniform $k$-set consensus protocol is unbeatable remains an open problem.
Both protocols are
efficiently implementable.

We have presented two distinct
proofs for
the unbeatability of our
nonuniform $k$-set consensus
protocol $\OptMink$.
Each proof gives a different perspective
of the unbeatability of the protocol.
The first proof is fully constructive and combinatorial, while the second relies on Sperner's lemma and is nonconstructive and topological.
The topological proof of \cref{two-face} is more than just a ``trick''
to prove the lemma.
In a precise sense,
the proof shows what a topological
analysis of unbeatable protocols is about.

The construction and analysis of both protocols, as well as the unbeatability of $\OptMink$, crucially depends on a new notion hidden capacity.
This notion
is a generalization of the notion of hidden path
introduced in~\cite{AYY-DISC},
which plays a similar role in the case of ($1$-set) consensus.
Derived from our topological unbeatability proof, we identified a connection between hidden capacity $k$ and
local $(k-1)$-connectivity of star subcomplexes of a protocol complex: roughly speaking, hidden capacity at least $k$ implies $(k-1)$-connectivity.
Whether the converse is true remains an open problem.
This connection sheds light on the open question in~\cite{GP09}
asking for extensions to previous topology techniques that deal with
optimality of protocols.
The full interplay between hidden capacity and topological reasoning (even beyond set consensus) is an interesting direction for future research.

\section{Acknowledgments}

\cref{fig-hidden-path,fig-hidden-capacity,fig-two-face,fig-large-margin} include visual elements designed by Freepik.

Armando Casta{\~n}eda is supported partially by a project PAPIIT-UNAM IA101015.
This research was partially done while Armando Casta{\~n}eda was at
the Department of Computer Science of the Technion,
supported by an Aly Kaufman post-doctoral fellowship.

Yannai Gonczarowski is supported by the Adams Fellowship Program of the Israel Academy of Sciences and Humanities; his work is supported by the European Research Council under the European Community's Seventh Framework Programme (FP7/2007-2013) / ERC grant agreement no.\ [249159], by ISF grants 230/10 and 1435/14 administered by the Israeli Academy of Sciences, and by Israel-USA Bi-national Science Foundation (BSF) grant number 2014389.

Yoram Moses is the Israel Pollak chair at the Technion; his work was supported in part by ISF grant 1520/11 administered by the Israeli Academy of Sciences.

\bibliographystyle{abbrv}
\bibliography{z}

\begin{thebibliography}{10}

\bibitem{AGGT}
D.~Alistarh, S.~Gilbert, R.~Guerraoui, and C.~Travers.
\newblock Of choices, failures and asynchrony: The many faces of set agreement.
\newblock {\em Algorithmica}, 62(1-2):595--629, 2012.

\bibitem{BorowskyG93}
E.~Borowsky and E.~Gafni.
\newblock Generalized {FLP} impossibility result for $t$-resilient asynchronous
  computations.
\newblock In {\em Proc.~25th ACM Symp.~on Theory of Computing}, pages 91--100,
  1993.

\bibitem{AYY-PODC-BA}
A.~Casta{\~n}eda, Y.~A. Gonczarowski, and Y.~Moses.
\newblock Brief announcement: Pareto-optimal solutions to consensus and set
  consensus.
\newblock In {\em Proc.~32nd ACM Symp.~on Principles of Distributed Computing},
  pages 113--115, 2013.

\bibitem{AYY-DISC}
A.~Casta{\~n}eda, Y.~A. Gonczarowski, and Y.~Moses.
\newblock Unbeatable consensus.
\newblock In {\em Proc.~28th International Symp.~on Distributed Computing},
  pages 91--106, 2014.
\newblock Full version available on arXiv.

\bibitem{CBS-uni}
B.~Charron-Bost and A.~Schiper.
\newblock Uniform consensus is harder than consensus.
\newblock {\em Journal of Algorithms}, 51(1):15--37, 2004.

\bibitem{Chaudhuri90}
S.~Chaudhuri.
\newblock Agreement is harder than consensus: Set consensus problems in totally
  asynchronous systems.
\newblock In {\em Proc.~9th ACM Symp.~on Principles of Distributed Computing},
  pages 311--324, 1990.

\bibitem{CHLT}
S.~Chaudhuri, M.~Herlihy, N.~A. Lynch, and M.~R. Tuttle.
\newblock Tight bounds for $k$-set agreement.
\newblock {\em Journal of the ACM}, 47(5):912--943, 2000.

\bibitem{Coan}
B.~Coan.
\newblock A communication-efficient canonical form for fault-tolerant
  distributed protocols.
\newblock In {\em Proc.~5th ACM Symp.~on Principles of Distributed Computing},
  pages 63--72, 1986.

\bibitem{DRS}
D.~Dolev, R.~Reischuk, and H.~R. Strong.
\newblock Early stopping in {B}yzantine agreement.
\newblock {\em Journal of the ACM}, 34(7):720--741, 1990.

\bibitem{Dutta-uni}
P.~Dutta, R.~Guerraoui, and B.~Pochon.
\newblock Tight bounds on early local decisions in uniform consensus.
\newblock In {\em Proc.~17th International Symp.~on Distributed Computing},
  pages 264--278, 2003.

\bibitem{DM}
C.~Dwork and Y.~Moses.
\newblock Knowledge and common knowledge in a {B}yzantine environment: crash
  failures.
\newblock {\em Information and Computation}, 88(2):156--186, 1990.

\bibitem{FHMV}
R.~Fagin, J.~Y. Halpern, Y.~Moses, and M.~Y. Vardi.
\newblock {\em Reasoning about Knowledge}.
\newblock MIT Press, 2003.

\bibitem{FLP}
M.~J. Fischer, N.~A. Lynch, and M.~S. Paterson.
\newblock Impossibility of distributed consensus with one faulty processor.
\newblock {\em Journal of the ACM}, 32(2):374--382, 1985.

\bibitem{GGP}
E.~Gafni, R.~Guerraoui, and B.~Pochon.
\newblock The complexity of early deciding set agreement.
\newblock {\em SIAM Journal on Computing}, 40(1):63--78, 2011.

\bibitem{GHP}
R.~Guerraoui, M.~Herlihy, and B.~Pochon.
\newblock A topological treatment of early-deciding set-agreement.
\newblock {\em Theoretical Computer Science}, 410(6-7):570--580, 2009.

\bibitem{GP09}
R.~Guerraoui and B.~Pochon.
\newblock The complexity of early deciding set agreement: How can topology
  help?
\newblock {\em Electr. Notes Theor. Comput. Sci.}, 230:71--78, 2009.

\bibitem{H86}
V.~Hadzilacos.
\newblock On the relationship between the atomic commitment and consensus
  problems.
\newblock In {\em Fault-Tolerant Distributed Computing}, pages 201--208, 1986.

\bibitem{HM1}
J.~Y. Halpern and Y.~Moses.
\newblock Knowledge and common knowledge in a distributed environment.
\newblock {\em Journal of the ACM}, 37(3):549--587, 1990.
\newblock A preliminary version appeared in {\em Proc.~3rd ACM Symp.~on
  Principles of Distributed Computing}, 1984.

\bibitem{HalMoWa2001}
J.~Y. Halpern, Y.~Moses, and O.~Waarts.
\newblock A characterization of eventual {B}yzantine agreement.
\newblock {\em SIAM Journal on Computing}, 31(3):838--865, 2001.

\bibitem{Herlihy2013}
M.~Herlihy, D.~N. Kozlov, and S.~Rajsbaum.
\newblock {\em Distributed Computing Through Combinatorial Topology}.
\newblock Morgan Kaufmann, 2013.

\bibitem{HMT11}
M.~Herlihy, Y.~Moses, and M.~R. Tuttle.
\newblock Transforming worst-case optimal solutions for simultaneous tasks into
  all-case optimal solutions.
\newblock In {\em Proc.~30th ACM Symp.~on Principles of Distributed Computing},
  pages 231--238, 2011.

\bibitem{HRT98}
M.~Herlihy, S.~Rajsbaum, and M.~R. Tuttle.
\newblock Unifying synchronous and asynchronous message-passing models.
\newblock In {\em Proc.~17th ACM Symp.~on Principles of Distributed Computing},
  pages 133--142, 1998.

\bibitem{Herlihy:1999}
M.~Herlihy and N.~Shavit.
\newblock The topological structure of asynchronous computability.
\newblock {\em Journal of the ACM}, 46(6):858--923, Nov. 1999.

\bibitem{KR-uni}
I.~Keidar and S.~Rajsbaum.
\newblock A simple proof of the uniform consensus synchronous lower bound.
\newblock {\em Information Processing Letters}, 85(1):47--52, 2003.

\bibitem{Moses-tark2015}
Y.~Moses.
\newblock Relating knowledge and coordinated action: The knowledge of
  preconditions principle.
\newblock In {\em Proc.~15th Conference on Theoretical Aspects of Rationality
  and Knowledge}, pages 207--216, 2015.

\bibitem{MT}
Y.~Moses and M.~R. Tuttle.
\newblock Programming simultaneous actions using common knowledge.
\newblock {\em Algorithmica}, 3:121--169, 1988.

\bibitem{RRT}
P.~R. Parv{\'e}dy, M.~Raynal, and C.~Travers.
\newblock Early-stopping {\it k}-set agreement in synchronous systems prone to
  any number of process crashes.
\newblock In {\em Proc.~8th International Conference on Parallel Computing
  Technologies}, pages 49--58, 2005.

\bibitem{Raynal04-uni}
M.~Raynal.
\newblock Optimal early stopping uniform consensus in synchronous systems with
  process omission failures.
\newblock In {\em Proc.~16th Annual ACM Symp.\ on Parallelism in Algorithms and
  Architectures}, pages 302--310. ACM Press, 2004.

\bibitem{SaksZ00}
M.~E. Saks and F.~Zaharoglou.
\newblock Wait-free $k$-set agreement is impossible: The topology of public
  knowledge.
\newblock {\em SIAM Journal on Computing}, 29(5):1449--1483, 2000.

\bibitem{WTC-uni}
X.~Wang, Y.~M. Teo, and J.~Cao.
\newblock A bivalency proof of the lower bound for uniform consensus.
\newblock {\em Information Processing Letters}, 96(5):167--174, 2005.

\end{thebibliography}

\appendix
\section{Knowledge}
\label{sec:know}

Our construction of unbeatable protocols is assisted and guided by a knowledge-based analysis, in the spirit of \cite{FHMV,HM1}. We now define only what is needed for the purposes of this paper, as the proofs in the appendix make formal use of knowledge
. For a comprehensive treatment, the reader is referred to~\cite{FHMV}.
Runs are dynamic objects, changing from one time point to the next. E.g., at one point process~$i$ may be undecided, while at the next it may decide on a value. Similarly, the set of initial values that~$i$ knows about, or has seen, may change over time. In addition, whether a process knows something at a given point can depend on what is true in other runs in which the process has the same information.
We will therefore consider the truth of facts at \defemph{points} $(r,m)$---time~$m$ in run~$r$, with respect to a set of runs~$R$ (which we call a \defemph{system}).
The systems we will be interested will have the form $R_P=R(P,\gamma)$ where $P$ is a protocol and $\gamma=\gamma(\Vals^n,\Fmodel)$ is the set of all adversaries that assign initial values from~$\Vals$ and failures according to~$\Fmodel$. We will write $(R,r,m)\sat A$ to state that fact~$A$ holds, or is satisfied, at $(r,m)$ in the system~$R$.

The truth of some facts can be defined directly.
For example, the fact $\existsv$ will hold at $(r,m)$ in~$R$
if some process had initial value~$\mathtt{v}$ in~$r$. We say that {\em (satisfaction of)} a fact~$A$ is \defemph{well defined} in~$R$ if
for every point $(r,m)$ with $r\in R$ we can determine whether or not $(R,r,m)\sat A$.
Satisfaction of~$\existsv$ is thus well defined. We will write $K_iA$ to denote that \defemph{process~$i$ knows~$A$}, and define:
\begin{definition}[Knowledge]
\label{def:know}
Suppose that~$A$ is well defined in~$R$. Then:

\begin{tabular}{r l c l}
$(R,r,m)$&\hskip-3mm$\sat K_iA$ & iff & $(R,r',m)\sat A$ ~~\mbox{for all~~} $r'\in R$~~\mbox{such that~~}
$r_i(m)=r'_i(m)$.
\end{tabular}
\end{definition}

Thus, if $A$ is well defined in~$R$ then \cref{def:know} makes $K_iA$ well defined in~$R$.
Note that what a process knows or does not know depends on its local state.
The definition can then be applied recursively, to define the truth of $K_jK_iA$ etc. Moreover, any boolean combination of well-defined facts is also well-defined. Knowledge has been used to study a variety of problems in distributed computing.
We will make use of the following  fundamental connection between knowledge and actions in distributed systems. We say that a fact~$A$ is a \defemph{necessary condition} for process~$i$ performing action~$\sigma$
(e.g., deciding
on an output value)
in~$R$ if
$(R,r,m)\sat A$ whenever $i$ performs $\sigma$ at a point $(r,m)$ of~$R$.
\begin{theorem}[Knowledge of Preconditions, \cite{Moses-tark2015}]
\label{thm:knowprec}
Assume that $R_P=R(P,\gamma)$ is the set of runs of a deterministic protocol~$P$.
If $A$ is a necessary condition for~$i$ performing~$\sigma$ in~$R_P$, then $K_iA$ is a necessary condition for~$i$ performing~$\sigma$ in~$R_P$.
\end{theorem}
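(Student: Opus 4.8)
The plan is to unwind the definition of knowledge (\cref{def:know}) and to lean on the determinism of~$P$. Observe first that $A$ must be well defined in $R_P$ for the hypothesis (and the statement) to be meaningful, so by \cref{def:know} the fact $K_iA$ is well defined in $R_P$ as well. It therefore suffices to show that $(R_P,r,m)\sat K_iA$ holds at every point $(r,m)$ of $R_P$ at which~$i$ performs~$\sigma$.

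So I would fix such a point $(r,m)$ and let $r'\in R_P$ be an arbitrary run with $r'_i(m)=r_i(m)$. The crux is the observation that, because $P$ is a deterministic protocol and --- per the model --- the outcome of~$i$'s round-$(m+1)$ local computation, as well as any decision~$i$ takes at time~$m$, is a function solely of~$i$'s local state $r_i(m)$ (the round-$m$ messages having already been folded into $r_i(m)$), the action taken by~$i$ at time~$m$ is determined by $r_i(m)$ alone. Hence, since $r'_i(m)=r_i(m)$ and $i$ performs~$\sigma$ at $(r,m)$, process~$i$ performs~$\sigma$ at $(r',m)$ as well. Because $A$ is a necessary condition for~$i$ performing~$\sigma$ in $R_P$, and $(r',m)$ is a point of $R_P$ at which~$i$ performs~$\sigma$, we conclude $(R_P,r',m)\sat A$. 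Since $r'$ was an arbitrary run of $R_P$ indistinguishable from~$r$ to~$i$ at time~$m$, \cref{def:know} gives $(R_P,r,m)\sat K_iA$; and as $(r,m)$ was an arbitrary point at which~$i$ performs~$\sigma$, this shows $K_iA$ is a necessary condition for~$i$ performing~$\sigma$ in $R_P$.

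The argument is short and there is no substantive obstacle; the one place that warrants care is the step identifying~$i$'s time-$m$ action with a function of its local state $r_i(m)$, which uses both the determinism of~$P$ and the model's stipulation that~$i$'s behavior at a point depends on nothing but $r_i(m)$. No topological or combinatorial machinery is needed, and in particular the hypothesis that $R_P$ is generated by a single deterministic protocol --- so that the same local state always triggers the same action --- is exactly what drives the proof.
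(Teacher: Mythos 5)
Your proof is correct: the paper itself imports \cref{thm:knowprec} from the cited reference without proof, and your argument is precisely the standard one --- determinism makes $i$'s time-$m$ action a function of $r_i(m)$, so $\sigma$ is performed at every point indistinguishable to $i$ from $(r,m)$, whence $A$ holds there and $K_iA$ follows from \cref{def:know}. Nothing further is needed.
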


\section{Nonuniform Set Consensus}
\label{sec-proofs-k-set}

\begin{definition}
Let $r$ be a run,
let $i$ be a process and let $m$ be a time.
We define the following notations, in which
$r$ is implicit.
\begin{enumerate}
\item $\knownvals{i,m}~\eqdef~\{\mathtt{v}: K_i \existsv~\mbox{holds at time~$m$}\}$,
\item $\knownlows{i,m}~\eqdef~ \knownvals{i,m} \cap \set{0,\ldots,k-1}$.
\end{enumerate}
\end{definition}

\begin{remark}
Let $i$ be a node in a run $r$.
\begin{itemize}
\item
The hidden capacity of $i$ in $r$ is (weakly) decreasing as a function of time.
\item
Process~$i$ is low at time~$m$ iff $\knownlows{i,m} \ne \emptyset$.
\item
For all times $m<0$, by definition $\knownvals{i,m}=\emptyset$, and thus $i$ is high.
\item
$\minval{i,m}~\eqdef~\min\knownvals{i,m}$ for every time $m$.
\end{itemize}
\end{remark}

\begin{proof}[Proof of \cref{k-set-correct}]
In some run of $\OptMink$, let $i$ be a nonfaulty process.

\Decision:
Let $m$ be a time s.t.\ $i$ has not decided until $m$, inclusive.
Thus, $\node{i,m}$ has hidden capacity $\ge k$.
Let $i_b^{\ell}$, for all $\ell\le m$ and $b=1,\ldots,k$,
be as in \cref{hiddencapacity}.
By definition, $i_b^{\ell}$, for every $\ell < m$ and $b=1,\ldots,k$,
fails at time $\ell$. Thus, $k\cdot m \le f$, where $f$ is the number of failures
in the current run. Thus, $m \le \nicefrac{f}{k}$, and therefore
$m \le \lfloor \nicefrac{f}{k} \rfloor$.
Therefore, $i$ decides by time $\lfloor \nicefrac{f}{k} \rfloor +1$ at the latest.

Henceforth, let $m$ be the decision time of $i$
and let $\mathtt{v} = \minval{i,m}$ be the value upon which $i$ decides.

\Validity:
As $v = \minval{i,m}$, we have $\mathtt{v} \in \knownvals{i,m}$ and thus
$K_i \existsv$ at $m$. Thus, $\existsv$.

\kAgreement:
It is enough to show that at most $k-1$ distinct values
smaller than $\mathtt{v}$ are decided upon in the current run.
Since $i$ decides at $m$, $\node{i,m}$ is either low or has hidden capacity $<k$.
If $\node{i,m}$ is low, then $\mathtt{v} = \minval{i,m} \le k-1$, and thus
there do not exist more than $k-1$ distinct legal values smaller than $\mathtt{v}$,
let alone ones decided
upon.

For the rest of this proof we assume, therefore, that $\node{i,m}$ is
high and has hidden capacity $<k$.
As $\node{i,m}$ does not have hidden capacity $k$, there exists $0\le\ell\le m$
s.t.\ no more than $k-1$ processes at time $\ell$ are hidden from $\node{i,m}$.

Let $\mathtt{w}<\mathtt{v}$ be a value decided upon by a nonfaulty processor.
Let $j$ be this processor, and let~$m'$ be the time at which $j$ decides on $\mathtt{w}$.
As $\mathtt{w}<\mathtt{v}$ and as $\mathtt{v} = \minval{i,m}$, $\node{j,m'}$ is not seen
by $\node{i,m}$.
As $j$ and $i$ are both nonfaulty, we conclude that $m' \ge m$, and thus
$m' \ge \ell$.
Let $H$ be the set of all processes seen at $\ell$ by $\node{j,m'}$.
Since $m' \ge \ell$,
We have $\knownvals{j,m'} = \bigcup_{h \in H} \knownvals{h,\ell}$. (Note
that if $m' = \ell$, then $H = \set{j}$.)
As $\mathtt{w} = \minval{j,m'}$, we have $\mathtt{w} = \minval{h,\ell}$ for
some $h \in H$. As $\mathtt{w} < \mathtt{v} = \minval{i,m}$, we have
$w \notin \knownvals{i,m}$, and thus $\node{h,\ell}$ is not seen
by $\node{i,m}$. As $\node{h,\ell}$ is seen by $\node{j,m'}$, $h$ has not failed
before $\ell$, and thus $\node{h,\ell}$ is hidden from $\node{i,m}$.
To conclude, we have shown that
\[\mathtt{w} \in \bigl\{ \minval{h,\ell} \mid
\mbox{$\node{h,\ell}$ is hidden from $\node{i,m}$} \bigr\}.\]
As there are at most $k-1$ processes hidden at $\ell$ from $\node{i,m}$,
we conclude that no more than $k-1$ distinct values lower than $\mathtt{v}$ are
decided upon by nonfaulty processes, and the proof is complete.
\end{proof}

\begin{lemma}[See \cref{fig-hidden-capacity}]\label{exist-hidden-channels}
For any fip,
let $r$ be a run, let $i$ be a process and let $m$ be a time such that
$i$ is active at time $m-1$.
Let $c$ be the hidden capacity of $\node{i,m}$ and
let $i_b^{\ell}$, for all $\ell\le m$ and $b=1,\ldots,c$,
be as in \cref{hiddencapacity}.
For every $c$ values $v_1,\ldots,v_c$ of $\Vals$, there exists a run $r'$ of the protocol
such that  $r'_i(m)=r_i(m)$,
and for all $\ell$ and $b$, (a)
$v_b \in \knownvals{i_b^{\ell}, \ell}$
(b) $\knownvals{i_b^{\ell},\ell} \setminus \set{\mathtt{v}_b} \subseteq \knownvals{i,\ell}$,
and (c) $\node{i_b^{\ell},\ell}$
has hidden capacity $\ge c-1$ witnessed by
$i_{b'}^{\ell'}$ for $b'\ne b$ and $\ell'\le \ell$.
\end{lemma}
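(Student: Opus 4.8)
The plan is to exhibit the required run $r'=P[\alpha']$ by building its adversary $\alpha'=(\vec v',\FP')$ so that it agrees with $\alpha$ on everything that $\node{i,m}$ can see; then $\CG_{\alpha'}(i,\ell)=\Ga(i,\ell)$ for all $\ell\le m$, which yields $r'_i(m)=r_i(m)$ automatically. Concretely, I would install $c$ ``hidden chains'', the $b$-th running through the witnesses $\node{i_b^0,0},\node{i_b^1,1},\ldots,\node{i_b^m,m}$: set the initial value of $i_b^0$ to $v_b$ (legal, since $\node{i_b^0,0}$ is hidden from $\node{i,m}$); let each $i_b^\ell$ with $\ell<m$ stay correct through round $\ell$ and crash in round $\ell+1$, delivering a single message, to $i_b^{\ell+1}$; and let $i_b^\ell$ (for $\ell\ge1$) receive in round $\ell$, besides this one chain message, exactly the round-$\ell$ messages received by $\node{i,\ell}$ in $\alpha$. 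The senders of the latter are all seen by $\node{i,\ell}$, hence by $\node{i,m}$, so on them $\FP'$ is forced to equal $\FP$; and a short hiddenness argument shows that in $\alpha$ every node seen by $\node{i,m}$ already delivered every message it was supposed to deliver to the witness nodes (otherwise $\node{i,m}$ would prove one of them crashed too early), so this rewiring touches no edge into a node seen by $\node{i,m}$ and forces no such node to crash. All edges into the non-chain hidden nodes are left as in $\alpha$.

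Having fixed $\alpha'$, I would verify (a)--(c) by induction on $\ell$ along each chain. Two book-keeping facts come first: seenness and ``guaranteed crashed'' are monotone in the observer's time, so anything hidden from $\node{i,m}$ is hidden from every $\node{i,\ell}$ with $\ell\le m$; and, crucially, if $\node{j,\ell}$ is hidden from $\node{i,m}$ with $\ell\ge1$ then $v_j\in\knownvals{i,1}$ --- otherwise $i$ fails to receive $j$'s round-$1$ message, which is in itself a proof that $j$ crashed before time $\ell$. Then, by construction, the view of $\node{i_b^\ell,\ell}$ in $r'$ is exactly that of $\node{i,\ell}$ with the chain $\node{i_b^0,0}\to\cdots\to\node{i_b^\ell,\ell}$ grafted on, so $\knownvals{i_b^\ell,\ell}=\knownvals{i,\ell}\cup\set{v_b}$ (the initial values $v_{i_b^{\ell'}}$ that the chain carries along lie in $\knownvals{i,1}\subseteq\knownvals{i,\ell}$ by the second fact), giving (a) and (b). For (c), the only nodes in $\node{i_b^\ell,\ell}$'s view beyond those of $\node{i,\ell}$'s view are the chain-$b$ nodes, none of which equals a node of another chain; and since $\node{i,m}$ cannot prove $i_{b'}^{\ell'}$ crashed before time $\ell'$ (for $b'\ne b$, $\ell'\le\ell$), every node seen by $\node{i,\ell}$, and hence every chain-$b$ node at a time $\le\ell'$ (which receives a superset of $\node{i,\cdot}$'s incoming messages plus a chain message), received $i_{b'}^{\ell'}$'s messages in all rounds up to $\ell'$; consequently $\node{i_b^\ell,\ell}$ neither sees $\node{i_{b'}^{\ell'},\ell'}$ nor has evidence that it crashed before $\ell'$. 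As these $c-1$ witnesses are distinct at each time, $\HC{i_b^\ell,\ell}\ge c-1$ with the stated witnesses.

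The step I expect to be the main obstacle is the simultaneous fulfillment of (b) and (c): (b) demands that each chain node hear as little as possible, so as not to learn a value outside $\knownvals{i,\ell}$, while (c) demands that it hear enough that it cannot single out any other-chain node as crashed. Routing a chain node to mirror precisely the incoming messages of $\node{i,\ell}$ plus one chain message is what threads this needle, and confirming it needs the careful ``missing-message'' accounting sketched above. A second, more technical point to dispatch is that the witnesses supplied by \cref{hiddencapacity} need not form process-disjoint chains, and a single process cannot be made to crash at two different times; so before running the construction I would first show that the witnesses can be re-chosen --- using that the hidden capacity is at least $c$ at every time $\le m$, and that a node hidden from $\node{i,m}$ at a time $<m$ belongs to a process that is correct up to that time and crashes at most one round later --- so that each chain consists of $m+1$ distinct, chain-relaying processes and distinct chains use disjoint process sets.
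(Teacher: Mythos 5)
Your proposal is correct and follows essentially the same route as the paper's proof: the same construction of $r'$ (disjoint hidden chains through the witnesses, the $b$-th seeded with $v_b$, each chain node crashing while relaying only to its successor and otherwise mirroring exactly the messages received by $\node{i,\ell}$), the same observation that no edge into a node seen by $\node{i,m}$ is touched, and the same induction along the chains to verify (a)--(c). Your closing remark about the witnesses being process-disjoint is a point the paper's construction uses implicitly without comment, and your sketched justification (a node hidden at time $\ell<m$ forces that process to crash within one round, so it cannot witness at another time) is the right one.
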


\begin{proof}
It is enough to define $r'$ up to the end of round $m$.
Let $i_b^{\ell}$,
for all $\ell\le m$ and $b=1,\ldots,c$,
be as in \cref{hiddencapacity}.
We define $r'$ to be the same as $r$, except for the following possible changes
(possible, as they may or may not hold in $r$):
\begin{enumerate}
\item
$i_b^0$ is assigned the initial value $b$, for every $b$.
\item
For every $0\le \ell < m$ and every $b$, the process
$i_b^{\ell}$ fails at $\ell$, at which
it successfully sends a message only to $i_b^{\ell+1}$.
\item
For every $0 < \ell \le m$ and every $b$, the process
$i_b^{\ell}$
receives, until time $\ell-1$ inclusive, the
exact same messages as in $r$. (By definition, $\node{i_b^\ell,\ell-1}$ is seen
by $\node{i,m}$ in $r$, and thus it indeed receives messages in $r$ until time
$\ell-1$, inclusive.) At time $\ell$, the process
$i_b^{\ell}$ receives the exact same messages as $i$, and, in addition,
a message from $i$ and the aforementioned message from~$i_b^{\ell-1}$.
\end{enumerate}

It is straightforward to check, using backward induction on $\ell$, that in $r'$,
each $\node{i_b^{\ell},\ell}$ is not seen up to time $m$ by any process
other than
$i_b^{\ell'}$ for $\ell'>\ell$,
and is thus hidden from $\node{i,m}$ and from $i_{b'}^{\ell'}$ for all
$b' \ne b$ and for all $\ell'$.
Thus, for all $b$ and $\ell$, $\node{i_b^{\ell},\ell}$
has hidden capacity $\ge c-1$ witnessed by
$i_{b'}^{\ell'}$ for $b'\ne b$ and $\ell'\le \ell$.

We now show that none of the above changes alter the state of $i$ at $m$.
By definition, each $\node{i_b^{\ell},\ell}$ is hidden from $\node{i,m}$
in $r$, and as explained above --- in $r'$ as well.
We note that all modifications above affect a process $i_b^{\ell}$ only at
or after time $\ell$, and as this process at these times is not seen by
$\node{i,m}$
in either run, these modifications do not alter the state of $i$
at $m$.

Let $b \in \set{1,\ldots,c}$.
By definition of $r'$, we have $\knownvals{i_b^0,0} = \set{\mathtt{v}_b}$.
Since for every $\ell>0$, $\node{i_b^{\ell},\ell}$ receives a message from
$\node{i_b^{\ell-1},\ell-1}$, we have by induction that
$v_b \in \knownvals{i_b^{\ell},\ell}$ for all $\ell$.

We now complete the proof by showing by induction that for all $\ell$,
$\knownvals{i_b^{\ell},\ell} \subseteq \knownvals{i,\ell} \cup \set{\mathtt{v}_b}$.\footnote{A similar argument to the one used below in fact further shows that
for all $\ell>0$ and for all $b$,
$\knownvals{i_b^{\ell},\ell} =
\knownvals{i,\ell} \cup \set{\mathtt{v}_b}$ in $r'$ for all $\ell$ and $b$.}

Base: $\knownvals{i_b^0,0} = \set{\mathtt{v}_b} \subseteq \knownvals{i,0} \cup \set{\mathtt{v}_b}$.

Step: Let $\ell>0$.
Let $v \in \knownvals{i_b^{\ell},\ell}$.
If $v \in \knownvals{i_b^{\ell},\ell-1}$, then
$v \in \knownvals{i,\ell}$, as $v_b^{\ell}$
is nonfaulty at $\ell-1$ and thus its message is received by $\node{i,\ell}$.
Otherwise, $i_b^{\ell}$ is informed that $\existsv$ by a message it receives
at $\ell$.
By definition of $r'$, a message received by $\node{i_b^{\ell},\ell}$ is exactly
one of the following:
\begin{itemize}
\item
A message received by $\node{i,\ell}$. In this case, $v \in \knownvals{i,\ell}$ as well.
\item
A message sent by $\node{i,\ell-1}$. In this case, we trivially have
$v \in \node{i,\ell-1} \subseteq \knownvals{i,\ell}$.
\item
A message sent by $i_b^{\ell-1}$. In this case, by the induction hypothesis,
\[
v \in \knownvals{i_b^{\ell-1},\ell-1} \subseteq \knownvals{i,\ell-1} \cup \set{\mathtt{v}_b}
\subseteq \knownvals{i,\ell} \cup \set{\mathtt{v}_b}.\]
\end{itemize}
Thus, the proof by induction, and thus the proof of the lemma, is complete.
\end{proof}

\begin{proof}[Proof of \cref{two-face}]
We prove the lemma by induction on $m$.

Base ($m=0$):
Since $K_i \existsv$ at time $0$, the value~$\mathtt{v}$ must be $i$'s initial value, and thus
$\knownvals{i,0}=\set{\mathtt{v}}$.
As $\node{i,m}$ is low, $i$ decides at $0$.
By the \Validity\ property of $P$, it must decide on a value in $\knownvals{i,0}$,
namely, on $\mathtt{v}$.

Step ($m>0$):
Let $i_b^{\ell}$,
for all $\ell\le m$ and $b=1,\ldots,k-1$,
be as in \cref{hiddencapacity}. (See \cref{fig-two-face:first}.)
Let $r'$ be the run of $P$ guaranteed to exist
by \cref{exist-hidden-channels},
with respect to the values $\set{0,\ldots,k-1} \setminus \set{\mathtt{v}}$. (See
\cref{fig-two-face:second}.)
As $j_1,\ldots,j_k$ are seen by $i$ up to time $m$,
we assume w.l.o.g.\ that neither $j_1,\ldots,j_k$ nor $i$ ever fail in $r'$.
We henceforth work in $r'$.

For readability, let us denote by $i_w$, for all
$w \in \set{0,\ldots,k-1} \setminus \set{\mathtt{v}}$, the unique process among
the $i_{b'}^{m-1}$ associated with the value $w$ in the definition of $r'$ by
\cref{exist-hidden-channels}.
Hence,
$w \in \knownvals{i_w,m-1} \cap \set{0,\ldots,k-1} = \knownlows{i_w,m-1}$.
By Condition~\ref{two-face-first-time}, $\node{i,m-1}$ is high,
and thus, by definition of $i_w$,
$\knownlows{i_w,m-1} = \knownvals{i_w,m-1} \cap \set{0,\ldots,k-1}
\subseteq
(\knownvals{i,m-1} \cap \set{0,\ldots,k-1}) \cup \set{w} =
\knownlows{i,m-1} \cup \set{w} = \set{w}.$
We conclude that $\knownlows{i_w,m-1} = \set{w}$.

As $\knownlows{i,m} \setminus \knownlows{i,m-1} = \set{\mathtt{v}} \setminus \emptyset =
\set{\mathtt{v}}$, process~$i$ learned that $\existsv$ by a message it received
at~$m$. Let $i_v$ denote the sender of this message.
We thus trivially have that $\mathtt{v} \in \knownlows{i_v,m-1}$.
Furthermore, we have $\knownlows{i_v,m-1} \subseteq \knownlows{i,m} = \set{\mathtt{v}}$,
and thus $\knownlows{i_v,m-1} = \set{\mathtt{v}}$.

Define $i_k^{m-1}\eqdef i_v$ and $\mathtt{v}_k \eqdef \mathtt{v}$.
As $\knownlows{i_k^{m-1},m-1} = \set{\mathtt{v}}$, for every
$\ell<m-1$ there exists a process $i_k^{\ell}$ s.t.\ (a) $\node{i_k^{\ell},\ell}$ is seen
by $\node{i_k^{\ell+1},\ell+1}$ (and thus does not fail before $\ell$)
and (b) $\mathtt{v} \in \knownlows{i_k^{\ell},\ell}$ (and thus
$\knownlows{i_k^{\ell},\ell} = \set{\mathtt{v}}$). (See \cref{fig-two-face:second}.)
Let $w \in \set{0,\ldots,k-1} \setminus \set{\mathtt{v}}$ and let $\ell<m$.
As $\knownlows{i_w,m-1}=\set{w}$,
and as $\knownlows{i_k^{\ell},\ell}=\set{\mathtt{v}} \ne \set{w}$,
$\node{i_k^{\ell},\ell}$ is not seen by $\node{i_w,m-1}$
and thus (as $i_k^{\ell}$ does not fail before $\ell$), it is hidden from
$\node{i_w,m-1}$.
Furthermore, as $\knownlows{i_k^{\ell},\ell} = \set{\mathtt{v}}$, it is distinct
from all $i_{b}^{\ell}$ for $b<k$.
Let now $w \in \set{0,\ldots,k-1}$.
We conclude that $\node{i_w,m-1}$
has hidden capacity $\ge k-1$ witnessed by
$i_b^{\ell}$ for $\ell \le m-1$ all for all $b$ s.t.\ $\mathtt{v}_b \ne \mathtt{w}$. (See
\cref{fig-two-face:third}.)
Thus, by the induction hypothesis,
$i_w$ decides $\mathtt{w}$ by time $m-1$.

We now apply a sequence of consecutive possible
changes to $r'$ (possible, as they may or may not actually modify $r'$),
numbered from $k$ to $1$. (See \cref{fig-two-face:fourth}.)
For every $b=1,\ldots,k$,
change~$b$ possibly modifies only~$j_b$, and only at times~$\ge m$,
and does not contradict the fact that $i$ and all $j_1,\ldots,j_k$ never
fail.
Therefore, change~$b$ does not affect the state
$i$ or of $j_{b'}$'s up to time $m$, inclusive. Therefore, once change~$b$
is performed, the state of $j_b$ at $m$ is no longer affected by subsequent
changes.
As we show that following change~$b$, $j_b$ decides at $m$,
and denote the value decided upon by $\mathtt{v}_b$, we therefore have that
the fact that $j_b$
decides upon $\mathtt{v}_b$ at $m$ at the latest continues to hold throughout the
rest of the changes.

We now inductively describe the changes (recall that changes are performed starting
with change~$k$ and concluding with change~$1$):
Define $r^k\eqdef r'$. For every $b$, change~$b$ is applied to $r^b$
to yield a run $r^{b-1}$.
Let $b \in \set{1,\ldots,k}$ and assume that changes $k,\ldots,b+1$ were already
performed, and that for each $b'>b$, we have that in $r^{b'-1}$ (and thus
in $r^b$), $j_{b'}$ decides
a low value $\mathtt{v}_{b'}$ by $m$ at the latest, such that $j_{b+1},\ldots,j_k$ are
distinct of each other.

Change~$b$: Let $j_b$ never fail. Furthermore, let $j_b$ receive at time $m$
messages exactly from (a)
$\set{i_0,\ldots,i_{k-1}} \setminus \set{i_{v_{b+1}},\ldots,i_{v_k}}$,
(b)
$i$, and
(c)
$j_1,\ldots,j_k$, except, of course, from $j_b$.

As $i$ and $j_1,\ldots,j_k$ are all high at $m-1$, and as
$\knownlows{i_w,m-1} = \set{w}$ for all $w$, we now have
$\knownlows{j_b,m} = \set{0,\ldots,k-1} \setminus \set{\mathtt{v}_{b+1},\ldots,v_k}$.
In particular, as $b>0$, $\node{j_k,m}$ is low, and therefore must decide
at $m$ or before.
We note that there exists a run $s$ s.t.\ $s_{j_b}(m)=r_{j_b}^{b-1}(m)$,
in which neither $j_b$, nor any of the processes from which it receives messages
at $m$, ever fail. In this run, $j_{b+1},\ldots,j_k$ respectively decide on
$v_{b+1},\ldots,v_k$, and $\set{i_0,\ldots,i_{k-1}} \setminus \set{i_{v_{b+1}},\ldots,i_{v_k}}$ decide on the rest of $\set{0,\ldots,k-1}$. Thus,
by the \kAgreement\ property of $P$,
$j_b$ must decide in $s$ on a value $v_b \in \set{0,\ldots,k-1}$.
As $\knownlows{j_b,m} = \set{0,\ldots,k-1} \setminus \set{\mathtt{v}_{b+1},\ldots,v_k}$,
by the \Validity\ property of $P$, we have that $v_b \ne \set{\mathtt{v}_{b+1},\ldots,v_k}$.
As $s_{j_b}(m)=r_{j_b}^{b-1}(m)$,
$j_b$ must decide on $v_b$ in $r^{b-1}$ as well and the proof by induction is
complete.

By the above construction, $r_i^0(m)=r'_i(m)=r_i(m)$.
Thus, it is enough to show that in $r^0$, $i$ decides on $\mathtt{v}$ at $m$. We thus,
henceforth, work in $r^0$.
As in $r$, and thus also in~$r^0$, $\node{i,m}$ is low, $i$ must decide by
$m$ at the latest.
As all of $j_1,\ldots,j_k$ never fail, and furthermore, collectively
decide on all of $\set{0,\ldots,k-1}$ (see \cref{fig-two-face:fourth}), by the \kAgreement\ property of $P$,
as $i$ never fails, it must decide on a low value.
By the \Validity\ property, $i$ must decide on a value known to it to exist.
As $\knownlows{i,m}=\{v\}$ (in~$r$, and thus also in~$r^0$), we have
that $i$ decides $\mathtt{v}$. As $v \notin \knownlows{i,m-1}$, by \Validity\
we obtain that $i$ does not decide before $m$ and the proof is complete.
\end{proof}

Using \cref{exist-hidden-channels,two-face},
we derive a necessary condition for deciding in $\OptMink$.

\begin{lemma}
\label{k-set-cant-decide-before}
Let $P$ be a protocol solving $k$-set consensus.
Assume that in~$P$, every
undecided low process
must decide.
Then
no high process with hidden capacity $\ge k$ decides in $P$.
\end{lemma}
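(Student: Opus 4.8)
The plan is to argue by contradiction, using \cref{exist-hidden-channels} and \cref{two-face} essentially as black boxes. Suppose some high process~$i$ with $\HC{i,m}\ge k$ decides at time~$m$ in a run $r=P[\alpha]$. Since~$i$ is high, $\knownvals{i,m}=\set{k}$, so by \Validity\ and \cref{thm:knowprec} process~$i$ decides on the value~$k$. As~$i$ decides at~$m$ it is active at time~$m-1$, so \cref{exist-hidden-channels} applies; I would invoke it with $k$ of the hidden paths witnessing $\HC{i,m}\ge k$, carrying the distinct low values $\mathtt{v}_b\eqdef b-1$ for $b=1,\dots,k$. This produces a run~$r'$ of~$P$ with $r'_i(m)=r_i(m)$ --- so by determinism~$i$ still decides~$k$ at time~$m$ in~$r'$ --- together with path-endpoint nodes $\node{i_1^m,m},\dots,\node{i_k^m,m}$ and intermediate witnesses $\node{i_b^\ell,\ell}$ satisfying properties~(a)--(c). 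The goal is to show that, in~$r'$, each $\node{i_b^m,m}$ decides on its own low value~$\mathtt{v}_b$; together with $i$'s decision on~$k$ this exhibits $k+1$ distinct values decided in a single run, contradicting \kAgreement\ once the relevant processes are made correct.

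The core of the argument is to check that $\node{i_b^m,m}$ satisfies the four hypotheses of \cref{two-face} in~$r'$ (the global hypothesis of \cref{two-face} --- that every low process decides --- is exactly our assumption on~$P$). Hypothesis~2 (a unique seen low value) follows from properties~(a),(b) and the fact that~$i$ is high: $\mathtt{v}_b\in\knownvals{i_b^m,m}$ and $\knownvals{i_b^m,m}\setminus\set{\mathtt{v}_b}\subseteq\knownvals{i,m}=\set{k}$, so $\knownlows{i_b^m,m}=\set{\mathtt{v}_b}$. Hypothesis~1 ($i_b^m$ is low at~$m$ for the first time) holds because $\node{i_b^m,m-1}$ is seen by $\node{i,m}$ (this is built into the choice of witnesses behind \cref{exist-hidden-channels}), hence --- as~$i$ is high --- $\node{i_b^m,m-1}$ is high, and by monotonicity of lowness so is $\node{i_b^m,\ell}$ for every $\ell<m$. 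Hypothesis~3 ($\HC{i_b^m,m}\ge k-1$) is precisely property~(c). For hypothesis~4 I would present the $k$ distinct processes $\set{i}\cup\set{i_{b'}^m : b'\ne b}$: each is high at time~$m-1$ (the $i_{b'}^m$ by the same argument just used for $i_b^m$, and~$i$ because it is high at~$m$); each $\node{i_{b'}^m,m}$ is hidden from $\node{i_b^m,m}$ by property~(c); and $\node{i,m}$ is hidden from $\node{i_b^m,m}$ since $i\ne i_b^m$ (one is high, the other low), so the node $\node{i,m}$ cannot be seen by a node at the same time~$m$, while we are free to take~$i$ never to fail in~$r'$, so it is not guaranteed crashed at $\node{i_b^m,m}$ either. \cref{two-face} then yields that $i_b^m$ decides~$\mathtt{v}_b$ at time~$m$ in~$r'$, for every~$b$.

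To finish: the processes $i,i_1^m,\dots,i_k^m$ are pairwise distinct --- $i$ differs from each $i_b^m$ (high vs.\ low), the $i_b^m$ are distinct witnesses at time~$m$, and all of them are active at time~$m$ and hence differ from the path nodes $i_b^\ell$ ($\ell<m$), which crash at time~$\ell$. Since \cref{exist-hidden-channels} pins~$r'$ down only through round~$m$, I would extend it to a run in which no process crashes after time~$m$, making~$i$ and every~$i_b^m$ correct. Then the correct processes decide on all of $\set{0,1,\dots,k-1}$ (via the $i_b^m$) and on~$k$ (via~$i$): a set of size $k+1$, contradicting \kAgreement. This completes the proof.

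The step I expect to be the main obstacle is the use of the high node~$i$ itself as the ``$k$-th'' high-and-currently-hidden process demanded by hypothesis~4 of \cref{two-face} at each endpoint $\node{i_b^m,m}$: the sibling endpoints $i_{b'}^m$ only supply $k-1$ of the $k$ required processes, and one must notice that~$i$ fills the last slot --- no fresh high process need exist. The accompanying bookkeeping --- that $i,i_1^m,\dots,i_k^m$ are all distinct and can be made simultaneously correct, so that the \kAgreement\ violation genuinely involves correct processes --- is the other place that requires care; the checks of hypotheses~1--3 are routine given \cref{exist-hidden-channels}.
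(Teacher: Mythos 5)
Your proposal is correct and follows essentially the same route as the paper: invoke \cref{exist-hidden-channels} with the values $0,\ldots,k-1$, verify the four hypotheses of \cref{two-face} for each endpoint $\node{i_b^m,m}$ (using $i$ itself together with the sibling endpoints as the $k$ processes required by hypothesis~4, exactly as the paper does), and conclude that all low values are decided by correct processes, so $i$ cannot decide. The only cosmetic difference is that you phrase the conclusion as a \kAgreement\ contradiction with $i$ deciding $k$, whereas the paper argues directly that $i$ can decide neither a low value (by \Validity) nor a high one (by \kAgreement).
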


\begin{proof}
Let $r$ be a run of $P$, let $i$ be a process and let $m$ be a time s.t.\
$\node{i,m}$ is high and has hidden capacity $\ge k$.
Let $i_b^{\ell}$, for all $\ell\le m$ and $b=1,\ldots,k$,
be as in \cref{hiddencapacity}.
Let $r'$ be the run of $P$ guaranteed to exist
by \cref{exist-hidden-channels},
with respect to the values $\set{0,\ldots,k-1}$,
with $i_b^{\ell}$
associated with the value $b-1$ for all $\ell$.
As $i_b^m$, for all $b$, are seen by $i$ up to time $m$,
we assume w.l.o.g.\ that neither they nor $i$ ever fail in $r'$.
As $r'_i(m)=r_i(m)$,
it is enough to show that $i$ does not decide at $m$ in $r'$. We thus,
henceforth, work in $r'$.

Let $b \in \set{0,\ldots,k-1}$. By definition of $r'$,
$\knownlows{i_b^m,m} = \knownvals{i_b^m,m}\cap\set{0,\ldots,k-1} \subseteq
(\knownvals{i,m} \cap \set{0,\ldots,k-1}) \cup \set{b-1} = \knownlows{i,m}
\cup \set{b-1} = \set{b-1}$. As $b-1 \in \knownlows{i_b^m,m}$, we conclude that
$\knownlows{i_b^m,m} = \set{b-1}$.
If $m=0$, then we trivially have that $i_b^m$ is low for the first time at $m$.
Otherwise, as $\node{i,m}$ is high, and as, by definition,
$\node{i_b^m,m-1}$ is seen
by $\node{i,m}$ (in $r$, and therefore in $r'$), we have that $i_b^m$ is
low at $m$ for the first time as well.
By definition of $r'$, $i_b^m$ has hidden capacity $\ge k-1$.
By applying \cref{two-face} with $i$ and $\set{i_{b'}^m}_{b' \ne b}$ as
$j_1,\ldots,j_k$, we thus obtain that $i_b^m$ decides $b-1$ at $m$.

Thus, all of $\set{0,\ldots,k-1}$ are decided upon and so, by the \kAgreement\
property of $P$, $i$ may not decide on any other value.
As $\node{i,m}$ is high, by the \Validity\ property of $P$,
$i$ may not decide on any of $\set{0,\ldots,k-1}$ at $m$.
Thus, $i$ does not decide at $m$.
\end{proof}

\cref{thm:OptMink} follows from \cref{k-set-correct,k-set-cant-decide-before}.

\subsection{A Topological Proof of \texorpdfstring{\cref{two-face}}{Lemma~\ref{two-face}}}
\label{sec-topo-proof}

\subsubsection{Basic Elements of Combinatorial Topology}
\label{sec-topo-def}

A \emph{complex} is a finite set $V$
and a collection of subsets $\cK$ of $V$ closed under containment.
An element of $V$ is called a \emph{vertex} of $\cK$,
and a set in $\cK$ is called a \emph{simplex}.
A (proper) subset of a simplex $\sigma$ is called a \emph{(proper) face}.
The \emph{dimension} $\dim \sigma$ is $|\sigma|-1$.
The dimension of a complex $\cK$, $\dim \cK$,
is the maximal dimension of any of $\cK$'s simplexes.
A complex $\cK$ is \emph{pure} if all its simplexes have the same dimension.

Let $\cK$ be a complex and $v$ one of its vertices.
The \emph{star complex} of $v$ in $\cK$, denoted $\Star(v, \cK)$
is the subcomplex of $\cK$ containing every simplex, and all its faces,
that contains $v$.

For a simplex $\sigma$, let $\bdry \sigma$ denote the
complex containing all proper faces of $\sigma$.
If $\cK$ and $\cL$ are disjoint, their \emph{join}, $\cK \ast \cL$,
is the complex  $\set{ \sigma \cup \tau : \sigma \in \cK \wedge \tau \in \cL}$.

A \emph{coloring} of a complex $\cK$ is a map from the vertices of
$\cK$ to a set of \emph{colors}.
A simplex of $\cK$ is \emph{fully colored} if its vertices are mapped to distinct colors.

Informally, a \emph{subdivision} $\Div \sigma$ of $\sigma$ is a complex
constructed by subdividing each $\sigma' \subseteq \sigma$ into smaller simplexes.
A subdivision $\Div \sigma$ maps each $\sigma' \subseteq \sigma$ to the
pure complex $\Div \sigma'$ of dimension $\dim \sigma$ containing the simplexes that subdivide $\sigma'$.
Thus, for all $\sigma', \sigma'' \subseteq \sigma$,
$\Div \sigma' \cap \Div \sigma'' = \Div \sigma' \cap \sigma''$.
For every vertex $v \in \Div \sigma$,
its \emph{carrier}, $\Car v$,
is the face $\sigma' \subseteq \sigma$ of smallest dimension such that
$v \in \Div \sigma'$.

The \emph{barycentric} subdivision $\Bary \sigma$ of $\sigma$ can be
defined in many equivalent ways. Here we adopt the following
combinatorial definition.
$\Bary \sigma$ is defined inductively by dimension.
For dimension 0, for every vertex $v$ of $\sigma$,
$\Bary v = v$.
For dimension $\ell$, $1 \leq \ell \leq \dim \sigma$,
for every $\ell$-face $\sigma'$ of $\sigma$,
for a new vertex $v = \sigma'$,
$\Bary \sigma' = v \ast \Bary \bdry \sigma'$.

Let $\Div \sigma$ be a subdivision of $\sigma$.
A \emph{Sperner coloring} of $\Div \sigma$ is a coloring
that maps every vertex $v \in \Div \sigma$ to a vertex in $\Car v$.

\subsubsection{Proof of \texorpdfstring{\cref{two-face}}{Lemma~\ref{two-face}}}

Consider any run $r$. We proceed by induction on the time $m$.

For the base of the induction $m=0$, if the four conditions holds for
a process $i$ at time $0$, then it must be that
$i$ starts in $r$ with input $\mathtt{v}$,
and consequently $V\ang{i, m} = \set{\mathtt{v}}$.
Therefore, $i$ decides $\mathtt{v}$ at time $0$, since $P$ satisfies the validity requirement
of $k$-set consensus.

Let us assume the claim holds until time $m-1$. We prove it holds at $m$.
Let $i$ be a process that satisfies the four conditions at time $m$.

Without loss of generality, let us assume $L\ang{i,m}=\{0\}$.
Let $i_0$ be a process such that $i$ receives a message
from $i_0$ at time $m$ and $L\ang{i_0,m-1} = \set{0}$.
We have $\ang{i,m}$ has hidden capacity greater or equal than $k-1$,
thus,
\cref{exist-hidden-channels}
implies that
there exist a run $r'$ indistinguishable to $\ang{i,m}$ such that
there are $k-1$ processes $i_1, \hdots, i_{k-1}$
such that for each $i_x$, $1 \leq x \leq k-1$,
$\ang{i_x, m-1}$ hidden to $\ang{i,m}$
and $L\ang{i_x,m-1} = \set{x}$.

By induction hypothesis, every $i_x$, $0 \leq x \leq k-1$,
decides at time $m-1$, at the latest, on its unique low value~$x$.
We assume, for the sake of contradiction, that $i$ decides
on a non-low value at time~$m$
(if $i$ decides before, it necessarily decides on a non-low value).
For simplicity, let us assume $i$ decides on $k$.

By hypothesis, there are $k$ processes, $j_1, \hdots, j_k$,
(distinct from $i$ and $i_x$)
such for each $1 \leq y \leq k$, $L\ang{j_y, m-1} = \emptyset$.
Note that $k \in H\ang{j_y, m-1}$, for every $j_y$.

Below, we only consider runs in which
a subset of $i_0, \hdots, i_{k-1}$ crash in round $m$
and every $j_y$ receives at least one message from some $i_x$;
all other process do not crash in round $m$.
Thus, $L\ang{j_y, m} \neq \emptyset$, for every $j_y$,
and consequently it decides at time $m$, at the latest.

We now define a subdivision, $\Div \sigma$, of a $k$-simplex
$\sigma = \{ 0, \hdots, k \}$,
and then define a map $\delta$ from the vertices $\Div \sigma$ to
states of  $i$, $i_x$ or $j_y$ at time~$m$.
The mapping $\delta$ will be defined in a way that
the decisions of the processes induce a Sperner coloring on $\Div \sigma$.
Finally, we argue that,
for every simplex $\tau \in \Div \sigma$,
all its vertices are mapped to distinct compatible process states
in some execution.
Therefore, by Sperner's Lemma, there must be a $k$-dimensional simplex in $\Div \sigma$
in which $k+1$ distinct values are decided by distinct processes,
thus reaching a contradiction.

\begin{lemma}[Sperner's Lemma]
Let $\Div \sigma$ be a subdivision with a Sperner coloring $\zeta$.
Then, $\zeta$ defines an odd number of fully colored $(\dim \sigma)$-simplexes.
\end{lemma}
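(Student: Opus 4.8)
The plan is to prove Sperner's Lemma by induction on $n \eqdef \dim\sigma$, via the classical parity (``door-counting'') argument. For the base case $n=0$ the subdivision $\Div\sigma$ consists of a single vertex $v$ with carrier $\Car v = \sigma$, so $\zeta(v)$ is the unique vertex of $\sigma$, giving exactly one fully-colored $0$-simplex, which is an odd number.

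For the inductive step, write $\sigma = \{0,1,\ldots,n\}$ and let $\sigma_n \eqdef \{0,1,\ldots,n-1\}$ be the face opposite the vertex $n$. Call an $(n-1)$-simplex $d$ of $\Div\sigma$ a \emph{door} if its $n$ vertices receive exactly the $n$ colors $\{0,1,\ldots,n-1\}$ under $\zeta$. I would count modulo $2$, in two ways, the number $N$ of incident pairs $(\tau,d)$ in which $\tau$ is an $n$-simplex of $\Div\sigma$, $d$ is a door, and $d$ is a face of $\tau$. Summing over $\tau$ and inspecting the colors of its $n+1$ vertices: a fully-colored $\tau$ has exactly one door (the face opposite its vertex colored $n$); a $\tau$ whose vertices use precisely the color set $\{0,\ldots,n-1\}$ (so exactly one color is repeated) has exactly two doors; any other $\tau$ has none. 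Hence $N$ is congruent mod $2$ to the number of fully-colored $n$-simplexes of $\Div\sigma$.

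For the second count I would sum over doors, using the standard fact that an $(n-1)$-simplex of a subdivision of $\sigma$ is a face of exactly two $n$-simplexes if it lies in the interior of $\sigma$ and of exactly one if it lies on $\bdry\sigma$. The key sublemma is that a door on $\bdry\sigma$ necessarily lies in $\Div\sigma_n$: if $d$ lies in $\Div\sigma'$ for a proper face $\sigma'\subsetneq\sigma$, then the compatibility relation $\Div\sigma'\cap\Div(\Car v)=\Div(\sigma'\cap\Car v)$ together with minimality of carriers forces $\Car v\subseteq\sigma'$ for each vertex $v$ of $d$, so by the Sperner condition $\zeta(v)\in\Car v\subseteq\sigma'$; since the $n$ distinct colors $\{0,\ldots,n-1\}$ all appear on $d$, we get $\sigma'\supseteq\sigma_n$, and as $\sigma'$ is a proper face of $\sigma$ this forces $\sigma'=\sigma_n$. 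Consequently the interior doors contribute $0$ mod $2$, and $N$ is congruent mod $2$ to the number of boundary doors, each of which is precisely a fully-colored $(n-1)$-simplex of $\Div\sigma_n$ under the restricted coloring $\zeta|_{\Div\sigma_n}$ (and conversely).

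Finally I would verify that $\zeta|_{\Div\sigma_n}$ is a Sperner coloring of $\Div\sigma_n$ viewed as a subdivision of the $(n-1)$-simplex $\sigma_n$: for $v\in\Div\sigma_n$ the carrier of $v$ computed inside $\Div\sigma_n$ coincides with $\Car v$ and is contained in $\sigma_n$ (again by the compatibility relation and minimality of carriers), so $\zeta(v)\in\Car v$ continues to hold. The induction hypothesis then gives that the number of fully-colored $(n-1)$-simplexes of $\Div\sigma_n$ is odd, and equating the two counts of $N$ shows the number of fully-colored $n$-simplexes of $\Div\sigma$ is odd as well, completing the induction. I expect the main obstacle to be the boundary analysis — establishing via the carrier bookkeeping that every boundary door is confined to the single face $\sigma_n$, so that the induction hypothesis can be applied to the correct subdivision; the parity casework in the first count and the ``an interior $(n-1)$-face bounds exactly two $n$-simplexes'' fact are routine but must be stated with care.
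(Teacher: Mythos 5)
Your proof is correct. Note first that the paper itself does not prove this lemma at all: it is stated as the classical Sperner's Lemma and invoked as a black box inside the topological proof of the unbeatability result, so there is nothing in the paper to compare your argument against step by step. What you give is the standard parity (``door-counting'') proof by induction on $\dim\sigma$, and all the pieces check out: the base case is trivial; the double count of incident pairs $(\tau,d)$ correctly yields that $N$ is congruent mod~$2$ both to the number of fully colored $n$-simplexes (one door for a rainbow $\tau$, two for a $\tau$ with color set exactly $\{0,\ldots,n-1\}$, zero otherwise) and to the number of boundary doors; and your carrier bookkeeping --- using $\Div\sigma'\cap\Div(\Car v)=\Div(\sigma'\cap\Car v)$ plus minimality of carriers to force every boundary door into $\Div\sigma_n$, and to show that $\zeta$ restricts to a Sperner coloring of $\Div\sigma_n$ --- is exactly the part that is usually glossed over and is handled correctly here. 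The one ingredient you import without proof, that an $(n-1)$-simplex of the subdivision lies in exactly two $n$-simplexes if interior and exactly one if on the boundary, is indeed the pseudomanifold property of subdivisions; given that the paper only defines subdivisions informally, citing this as a standard fact is the right call (and it visibly holds for the specific variant of the barycentric subdivision the paper actually constructs). No gaps.
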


We construct $\Div \sigma$ inductively by dimension.
The construction is a simple variant of the well-known barycentric subdivision
(see \cref{fig-subdivision} (left)).

For dimension $0$, for every vertex $v \in \sigma$, we define $\Div v = v$;
hence $\Car v = v$.
For every $1$-face (edge) $\sigma'$ of $\sigma$,
if $k \notin \sigma'$ or $\sigma' = \{0, k\}$, then $\Div \sigma' = \sigma'$;
otherwise, for a new vertex $v = \sigma'$,
$\Div \sigma' = v \ast \Div \bdry \sigma'$.
Note that $\Car v = \sigma'$.
For every $x$-face $\sigma'$ of $\sigma$, $2 \leq x \leq k$,
if $k \notin \sigma'$, then $\Div \sigma' = \sigma'$;
otherwise, for a new vertex $v = \sigma'$,
$\Div \sigma' = v \ast \Div \bdry \sigma'$.
Again note that $\Car v = \sigma'$
(see \cref{fig-subdivision} (center)).

\begin{figure}[t]
\centering
        \includegraphics[scale=.8]{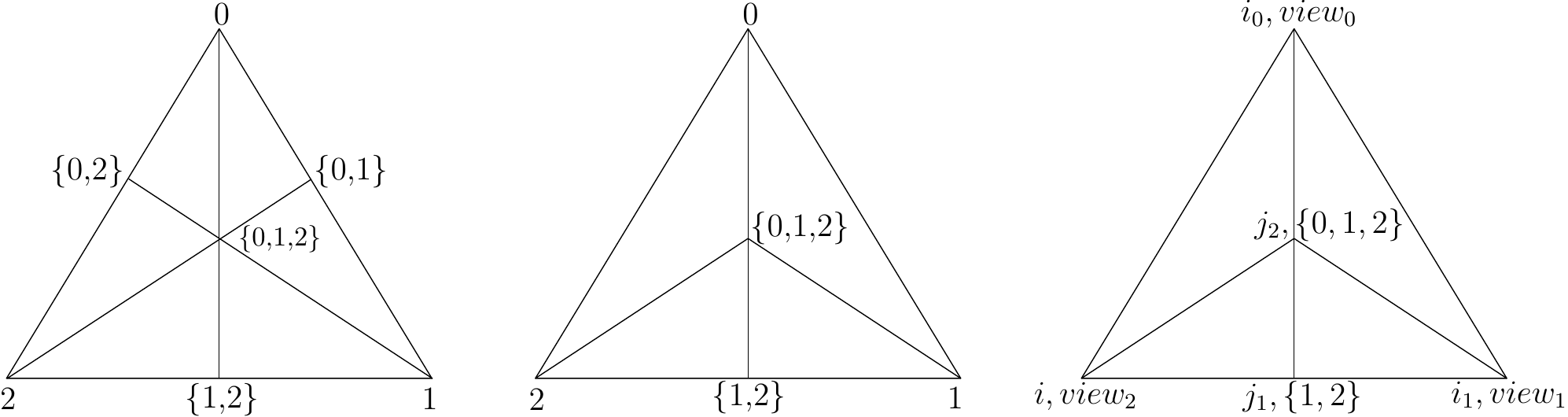}
        \caption{\footnotesize
        For dimension $k=2$ and $\sigma = \set{0,1,2}$,
        the barycentric subdivision $\sigma = \set{0,1,2}$ appears at the left,
		while the subdivision $\Div \sigma$ appears at the center.
        In the subdivision at the right, the vertices are mapped to process states.
        For example, the triangle $\set{\ang{i_0,view_0}, \ang{i_1, view_1}, \ang{j_2, \set{0,1,2}}}$ corresponds
        to the execution in which $i_0$ and $i_1$ do not crash in round $m$, and hence
        $j_1$ receives $0$ and $1$, which are included in its view.
   		Similarly, the triangle $\set{\ang{i_1, view_1}, \ang{j_1, \set{1,2}}, \ang{j_2, \set{0,1,2}}}$
   		correspond to the execution in which $i_1$ does not crash in round $m$,
   		while $i_0$ crashes and sends a message to $j_2$ and no message to $j_1$.
        The decisions of the processes induces a Sperner coloring:
        by assumption, $i$ decides $k=2$, and by induction hypothesis,
        $i_0$ and $i_1$ decided $0$ and $1$ at time $m-1$;
        the rest of the processes have to decide at time $m$ and
        they can only decide values in their views.
        }
		\label{fig-subdivision}
\end{figure}

We now define the mapping $\delta$ and the Sperner coloring of $\Div \sigma$,
which is induced by the decision function $\zeta$ of $P$.

For every vertex $v \in \sigma$, $\Div v = v$.
If $v \neq k$, then $\delta(v)$ is the state $\ang{i_v,m}$ in which
$i_v$ sends and receives all its messages,
i.e. $i_v$ does not crash in round $m$;
otherwise, $\delta(v) = \ang{i,m}$ in $r'$.
Note that for all $v \in \sigma$, $\zeta(\delta(v)) = v$,
by induction hypothesis and because we assume $i$ decides on $k$.

For every, $y$-face $\sigma'$ of $\sigma$, $1 \leq y \leq k$,
if there is a vertex $v \in \Div \sigma'$ with $\Car v = \sigma'$,
then $v = \sigma'$ and $k \in \sigma'$.
For such a vertex, we define $\delta(v)$ to be the state $\ang{j_y,m}$ in which
(a) $j_y$ receives a message from $i_w$, for every $w \in \sigma'$
($i_w$ may crash after sending a message to $i_y$), and
(b) $j_y$ does not receive any message from the $i_{x}$'s whose
subindexes do not appear in $\sigma'$,
namely, they crash in round $m$
without sending a message to $j_y$.
Observe that $L\ang{i_y,m} = \sigma' \setminus \set{k}$
and $H\ang{i_y,m}$ contains $k$ and possible more high values distinct from $k$.
Since $P$ satisfies the validity requirement
of $k$-set consensus,
$\zeta(\delta(v))$ is any value in $V\ang{i_y,m} = L\ang{i_y,m} \cup H\ang{i_y,m}$.
For now, we assume that if $\zeta(\delta(v)) \in H\ang{i_y,m}$,
then $\zeta(\delta(v)) = k$, in other words,
if $i_y$ decides a high value, it decides on $k$;
hence $\zeta(\delta(v)) \in \Car v$.
Therefore, $\zeta$ defines a Sperner coloring for $\Div \sigma$.
Later we explain that this assumption does not affect our argument below.

Consider a $k$-simplex $\tau \in \Div \sigma$.
To show that $\delta$ maps the vertices of $\tau$
distinct process states, it is enough to see that
for every $v \in \Div \sigma$,
if $\dim \Car v = 0$, then $\delta(v)$ is a state of $i$ or some $i_x$;
and if $1 \leq \dim \Car v \leq k$,
then $\delta(v)$ is a state of $j_y$, where $y = \dim \Car v$.
And to show that $\delta$ map $\tau$ to states of an execution,
note that if there is a $v \in \tau$ such that
$\delta(v) = \ang{i,m}$, then
the states in $\delta(\tau)$
correspond to an execution in which
each $j_y$ receives a subset of the messages from
$i_0, \hdots, i_{k-1}$;
otherwise, the states in $\delta(\tau)$
correspond to an execution in which some
$i_x$'s distinct from $i_0$
do not crash in round $m$ (see \cref{fig-subdivision} (right)).
Observe that in the second case,
the state of $i$ at time $m$ in that execution is different from
the state of $i$ at time $m$ in $r'$, because in $r'$ $i$ only receives
a message from $i_0$.

By Sperner's Lemma, there is at least one fully colored $k$-simplex in $\Div \sigma$,
and thus there is an execution of $P$ in which $k+1$ distinct values are decided at time $m$.
A contradiction.

Finally, we assumed that if
$i_y$ decides a high value, it decides on $k$.
Observe that if in $\Div \sigma$, we replace
$k$ with the actual decision of $i_y$, then,
the number of distinct decision  at the vertices
of a simplex of $\Div \sigma$ can only increase.
Thus, in any case, $\Div \sigma$ has a simplex with
$k+1$ distinct decisions. The lemma follows. \qed

\subsection{
Unbeatability and Connectivity}
\label{sec-discussion-connectivity}

\begin{proof}[Proof of \cref{hidden-capacity-connectivity} (Sketch)]
The lemma can be proved using similar techniques as in~\cite{GHP,HRT98}.
Roughly speaking, given a full-information protocol, the analysis in those paper considers the $m$-round protocol complex ${\cal P}_m$
containing the executions in which there are at most $k$ failures per round.
Then, after an elaborate analysis, it is shown that ${\cal P}_m$ is $(k-1)$-connected.
The main idea is that the number of failures in the execution modeled in ${\cal P}_m$
have a ``level'' of uncertainty which is captured by the $(k-1)$-connectivity of ${\cal P}_m$.
Here it is worth to note that if a process sees $k$ new failures in every round, its
hidden capacity is $k$, however, that is not the only scenario in which hidden capacity
can be $k$, it depends how information flows during a given execution.

Observe that the connectivity property of ${\cal P}_m$ is ``global''.
In contrast, in this lemma we focus on a local property since we consider the star complex
of a vertex $v$, $\Star(v, {\cal P}_m)$. However, the principle is the same: the ``level'' of uncertainty of $v$,
namely, in every round it has hidden capacity at least $k$, is reflected in the connectivity
of its star complex.

The claim can be proved by induction on the number of rounds (with round operators),
using similar techniques that have been used in the past.
\end{proof}

\section{Uniform Set Consensus}
\label{sec-uni-set-cons-proofs}

In the following proof, following \cite{AYY-DISC}, we denote the fact that process~$i$ (at time~$m$) knows that $\valv$ will persist by $K_i\cv$; the time $m$ will be clear from context.

\begin{proof}[Proof of \cref{u-k-solve}]

\Decision:
By definition of $\UOptMink$, every process that is active at time $\lfloor\nicefrac{\tee}{k}\rfloor+1$,
and in particular every nonfaulty process, decides by this time at the latest.

Before moving on to show \Validity\ and \UnikAg, we first complete the analysis of stopping times.
In some run of $\UOptMink$, let $i$ be a process and let $m$ be a time s.t.\ $i$ is active at $m$ but has not decided until $m$, inclusive.
Let $\tilde{m}\le m$ be the latest time not later than $m$ s.t.\ $\node{i,\tilde{m}}$ has hidden capacity $\ge k$. By definition
of $\UOptMink$, as $i$ is undecided at $m$, we have $\tilde{m} \ge m-1$.

As $\node{i,\tilde{m}}$ has hidden capacity $\ge k$ at $\tilde{m}$, let $i_b^{\ell}$, for all $0\le\ell\le \tilde{m}$ and $b=1,\ldots,k$, be as in \cref{hiddencapacity}.
By definition, $\node{i_b^{\ell},\ell}$, for every $0\le\ell < \tilde{m}$ and $b=1,\ldots,k$, is hidden from $\node{i,\tilde{m}}$.
Thus, $k\cdot \tilde{m} \le \knownf{i,\tilde{m}} \le f$.
therefore, $\tilde{m} \le \nicefrac{f}{k}$ and so $\tilde{m} \le \lfloor\nicefrac{f}{k}\rfloor$.
Hence, as $m-1 \le \tilde{m}$, we have $m \le \tilde{m}+1 \le \lfloor \nicefrac{f}{k} \rfloor + 1$.
We thus have that every process that is active at time $\lfloor\nicefrac{f}{k}\rfloor+2$, decides by this time at the latest.

We move on to show \Validity\ and \UnikAg. Henceforth,
let $i$ be a (possibly faulty) process that decides in some run of $\UOptMink$, let $m_i$ be the decision time of $i$, and
let $\mathtt{v}$ be the value upon which $i$ decides.
Thus, there exists $m'_i \in \{m_i,m_i-1\}$ s.t.\ $\node{i,m'_i}$ is low or has hidden capacity $<k$, and s.t.\ $\mathtt{v} = \minval{i,m'_i}$.
(To show this when $m_i=\lfloor\nicefrac{\tee}{k}\rfloor+1$, we note that in this case $m_i>\lfloor\nicefrac{f}{k}\rfloor$, and so, as shown in the stopping-time analysis above, this implies that $\node{i,m_i}$ has hidden capacity $<k$.)

\Validity:
As $\mathtt{v}=\minval{i,m'_i}$, we have $K_i\existsv$ at $m'_i$, and thus $\existsv$.

\UnikAg:
It is enough to show that at most $k-1$ distinct values
smaller than $\mathtt{v}$ are decided upon in the current run.
If $\node{i,m'_i}$ is low, then $\mathtt{v} = \minval{i,m'_i} < k-1$, and thus there do not exist more than $k-1$ distinct legal values smaller than $\mathtt{v}$, let alone ones decided
upon. For the rest of this proof we assume, therefore, that $\node{i,m'_i}$ is high, and so has hidden capacity $<k$.

Let $\mathtt{w}<\mathtt{v}$ be a value decided upon by some process. Let $j$ be this process, and let~$m_j$ be the time at which $j$ decides on $\mathtt{w}$.
Thus, $\mathtt{w} = \minval{j,m'_j}$ for some $m_j' \in \{m_j,m_j-1\}$ s.t.\ if $m'_j=m_j$, then either $K_j\cw$ at $m_j$, or $m_j=\lfloor\nicefrac{\tee}{k}\rfloor+1$.

We first show that $m'_j \ge m'_i$. If $m'_j=m_j$ and $m_j=\lfloor\nicefrac{\tee}{k}\rfloor+1$, then we immediately have
$m'_j=\lfloor\nicefrac{\tee}{k}\rfloor+1\ge m_i \ge m'_i$, as required. Otherwise, the analysis is somewhat more subtle. We
first show that in this case, if $i$ is active at $m'_j+1$, then $K_i\exists\mathtt{w}$ at $m'_j+1$. We reason by cases, according to the value of $m'_j$.
\begin{itemize}
\item
If $m'_j=m_j$, then $K_j\cw$ at $m'_j$, and thus there exists a process $k$ that never fails,
s.t.\ $K_k\exists\mathtt{w}$ at $m'_j$. As $k$ never fails, $\node{k,m'_j}$ is seen by $\node{i,m'_j+1}$, and thus $K_i\exists\mathtt{w}$ at $m'_j+1$, as required.
\item
Otherwise, $m'_j=m_j-1$. As $j$ is active at $m_j$, it does does not fail at $m'_j<m_j$, and therefore $\node{j,m'_j}$ is seen by
$\node{i,m'_j+1}$. Thus, as $K_j\exists\mathtt{w}$ at $m'_j$, we obtain that $K_i\exists\mathtt{w}$ at $m'_j+1$ in this case as well.
\end{itemize}
As $\mathtt{w}<\mathtt{v}$ and as $\mathtt{v} = \minval{i,m'_i}$, we have $\lnot K_i\exists\mathtt{w}$ at $m'_i$. Thus, we obtain that $m'_i < m'_j+1$, and therefore
$m'_j \ge m'_i$ in this case as well, as required. We have thus shown that we always have $m'_j \ge m'_i$.

As $\node{i,m'_i}$ does not have hidden capacity $k$, there exists $0\le\ell\le m'_i$ s.t.\ no more than $k-1$ processes at time $\ell$ are hidden from~$\node{i,m'_i}$.
As $m'_i \ge \ell$, we have $m'_j \ge m'_i \ge \ell$.
Let $H$ be the set of all processes seen at $\ell$ by $\node{j,m'_j}$. (Note
that if $m'_j = \ell$, then $H = \set{j}$.)
Since $m'_j \ge \ell$,
we have $\knownvals{j,m'_j} = \bigcup_{h \in H} \knownvals{h,\ell}$. Thus, $\mathtt{w}=\minval{j,m'_j}=\min_{h \in H}\{\minval{h,\ell}\}$. Therefore,
$\mathtt{w} = \minval{h,\ell}$ for some $h \in H$. As $\lnot K_i \exists\mathtt{w}$ at $m'_i$, we thus have that $\node{h,\ell}$ is not seen
by $\node{i,m'_i}$. As $\node{h,\ell}$ is seen by $\node{j,m'_j}$, $h$ does not fail
before $\ell$, and thus $\node{h,\ell}$ is hidden from $\node{i,m'_i}$.
To conclude, we have shown that
\[w \in \bigl\{ \minval{h,\ell} \mid
\mbox{$\node{h,\ell}$ is hidden from $\node{i,m'_i}$} \bigr\}.\]
As there are at most $k-1$ processes hidden at $\ell$ from $\node{i,m'_i}$,
we conclude that no more than $k-1$ distinct values lower than $\mathtt{v}$ are
decided upon, and the proof is complete.
\end{proof}

\section{Last-Decider Unbeatability}
\label{sec-last-decider}

We first formally define last-decider unbeatability.

\begin{definition}[Last-Decider Domination and Unbeatability]
\leavevmode
\begin{itemize}
\item
A decision protocol $Q$ \defemph{last-decider dominates} a protocol~$P$ in~$\gamma$, denoted by $Q\boldsymbol{\overset{\smash{l.d.}}{\dom}_\gamma} P$ if, for all adversaries $\alpha$, if $i$ the last decision in~$P[\alpha]$ is at time $m_i$, then all decisions in $Q[\alpha]$ are taken before or at $m_i$. Moreover, we say that $Q$  \defemph{strictly last-decider dominates} $P$
if $Q\overset{\smash{l.d.}}{\dom}_\gamma P$ and  $P\!\!\boldsymbol{\not}\!\!\!\overset{\smash{l.d.}}{\dom}_\gamma Q$. I.e., if for some $\alpha\in\gamma$ the last decision in $Q[\alpha]$ is {\em strictly before} the last decision in $P[\alpha]$.
\item
A protocol $P$ is a \defemph{last-decider unbeatable} solution to a decision task~$S$ in a context~$\gamma$ if $P$ solves~$S$ in~$\gamma$ and no protocol $Q$ solving~$S$ in~$\gamma$ strictly last-decider dominates~$P$.
\end{itemize}
\end{definition}

\begin{remark}
\leavevmode
\begin{itemize}
\item
If $Q\boldsymbol{\dom_\gamma} P$, then $Q\boldsymbol{\overset{\smash{l.d.}}{\dom}_\gamma} P$. (But not the other way around.)
\item
None of the above forms of strict domination implies the other.
\item
None of the above forms of unbeatability implies the other.
\end{itemize}
\end{remark}

Last-decider domination does not imply domination in the sense of the rest of this paper (on which our proof is based).
Nonetheless, the specific property of protocols dominating $\OptMink$ that we use to prove that these protocols are unbeatable, holds also for protocols that only last-decider dominate these protocols.

\begin{lemma}\label{last-dom-sufficient}
Let $Q\overset{\smash{l.d.}}{\dom}\OptMink$ satisfy \Decision. If $i$ is low at $m$ in a run $r\!=\!Q[\alpha]$ of $Q$, then $i$ decides in $r$ no later than at $m$.
\end{lemma}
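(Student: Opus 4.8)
The plan is to argue by contradiction via a ``clean extension'' of the adversary. Suppose $\node{i,m}$ is low in $r=Q[\alpha]$ but $i$ does not decide by time~$m$ in~$r$. Define an adversary~$\beta$ that agrees with~$\alpha$ on rounds $1,\ldots,m$, except that process~$i$ is modified so as never to crash, and in which there are no crashes and full message delivery in every round after~$m$. Since $Q$ is a full-information protocol and the only change to rounds $\le m$ (process~$i$'s own round-$m$ sending) does not touch the causal past of $\node{i,m}$, process~$i$'s view at time~$m$ --- and hence its decision status up to and including time~$m$ --- is identical in~$r$ and in $Q[\beta]$. Thus $i$ is still low and still undecided at~$m$ in $Q[\beta]$. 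As $i$ never crashes in~$\beta$ it is correct, so by \Decision\ it must eventually decide in $Q[\beta]$, necessarily at some time $m_i\ge m+1$.

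Next I would bound decision times in $\OptMink[\beta]$. The node $\node{i,m}$ is low in $\OptMink[\beta]$ as well (same view), so $i$ decides there by time~$m$. After the single clean round $m+1$, every still-active process has received $i$'s message, hence has $i$'s low value in its view and is low at time~$m+1$; as no process crashes after round~$m$, every process decides in $\OptMink[\beta]$ by time $m+1$ (and every faulty process decides, if ever, even earlier). Hence the last decision in $\OptMink[\beta]$ occurs at time at most $m+1$.

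If $m_i\ge m+2$, then $Q[\beta]$ contains a decision strictly after the last decision of $\OptMink[\beta]$, contradicting the hypothesis that $Q$ last-decider dominates $\OptMink$. In fact, whenever every process of~$\alpha$ is, by time~$m$, low or of hidden capacity $<k$ (equivalently, would have decided by time~$m$ under $\OptMink$), the same holds in~$\beta$, so the last decision in $\OptMink[\beta]$ is at time $\le m$, forcing $m_i\le m$ --- the desired contradiction. The main obstacle is the tight remaining case, in which $i$ decides at exactly $m+1$ in $Q[\beta]$ while some process $j\ne i$ is high with $\HC{j,m}\ge k$, so that $\OptMink[\beta]$ also has its last decision at exactly $m+1$ and no contradiction is immediate. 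To handle it I would replace the fully clean round $m+1$ (and, if necessary, a few subsequent rounds) by rounds in which the $k$ disjoint hidden paths witnessing $\HC{j,m}\ge k$ are continued one step at a time (as in \cref{exist-hidden-channels}), while $i$'s own low value is relayed along a private hidden channel, so that in the resulting adversary $\OptMink$ still terminates exactly one round after the last continued round whereas $Q$, constrained by \Decision\ and last-decider domination, is forced to keep the correct process~$i$ undecided at least that long --- restoring a strict gap between the two last-decision times. Carrying the whole argument out by induction on~$m$, invoking the inductive hypothesis to dispose of the easy case in which $i$ was already low before~$m$, is useful here. The delicate points throughout are checking that no modification alters $\node{i,m}$'s view, and tracking precisely which nodes remain hidden after the continued rounds.
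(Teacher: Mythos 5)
Your construction of $\beta$ (keep rounds $1,\ldots,m$, make $i$ correct, make everything crash-free and fully connected from round $m+1$ on) only yields that every process is low at time $m+1$ in $\OptMink[\beta]$, so the last decision of $\OptMink[\beta]$ is bounded by $m+1$, and last-decider domination plus \Decision{} then gives only $m_i\le m+1$. You correctly identify this one-round slack as the crux, but the fix you sketch does not close it. Continuing the hidden paths witnessing $\HC{j,m}\ge k$ for additional rounds pushes the last decision of $\OptMink$ in the modified run \emph{later}, which \emph{weakens} the constraint that last-decider domination places on $Q$; and nothing in the hypotheses forces $Q$ to ``keep the correct process~$i$ undecided at least that long'' --- a protocol $Q$ that has $i$ decide at $m+1$ in all of these runs violates neither \Decision{} nor last-decider domination of $\OptMink$, so no ``strict gap'' is restored. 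The induction on $m$ does not rescue this either: the same one-round slack reappears in the base case and whenever $i$ first becomes low exactly at time $m$.

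The paper closes the gap by choosing the modification one round earlier, at the \emph{source} of $i$'s low value rather than after time $m$. Since $i$ is low at $m$, there is a process $j$ (possibly $j=i$) that is low at $m-1$ and such that $\node{j,m-1}$ is seen by $\node{i,m}$. One then takes an adversary $\beta$ with $r'_i(m)=r_i(m)$ in which $j$ (and $i$) never fail; because $j$'s round-$m$ message is then delivered to \emph{all} active processes, every process active at time $m$ is already low \emph{at time $m$}, so all decisions in $\OptMink[\beta]$ occur by time $m$. Last-decider domination then bounds the last decision of $Q[\beta]$ by $m$, \Decision{} forces the correct process $i$ to decide, and indistinguishability transfers the decision back to $r$. (For $m=0$ the same effect is achieved by setting all other initial values to a low value.) Your argument needs this ``keep the informant alive'' step --- or an equivalent device making everyone low at time $m$ rather than $m+1$ --- to go through.
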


The main idea in the proof of \cref{last-dom-sufficient} is to show that $i$ considers it possible that
all other active processes also know the fact stated in that part,
and so they must all decide by the current time in the corresponding run of the dominated protocol. Hence, the last decision decision in that run is made in the current time; thus, by last-decider domination, $i$ must decide.

\begin{proof}[Proof of \cref{last-dom-sufficient}]
If $m\!=\!0$, then there exists a run $r'\!=\!Q[\beta]$ of $Q$, s.t.~~\emph{i)} $r'_i(0)\!=\!r_i(0)$,~~\emph{ii)} in $r'$ all initial values are $0$, and~~\emph{iii)} $i$ never fails in $r'$. Hence, in $\OptMink[\beta]$ all decisions
are taken at time $m\!=\!0$, and therefore so is the last decision. Therefore, the last decision in $r'$ must be taken at time $0$. As $i$ never fails in $r'$, by \Decision\ it must decide at some
point during this run, and therefore must decide at $0$ in $r'$. As $r_i(0)\!=\!r'_i(0)$, $i$ decides at $0$ in $r$ as well, as required.

If $m\!>\!0$, then there exists a process $j$ that is low at $m-1$ in $r$ and $\node{j,m-1}$ is seen by $\node{i,m}$. Thus, there exists a run $r'\!=\!Q[\beta]$ of $Q$,
s.t.~~\emph{i)} $r'_i(m)\!=\!r_i(m)$, and~~\emph{ii)} $i$ and $j$ never fail in $r'$. Thus, all processes that are active at $m$ in $r'$ see $\node{j,m-1}$ and are therefore low.
Hence, in $\OptMink[\beta]$ all decisions are taken by time $m$, and therefore so is the last decision. Therefore, the last decision in $r'$ must be taken no later than at time $m$.
As $i$ never fails in $r'$, by \Decision\ it must decide at some point during this run, and therefore must decide by $m$ in $r'$. As $r_i(m)\!=\!r'_i(m)$, $i$ decides by $m$ in $r$ as well, as required.
\end{proof}

As explained above, \cref{thm:last-decider} follows from \cref{last-dom-sufficient}, and from the proof of \cref{thm:OptMink}.

\section{Efficient Implementation}\label{sec-com-eff}

Throughout this paper we have assumed that processes follow the \fip, and did not concern ourselves with implementation details. Notice that the only information that processes use in these protocols concerns the values that processes have seen and the failures that they observe. This determines which nodes are seen by, known to be crashed by, or hidden from a node $\node{i,m}$. We now show that a more efficient protocol exists in which the processes obtain the same information about these three aspects of the run as in the \fip.

\begin{lemma}
For each of the protocols $\OptMink$ and $\UOptMink$, there is a protocol with identical decision times for all adversaries, in which every process sends at most $O(n\log n)$ bits overall to each other process.
\end{lemma}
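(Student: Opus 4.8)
The plan is to replace the \fip\ by a protocol in which every process keeps an $O(n\log n)$-bit local state that records exactly what $\OptMink$ and $\UOptMink$ consult, and transmits only the incremental changes to it. For a process $i$ at time $m$, let the state consist of (i) its current minimal value $\mathit{Min}_i=\minval{i,m}$; (ii) for every process $j$, a pair $(a_j,c_j)$, where $a_j$ is the largest $\ell$ with $\node{j,\ell}$ seen by $\node{i,m}$ (and $a_j=-1$ if none) and $c_j$ is the smallest $\ell$ for which $\node{i,m}$ has proof that $j$ crashed before time $\ell$ (and $c_j=\infty$ if none); and (iii), only for $\UOptMink$, the values at time $m-1$ of $\mathit{Min}$ and of the predicate ``$i$ is low or $\HC{i,m-1}<k$''. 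Since both protocols decide within $\lfloor\nicefrac{\tee}{k}\rfloor+1\le n$ rounds (\cref{k-set-correct,u-k-solve}), and since we may assume $k<n$ ($k$-set consensus being trivial otherwise), each coordinate needs $O(\log n)$ bits, so the state is $O(n\log n)$ bits. I would first check that the state is \emph{faithful}: the nodes of $j$ seen by $\node{i,m}$ always form a prefix $\{0,\ldots,a_j\}$, those guaranteed crashed form a suffix $\{c_j,c_j+1,\ldots\}$, and hence the nodes of $j$ hidden from $\node{i,m}$ are exactly $\{a_j+1,\ldots,c_j-1\}$; and that it is incrementally maintainable: on receiving its round-$(m+1)$ messages, $i$ merges the received entries by a coordinatewise $\max$ of the $a$'s and $\min$ of the $c$'s, sets $a_j\gets\max(a_j,m)$ for each $j$ it heard from and $c_j\gets\min(c_j,m+1)$ for each $j$ it did not, and lowers $\mathit{Min}_i$ to the smallest reported value.

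Second, I would verify \emph{sufficiency}. From the state, $\minval{i,m}=\mathit{Min}_i$; for each $\ell\le m$ the number of processes hidden at time $\ell$ is $\bigl|\{j:a_j<\ell<c_j\}\bigr|$, so $\HC{i,m}=\min_{\ell\le m}\bigl|\{j:a_j<\ell<c_j\}\bigr|$, and the predicates ``$i$ is low'' and ``$\HC{i,m}<k$'' follow; the number $d$ of failures $i$ knows of is $\bigl|\{j:c_j<\infty\}\bigr|$. This already determines $\OptMink$'s behaviour. For $\UOptMink$'s persistence test one also needs to know which nodes $\node{j,m-1}$ seen by $\node{i,m}$ have seen $v:=\minval{i,m}$; but such a $j$ is active at $m-1$ and seen by $\node{i,m}$, so $\knownvals{j,m-1}\subseteq\knownvals{i,m}$, whence $v\in\knownvals{j,m-1}$ iff $\minval{j,m-1}=v$. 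It therefore suffices that every message of $j$ also carry $\mathit{Min}_j$ --- $O(\log k)$ extra bits per message, hence $O(n\log n)$ per ordered pair over the whole run. Consequently the implemented protocol consults, at each node, precisely the quantities that determine the \fip's decisions, so decision times agree for every adversary.

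The third step, which I expect to be the crux, is the communication bound. The key point is that a process $j$ that has not yet crashed sends to everyone in every round, so every process recomputes $a_j=m-1$, $c_j=\infty$ by itself and no one ever needs to be told anything about $j$; accordingly, $i$ transmits an entry for $j$ only once $j$ has left this ``plainly alive'' state from $i$'s point of view. Once $j$ crashes in a round $r$, I claim that $a_j$ (as held by any process) changes at most once more after that process's first missed message --- from $r-2$ to $r-1$ --- and that $c_j$ changes at most twice more, settling at $r+1$ and possibly then $r$; the reason is that the only way to extend the traced prefix of $j$ past round $r-2$, or to sharpen the crash bound below round $r+1$, is to hear transitively from a process to which $j$ did, resp.\ did not, send in its single crash round $r$. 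Hence $i$ transmits $O(1)$ updates about each other process $j$, each of $O(\log n)$ bits, for a total of $O(n\log n)$ bits to each other process (the per-round $\mathit{Min}$ fields contribute another $O(n\log n)$ bits). Finally, because each update is forwarded in the very next round to all still-alive processes --- exactly as the corresponding information propagates in the \fip --- every node learns each relevant fact at the same time it would in the \fip, which is what makes the decision times identical. The remaining work is the routine but delicate verification of the ``$O(1)$ updates'' claim and of the fact that the merge rule reproduces, at every node and time, precisely the seen / guaranteed-crashed / hidden classification of the \fip.
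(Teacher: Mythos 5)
Your proposal is correct and follows essentially the same route as the paper's (sketched) proof: replace the \fip\ by incremental reporting of per-process discovery events (value/minimum seen, earliest known failure round), observe that each such fact changes only $O(1)$ times per process and costs $O(\log n)$ bits, and add constant-size keep-alive messages, so each ordered pair of processes exchanges $O(n\log n)$ bits overall. You are somewhat more explicit than the paper about the state representation, the reconstruction of the seen/crashed/hidden classification and of the persistence test for $\UOptMink$, but the underlying counting argument is the same.
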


\begin{proof}[Proof (Sketch)]
Moses and Tuttle in \cite{MT} show how to implement full-information protocols in the crash failure model with linear-size messages. In our case, a further improvement is possible, since decisions in all of the protocols depend only on the identity of hidden nodes and on the vector of initial values. In a straightforward implementation, we can have a process $i$ report  ``{\tt value}$(j) = v$'' once for every $j$ whose initial value it discovers, and ``{\tt failed\_at}$(j) = \ell$'' once where~$\ell$ is the earliest failure round it knows for $j$. In addition, it should send an ``{\tt I'm\_alive}'' message in every round in which it has nothing to report. Process~$i$ can send at most one {\tt value} message and two
{\tt failed\_at} messages for every $j$. Since {\tt I'm\_alive} is a constant-size message sent fewer than~$n$ times, and since encoding~$j$'s ID requires $\log n$ bits, a process~$i$ sends a total of $O(n \log n)$ bits overall.
\end{proof}

We also note that it is straightforward to compute the hidden capacity of a node $\node{i,m}$
in a run with adversary~$\alpha$ based on the communication graph~$\CG_\alpha$. The hidden capacity
of $\node{i,m}$ can also be very efficiently calculated from the hidden capacity of $\node{i,m\!-\!1}$ using auxiliary
data calculated during the calculation of the latter.

\end{document}